\documentclass[10pt]{article}
\usepackage{style}
\usepackage{subcaption}
\usepackage{qcircuit}
\pdfoutput=1

\AtEveryBibitem{ %
    \clearfield{day}
    \clearfield{month}
    \clearfield{series}
    \clearfield{venue}
    \clearname{editor}
    \clearlist{publisher}
    \clearlist{location} %
    \clearfield{venue}
    \clearfield{issn}
    \clearfield{isbn}
    \clearfield{urldate}
    \clearfield{eventdate}
    \clearfield{pages}
    \clearfield{number}
    \clearfield{volume}
}

\newcommand{\dec}{\mathrm{Dec}}
\newcommand{\supp}{\mathrm{supp}}
\newcommand{\punc}{\mathrm{punc}}
\newcommand{\cl}{\mathrm{cl}}
\newcommand{\Span}{\operatorname{span}}
\newcommand{\diag}{\mathrm{diag}}

\newcommand{\dist}{\mathrm{D}}
\newcommand{\col}{\mathrm{col}}
\newcommand{\im}{\mathrm{im}}
\newcommand{\rank}{\mathrm{rk}}
\newcommand{\RM}{\mathrm{RM}}

\newcommand{\eval}{\mathrm{eval}}

\allowdisplaybreaks

\newcommand{\NEXP}{\mathsf{NEXP}} 
\newcommand{\MIP}{\mathsf{MIP}} 
\newcommand{\QMA}{\mathsf{QMA}}
\newcommand{\be}{\boldsymbol{e}}

\title{No low-degree tests for quantum states}
\author{Omar Alrabiah\thanks{UC Berkeley. \texttt{\{oalrabiah,jswright\}@berkeley.edu}} \and Srinivasan Arunachalam\thanks{IBM Quantum. \texttt{srinivasan.arunachalam@ibm.com}} \and Sabee Grewal\thanks{UT Austin. \texttt{sabee@cs.utexas.edu}} 
\and John Wright\footnotemark[1]}

\date{}

\begin{document}

\maketitle

\begin{abstract}
We study the problem of testing \emph{low-degree phase states}, namely $m$-qudit quantum states of the form $q^{-m/2}\sum_{x \in \F_q^m} \omega^{f(x)} \ket{x}$, where $f$ is a degree-$d$ polynomial. 
In contrast to the classical setting, where low-degree polynomials admit highly efficient classical testers, it is not known whether analogous quantum tests exist.
We show that no such quantum low-degree test exists:
any tester requires $\Omega(\binom{\lfloor m/2\rfloor}{\lfloor (d-1)/2 \rfloor})$ copies to determine whether a given state is a degree-$d$ phase state or is far from every such state. 
Our results follow from a general framework that relates quantum testing of \emph{codeword phase states} to classical decoding properties of the dual code, which allows us to leverage known bounds on the tolerance of high-rate Reed--Muller codes to random errors.
\end{abstract}

\hypersetup{linktocpage}
\setcounter{tocdepth}{2}
{\tableofcontents}
\setlength{\parskip}{4pt}%
\newpage
\section{Introduction}

One of the recurring themes throughout theoretical computer science is the surprising power of low-degree polynomials.
They form the basis of Reed--Solomon and Reed--Muller codes, two of the most widely studied error-correcting codes in the field of coding theory.
They also lie at the heart of the original ``algebraic'' proof of the PCP theorem~\cite{AS98,ALM+98},
one of the central results in modern-day complexity theory.
One of their key properties is that they are \emph{locally testable}~\cite{alon2005testing,jutla2004testing,kaufman2006testing,bhattacharyya2010optimal},
meaning that given a function $f:\F_q^m \rightarrow \F_q$,
one can test whether $f$ is a degree-$d$ polynomial by querying it on an extremely small number of points.
Indeed, this ``low-degree test'' is quite natural: simply choose $d+2$ random collinear points $x_1, \ldots, x_{d+2} \in \F_q^m$ and check that the values of $f$ on these points $f(x_1), \ldots, f(x_{d+2})$ are consistent with some degree-$d$ polynomial.

In this work, we study quantum analogues of low-degree polynomials,
i.e., states which encode low-degree polynomials in their phases.
To define these, let $q$ be a prime number,
    and set $\omega = e^{2 \pi i/q}$.
    Let $f: \F_q^m \rightarrow \F_q$ be a degree-$d$ polynomial.
    Then the corresponding \emph{degree-$d$ phase state} is given by
    \begin{equation*}
        \ket{\psi_f} \coloneq \frac{1}{\sqrt{q^m}} \sum_{x \in \F_q^m} \omega^{f(x)} \cdot \ket{x}.
    \end{equation*}
Phase states are ubiquitous throughout quantum information, with applications in measurement-based quantum computation~\cite{hein2004multiparty},  algorithms for the hidden subgroup problem~\cite{decker2007efficient,DBLP:journals/cjtcs/BaconCD06}, cryptography~\cite{ji-pseudorandom-states2018,brakerski10.1007/978-3-030-36030-6_10}, quantum advantage proposals~\cite{montanaro2016quantum,bremner2016average}, and complexity theory~\cite{DBLP:conf/coco/IraniNNRY22}.

The problem of \emph{learning} low-degree phase states is one of the oldest topics in quantum information.
For example, the classic Bernstein--Vazirani algorithm~\cite{BV97} allows us to learn degree-1 phase states with just a single copy of the state.
Later, R\"{o}tteler~\cite{Rot09} gave an algorithm for learning degree-2 phase states over $\F_2$ with just $O(m)$ copies of the state.
More recently, Arunachalam, Bravyi, Dutt, and Yoder~\cite{arunachalam2022phase}  characterized the number of copies needed to learn a degree-$d$ phase state over $\F_2$, showing that it is exactly $\Theta(m^{d-1})$.
(We note that the learnability over $\F_q$ remains~open.)

We are interested in the related problem of \emph{testing} low-degree phase states.
In this setting, one is given copies of a quantum state $\ket{\psi}$ and asked to determine whether it is (YES) a degree-$d$ phase state or (NO) $\epsilon$-far from all degree-$d$ phase states.
Learning is typically more difficult than testing, and the above learning algorithms immediately imply that degree-$d$ phase states over $\F_2$ are testable with $O(m^{d-1}/\epsilon^2)$ copies.
Over larger fields $\F_q$,
note that because there are at most $a = m^d$ monomials over $\F_q^m$ of degree at most $d$,
the number of degree-$d$ polynomials $f$
(and therefore the number of degree-$d$ phase states $\ket{\psi_f}$)
is at most $q^{a}$.
As a result, a theorem of Harrow, Lin, and Montanaro~\cite{HLM17}  implies that the set of degree-$d$ phase states is testable with $O(\mathrm{log}(q^{a})/\epsilon^2) = O(m^d \log(q)/\epsilon^2)$ copies (see~\cite{HKP20} for an alternative proof of this fact).
Note that these bounds are all exponential in $d$,
which is significantly worse than the $\poly(d)$ bound the classical low-degree test suggests that we might hope for.
This motivates the following question (explicitly raised in~\cite{anshu2023survey,caro2024testing}): 
\begin{center}
    \textit{
    Is there a quantum analogue of the low-degree test\\
    which allows us to test low-degree phase states with extreme efficiency?
    }
\end{center}
Aside from being a natural  question in its own right,
we will see below that a resolution of this question in the positive direction
would also have dramatic consequences for the study of quantum complexity theory.
For this application, we will see that it would suffice for degree-$d$ phase states to be testable with $\poly(m, d, q)$ copies.
There is very little evidence in the literature pointing in either direction on this question,
but there is one piece of evidence which leans slightly positive: stabilizer states, which are closely related to degree-2 phase states, can be tested with only 6 copies~\cite{GNW21}.

\paragraph{Quantum complexity theory.}
One of our primary motivations comes from the study of the quantum complexity class $\QMA(2)$.
$\QMA(2)$ is the set of all languages that can be decided by a verifier who has access to two proof states $\ket{\pi_1}$ and $\ket{\pi_2}$ that are promised to be unentangled.
The study of $\QMA(2)$ is deeply tied to our understanding of entanglement and separability,
and as a result, determining its power is one of the major open problems in quantum complexity theory.
To date, all that is known about $\QMA(2)$ and its relationships to traditional complexity classes is that it contains $\QMA$ and is contained in $\NEXP$, both of which are essentially trivial containments, and there is very little evidence suggesting where exactly $\QMA(2)$ should fall in between these two extremes.

The recent complexity theoretic result $\MIP^* = \mathsf{RE}$~\cite{JNV+20}, however,
suggests that strange quantum resources can sometimes yield surprising amounts of computational power. 
Might, therefore, $\QMA(2)$ actually be equal to $\NEXP$?
One piece of evidence that this is not the case comes from the work of Gharibian, Santha, Sikora, Sundaram, and Yirka~\cite{GSS+22}, who have shown that $\QMA(2) \subseteq \mathsf{Q\Sigma}_3 \subseteq \NEXP$, where $\mathsf{Q\Sigma}_3$ is the third level of the ``quantum proof polynomial hierarchy'' that they define.
A proof that $\QMA(2) = \NEXP$ would then imply that $\QMA(2) = \mathsf{Q\Sigma}_3$, which would indicate a partial collapse in their polynomial hierarchy.
Given the relative recency of these complexity classes, however, it is still unclear how strong of a barrier this represents, and we believe that $\QMA(2) = \NEXP$ remains a tantalizing possibility.

Perhaps the most natural approach for showing $\QMA(2) = \NEXP$ is to begin with the PCP characterization of $\NEXP$,
which states that any language in $\NEXP$ can be decided by a polynomial-time classical verifier with oracle access to an exponentially long proof, which it queries at only polynomially many locations (in fact, $O(1)$ queries suffice, though we will not need this here).
More concretely, the canonical $\NEXP$-complete language is $\mathsf{Succinct}$-$\mathsf{3Sat}$,
in which one is given a $\mathsf{3Sat}$ formula $\varphi$ on $N = 2^n$ variables $x = (x_1, \ldots, x_N)$, represented succinctly as a $\poly(n)$-size Boolean circuit,
and the goal is to determine whether $\varphi$ is satisfiable.
Even if $\varphi$ is satisfiable,
a satisfying assignment $x$ is simply too long for the $\poly(n)$-time verifier to read,
and so instead of being provided $x$ as a proof, the proof the verifier is given is a function  $f:\F_q^m \rightarrow \F_q$ which should be a degree-$d$ encoding of $x$ for some ``small'' value of $d$.
Note that we will need to pick parameters so that the number of degree-$d$ polynomials $f$ is at least the number of assignments to the $N$ variables of $\varphi$ (otherwise, two different assignments will get mapped to the same encoding),
 so we will need $q^a \geq 2^N$, where $a = m^d$;
a typical setting of parameters in this literature takes
\begin{equation}\label{eq:parameters}
    m = \Theta\Big(\frac{n}{\log(n)}\Big),\quad
    q = \poly(n),\quad
    d = \Theta\Big(\frac{n^2}{\log(n)}\Big).
\end{equation}
Then the verifier, given access to the proof $f$,
will run a low-degree tester to query $f$ on $d+2 = \poly(n)$ locations to check that it is indeed a degree-$d$ polynomial.
If it passes this test, the verifier will perform some further checks to certify that it encodes a satisfying assignment of $\varphi$.

To port this over to the $\QMA(2)$ setting,
rather than providing the quantum verifier access to the exponentially-long proof $f$,
we can imagine providing it access to a proof state $\ket{\psi}$, which is meant to be equal to $\ket{\psi_f}$ for some degree-$d$ polynomial $f: \F_q^m \rightarrow \F_q$.
Note that this state can be stored on $m \log(q) = \poly(n)$ qubits, and so it is a compact representation of an exponentially-long proof.
Using the characterization $\QMA(2) = \QMA(\mathrm{poly}(n))$ due to Harrow and Montanaro~\cite{HM13},
which states that two unentangled proofs are equivalent in power to polynomially many proofs,
we can provide the verifier with $\poly(n)$ copies of~$\ket{\psi}$ rather than just two.
To simulate the PCP verifier,
the verifier should begin by using these $\poly(n)$ copies to test if $\ket{\psi}$ is a degree-$d$ phase state,
and whether it can do this is exactly the question of whether there exists a hyperefficient low-degree test for phase states.
Note that this application suggests how efficient we would like our test to be:
since we want a $\poly(n)$-copy tester,
and $m$, $q$, and $d$ are all $\poly(n)$,
this suggests that our tester should consume a number of copies polynomial in $m$, $q$, and $d$.

Performing the low-degree test is just the first step of the PCP protocol for $\NEXP$,
and to show $\QMA(2) = \NEXP$ we would need to simulate the remaining steps in $\QMA(2)$ as well.
However, this turns out not to be an issue after all: we have been able to show that \emph{if} there exists a tester for degree-$d$ phase states that uses $\poly(m, q, d)$ copies, and \emph{if} this tester is also computationally efficient,
then $\QMA(2)$ does indeed equal $\NEXP$.
We omit the proof here, though we note that it follows similar proofs in the literature, such as \cite{Raz05-qipcp,aaronson2024pdqmadqma,grewal2025unentanglementpostmeasurementbranchingquantum}.
We believe that this gives strong motivation for studying the problem of testing low-degree phase states.

\subsection{Our results}

For our main result, we show that there are no hyperefficient low-degree tests for quantum states, for a wide range of parameters.

\begin{theorem}[No low-degree tests for quantum states]\label{thm:main}
Given copies of an $m$-qudit state $\ket{\psi}$, any algorithm that distinguishes whether it's a degree-$d$ phase state or $\eps$-far from all such states requires
\[
\Omega\left(\binom{\lfloor m/2 \rfloor}{\lfloor (d-1)/2\rfloor} \right) 
\]
copies. For $d=2$, $\Omega(m)$ copies are necessary.
\end{theorem}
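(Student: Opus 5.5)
We will prove the lower bound by an indistinguishability argument. We compare two ensembles. The YES ensemble is a uniformly random degree-$d$ phase state $\ket{\psi_f}$ with $f\in\RM_q(m,d)$, and the corresponding $t$-copy mixed state is $\rho_{\mathrm{yes}}=\E_{f}\,\ketbra{\psi_f}{\psi_f}^{\otimes t}$. The NO ensemble is a uniformly random codeword phase state for a larger linear code $C\supseteq \RM_q(m,d)$, for example $C=\RM_q(m,d')$ with $d'>d$, or even $C=\F_q^{q^m}$ (Haar-like random phase states). We argue that with high probability a sample from the NO ensemble is $\eps$-far from every degree-$d$ phase state, and that $\rho_{\mathrm{yes}}$ and $\rho_{\mathrm{no}}$ are close in trace distance unless $t$ is large.

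\textbf{Step 1: expanding the moments.} Writing $\ket{\psi_f}^{\otimes t}=q^{-mt/2}\sum_{x_1,\dots,x_t}\omega^{\sum_i f(x_i)}\ket{x_1\cdots x_t}$, averaging over a uniformly random codeword $f$ of a linear code $C$ kills a matrix entry $\bra{x}\rho\ket{y}$ unless the multiset difference, viewed as a vector $v\in\F_q^{q^m}$ with $v_z=\#\{i:x_i=z\}-\#\{i:y_i=z\}$, lies in the dual code $C^\perp$. Hence $\rho_C$ is $q^{-mt}$ times the indicator of $\{(x,y): v(x,y)\in C^\perp\}$. Since $\RM(m,d)^\perp=\RM(m,m(q-1)-d-1)$, the two ensembles differ exactly on pairs whose difference vector is a nonzero dual codeword of weight at most $2t$. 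Such pairs exist only if $2t$ is at least the dual distance, but distance alone only gives about $2^{d+1}$, which is not the binomial we want. So we must bound the total mass of these pairs rather than just their existence.

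\textbf{Step 2: reduction to random-error decoding.} Both states are symmetric under permuting copies and nearly supported on distinct tuples, so each is essentially a uniform mixture over $t$-subsets $S$ paired with $T$ such that $1_S-1_T\in C^\perp$. The trace distance between $\rho_{\mathrm{yes}}$ and the maximally-phase-random state is then controlled by the probability that a uniformly random set $S$ of size $t$ admits a different $T$ with $1_S-1_T\in \RM(m,d)^\perp$. Equivalently, with $E=1_S$ regarded as a random error pattern, some other low-weight error has the same syndrome with respect to the parity-check matrix of $\RM(m,d)^\perp$, which is the generator of $\RM(m,d)$. This says that $\RM(m,d)^\perp$, a high-rate Reed--Muller code, fails to uniquely decode random errors of weight $t$ (erasures/errors in the sense of the known results). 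We will invoke the known bounds on the tolerance of high-rate RM codes to random errors, in the line of Abbe--Shpilka--Wigderson and follow-ups. Specifically, we use the fact that $\RM(m, m-r-1)$-type codes uniquely decode random error patterns only up to about $\binom{m}{\le r/2}$ errors, and that beyond this, collisions occur with constant probability. Reading this in the contrapositive direction: for $t\ll \binom{\lfloor m/2\rfloor}{\lfloor (d-1)/2\rfloor}$, collisions are rare, so $\rho_{\mathrm{yes}}\approx\rho_{\mathrm{no}}$. The restriction to $m/2$ and $(d-1)/2$ comes from embedding a binary-style RM argument into a subcube and splitting degree between two halves. We expect to realize it by restricting $f$ to polynomials in $\lfloor m/2\rfloor$ variables times a structured factor, and using the $\F_2$-RM random-error results there.

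\textbf{Step 3: farness and $d=2$.} We show that a random phase state over $C$ is $\eps$-far from every degree-$d$ phase state by a counting/union bound. There are at most $q^{m^d}$ targets, and each has exponentially small fidelity with a random phase state with overwhelming probability. For $d=2$ the binomial is trivial, so we handle it separately. There we use the dual-distance/collision argument with pairs: random $t$-sets of points generate a nonzero element of $\RM(m,2)^\perp$ only once $t=\Omega(m)$, which is a direct linear-algebra count of when random points fail to be in "general position" for quadratics. The main obstacle is Step 2: converting the exact form of the moment operator into a clean trace-distance bound that requires only that random errors be uniquely decodable (not worst-case), and matching the parameters of the available RM random-error theorems to obtain exactly the $\binom{\lfloor m/2\rfloor}{\lfloor (d-1)/2\rfloor}$ bound.
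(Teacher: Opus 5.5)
Your degree-$2$ argument is false, and this is the most serious gap. Linear independence of the random points (which does persist until $t\approx m$) does not make syndromes for $\RM_q[m,2]^\perp$ unique.

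Over $\F_2$, take any $x_1,\dots,x_4\in\F_2^m$, set $s=x_1+\dots+x_4$ and $y_j=s+x_j$. Check each part of a degree-$\le 2$ polynomial: constants give $4\equiv 4$; linear forms give $\sum_j\ell(y_j)=5\ell(s)=\ell(s)$; for products $x_ax_b$ the extra terms $4s_as_b+2s_as_b$ vanish mod $2$. Hence $\sum_j f(y_j)=\sum_j f(x_j)$ for every $f$ of degree $\le 2$, and the two multisets are disjoint when the $x_j$ are independent.

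So at $t=4$ almost every $\be_4$ shares its syndrome with a different, equally likely pattern, i.e.\ $\max_x\calE_s(x)\le 1/2$. Since $\nu(\calD)\ge 1-\sum_x\calD(x)^2$ (test against $\ket{\sqrt{\calD}}$), the trace distance between $\rho^4_{\RM_2[m,2]}$ and $\rho^4_{\rm all}$ is at least $1/2-o(1)$. For other $q$, twins $Y=XA$, with $A$ a non-permutation orthogonal matrix satisfying $A1_t=1_t$, do the same at constant $t$.

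Comparing random quadratic phase states with random phase states therefore cannot give more than $O(1)$ copies; the paper notes exactly this. Enlarging the NO code to $\RM_q[m,d']$ does not help, since cubics already separate these twins.

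The missing idea is a NO ensemble that agrees with the YES ensemble on full-rank tuples: degree-$2$ phase states truncated to a random hyperplane $H_v$. These have overlap at most $1/\sqrt q$ with every degree-$2$ phase state. Because $XX^\top=YY^\top$ with $\rank X=t$ forces $\col X=\col Y$, the hyperplane indicator on the full-rank block is just $\mathbf{1}\{v\perp\col X\}$. This gives $\Pi_{=t}\rho_1\Pi_{=t}=\frac{q^m-q^t}{q^m-1}\Pi_{=t}\rho_0\Pi_{=t}$, while the rank-deficient block has mass $O(q^{t-m+1})$.

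For $d\ge 3$ your framework is the paper's. Your Step 2 intuition is right once you use the syndrome-block structure $\rho_C^t=\sum_s\Pi_s\ketbra{+}{+}^{\otimes t}\Pi_s$. In each block $\rho_C^t$ is rank one and $\rho_{\rm all}^t$ is its dephasing along the individual error patterns. Hence $\frac12\|\rho_C^t-\rho_{\rm all}^t\|_1=\E_s[\nu(\calE_s)]$, which is at most the probability that $\be_t$ is not the unique weight-$\le t$ pattern with its syndrome. (The paper instead uses $\nu(\calD)\le\sqrt{1-\sum_x\calD(x)^2}$ and Jensen.)

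The real gap here is the decoding input for $q>2$. The errors sit at uniform points of $\F_q^m$, so binary results restricted to a subcube do not transfer. Shrinking the YES family only enlarges the dual code, which makes decoding harder. Nor do $\lfloor m/2\rfloor$ and $\lfloor (d-1)/2\rfloor$ come from splitting variables. What is needed is a $q$-ary Abbe--Shpilka--Wigderson theorem with two parts:
\begin{itemize}
\item For $t\le D_{d_0,q}(r)$, i.i.d.\ uniform points have linearly independent degree-$\le d_0$ evaluation vectors with probability $\ge 1-tq^{r-m}$. This holds because any set of rank $<D_{d_0,q}(r)$ has closure of size $\le q^r$, via the Heijnen--Pellikaan value $d_a(\RM_q[m,d_0])=q^m-q^r$ for $a=\dim\RM_q[m,d_0]-D_{d_0,q}(r)$.
\item A bilinear-form argument using $f_if_j$ and $x_\ell f_if_j$, of degree $\le 2d_0+1\le d$, showing that independence forces unique decoding.
\end{itemize}
So $d_0=\lfloor(d-1)/2\rfloor$, and $r=\lfloor m/2\rfloor$ merely balances $t\le D_{d_0,q}(r)$ against $tq^{r-m}$.

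Finally, your farness step fails in part of the range. The $q^{m^d}$ targets overwhelm the $e^{-\Omega(q^m)}$ tail once $m^d\log q$ is of order $q^m$. Moreover, when $\RM_q[m,d]$ has rate $1-o(1)$, its covering radius (at most the codimension) is $o(q^m)$, so a random phase state is close to some degree-$d$ phase state. An example is $q=2$, $d=0.9m$, where the claimed bound is still exponential. The paper instead uses Haar states as NO instances, via $\mathrm{D}_{\rm tr}(\rho^t_{\rm all},\rho^t_{\rm Haar})\le t(t-1)/q^m$ and a L\'evy-lemma bound showing Haar states are far from all flat-amplitude states, independently of $|C|$.
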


The bounds in \cref{thm:main} hold both when the field size $q$ is fixed or when it grows with the number of variables $m$; in fact, our lower bounds become stronger when $q$ increases, and apply for all $q = \poly(m)$. In particular, \cref{thm:main} rules out any tester using $\poly(m,d,q)$ copies. Consequently, there is no efficient quantum-state low-degree test for the parameter regime specified in \cref{eq:parameters}, and hence the natural approach of simulating the $\NEXP$ PCP verifier within $\QMA(2)$ via such tests cannot succeed.

Our result highlights a distinct difference between the classical and quantum settings. 
In the classical setting, we are given query access to a low-degree function $f$, and can use this to access points which are related to each other, namely collinear points $x_1, \ldots, x_{d+2}$.
On the other hand, low-degree phase states are uniform superpositions over all points $x \in \F_q^m$, and $t$ copies of a phase state are $t$ independently uniform superpositions.
This means that it should be difficult to extract information from copies of the state that relies on highly correlated subsets of points.
Our main theorem confirms that this is the case, and shows that there is no ``quantum magic'' for getting around this barrier.

Our results suggest that the correct classical analogue to testing low-degree phase states might be the \emph{passive testing} scenario rather than the usual property testing scenario.
In the passive testing scenario,
rather than being allowed to query the unknown function $f : \F_q^m \rightarrow \F_q$ at any point of one's choosing,
one is given independent samples of the form $(\bx, f(\bx))$, where $\bx$ is drawn uniformly at random from $\F_q^m$.
The work of \cite{AHW16} showed that $\Theta(\binom{m}{\leq d})$ passive samples are both necessary and sufficient for passive learning and testing of degree-$d$ polynomials.
For small $d$, note that $\binom{m}{\leq d} \approx m^d$,
which matches our bound in the ``small $d$'' regime,
except with an additional factor of $m$.
(Note that we expect a factor $m$ savings since Bernstein-Vazirani allows us to learn and test degree-1 phase states with a single copy.)

\subsection{Technical overview}

Our \Cref{thm:main} actually follows as a special case of a much more general framework that allows us to prove lower bounds for the problem of testing phase states drawn from arbitrary linear codes.
We define these more general codeword phase states as follows.

\begin{definition}[Codeword states]
    Let $q$ be a prime number and set $\omega \coloneqq e^{2 \pi i/q}$.
    Given a vector $x \in \F_q^n$, the corresponding phase state is given by
    \begin{equation*}
        \ket{\psi_x} \coloneqq \frac{1}{\sqrt{n}} \sum_{i=1}^n \omega^{x_i}  \ket{i}.
    \end{equation*}
    Let $C \subseteq \F_q^n$ be a linear code.
    For a codeword $c \in C$, the corresponding  \emph{$C$-codeword phase state}~is~$\ket{\psi_c}$.
\end{definition}
\noindent
Degree-$d$ phase states arise as a special case by taking $C = \RM_q[m,d]$, the $m$-variate degree-$d$ Reed--Muller code over $\F_q$.  

For a general code $C$, we consider the problem of testing whether an unknown state $\ket{\psi}$ is (YES) a $C$-codeword phase state or (NO) is $\epsilon$-far from any $C$-codeword phase state.
Our framework establishes a connection between 
the testability of $C$-codeword states and the decodability properties of the dual code $C^{\perp}$.
In particular, we care about the decodability of the dual code in the following~sense.

\begin{definition}[Average-case decoding]
    Let $C \subseteq \F_q^n$ be a linear code of dimension $k$. Let $H \in \F_q^{(n-k) \times n}$ be its parity check matrix. Let $\be_t \in \F_q^n$ be a random ``error'' distributed as follows:
    \begin{equation*}
        \be_t
        = \delta_{\ba_1} + \cdots + \delta_{\ba_t},
    \end{equation*}
    where $\ba_1, \ldots, \ba_t$ are sampled independently and uniformly at random from $[n]$, and $\delta_i$ is the vector with $1$ in the $i$'th entry and $0$'s elsewhere.
    \emph{Average-case decoding with respect to the random error $\be_t$} refers to the following scenario involving a ``decoder'':
    \begin{enumerate}
        \item A random error $\be_t$ is sampled.
        \item The decoder is given the syndrome $\bs = H \cdot \be_t$.
        \item The decoder outputs a guess $\bg \in \F_q^n$ for the error.
        \item The decoder succeeds if $\bg = \be_t$.
    \end{enumerate}
    Define $p_{\mathrm{decode}}^t(C)$ to be the maximum probability of the decoder succeeding in this scenario.
\end{definition}

Note that there is a clear optimal procedure for the decoder to follow, which is to always guess the vector $\bg \in \F_q^n$ which satisfies $H \cdot \bg = \bs$ and has the highest probability of being equal to the error $\be_t$.
This is known as the \emph{maximum a posteriori (MAP)} decoder,
and it is easy to see that it has a successful decoding probability of
\begin{equation*}
    p_{\mathrm{decode}}^t(C) = \sum_{s \in \F_q^{n-k}} \max_{g:H \cdot g = s}\big\{\Pr[\be_t=g]\big\}.
\end{equation*}
Average-case decoding is more commonly described in the following equivalent manner: a codeword $c \in C$ has been corrupted by an error $\be_t$, resulting in the noisy codeword $\bc' = c + \be_t$, and the decoder is tasked with recovering $c$.
To do so, the decoder applies the parity check matrix to extract the~syndrome
\begin{equation*}
\bs
= H \cdot \bc'
= H \cdot c + H \cdot \be_t
= H \cdot \be_t,
\end{equation*}
and they use this to compute a guess $\bg$ for the most likely error which has occurred.
Given this, they will decode the noisy codeword $c'$ to the codeword $c' - \bg = c + \be_t - \bg$, which is correct if $\bg = \be_t$.

With this in place, we can state our framework for general codes. Our goal is to show lower bounds for testing $C$-codeword states. We will do so by showing lower bounds on the problem of distinguishing $t$ copies of (YES) a random $C$-codeword phase state $\ket{\psi_{\bc}}$ and (NO) a uniformly random phase state $\ket{\psi_\bx}$, where $\bx \sim \F_q^n$.
This implies a lower bound on testing $C$-codeword states so long as $C$ is not too dense,
as a random $\ket{\psi_\bx}$ should be very far from any $C$-codeword state with very high probability.
This motivates the following~notation.
\begin{notation}[Average codeword state]
    Let $C \subseteq \F_q^n$ be a linear code, and let $t$ be an integer. Then we define the mixed state
    \begin{equation*}
        \rho_C^t \coloneqq \E_{\bc \sim C}[\ketbra{\psi_{\bc}}{\psi_{\bc}}^{\otimes t}].
    \end{equation*}
    When $C$ is all of $\F_q^n$, we write $\rho^t_{\mathrm{all}} \coloneqq \rho^t_{\F_q^n}$.
\end{notation}

Our goal, then, is to show that $\rho^t_C$ and $\rho^t_{\mathrm{all}}$ are close in trace distance, which implies that they are information-theoretically indistinguishable.
Our next lemma shows that this is the case so long as the dual code $C^{\perp}$ can be decoded with probability close to 1.

\begin{lemma}[Decodability of the dual code implies no $C$-codeword tester]
\label{lemma:intro-decode}
    For any linear code $C \subseteq \F_q^n$, we~have
    \begin{equation*}
        \mathrm{D}_{\mathrm{tr}}(\rho_C^t, \rho_{\mathrm{all}}^t) \leq \sqrt{1 - p_{\mathrm{decode}}^t(C^{\perp})^2}.
    \end{equation*}
\end{lemma}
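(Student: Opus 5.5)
The plan is to compute both states explicitly in the computational basis of $([n])^{\otimes t}$, observe that they are block diagonal with respect to the same decomposition indexed by syndromes of $C^\perp$, and then bound their fidelity from below by $p_{\mathrm{decode}}^t(C^\perp)$. The Fuchs--van de Graaf inequality $\mathrm{D}_{\mathrm{tr}}(\rho,\sigma)\le\sqrt{1-F(\rho,\sigma)^2}$, with $F(\rho,\sigma)=\|\sqrt\rho\sqrt\sigma\|_1$, then gives the claim.

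\textbf{Step 1: expand the states.} For $i=(i_1,\dots,i_t)\in[n]^t$ write $e(i)=\delta_{i_1}+\cdots+\delta_{i_t}\in\F_q^n$. Then
\[
\ket{\psi_c}^{\otimes t}=n^{-t/2}\sum_{i\in[n]^t}\omega^{c\cdot e(i)}\ket{i}.
\]
Averaging over $\bc\sim C$ uses the character identity $\E_{\bc\sim C}\,\omega^{\bc\cdot y}=\mathbf 1[y\in C^\perp]$. This gives
\[
\rho_C^t=n^{-t}\sum_{i,j\,:\,e(i)-e(j)\in C^\perp}\ket{i}\bra{j},
\qquad
\rho_{\mathrm{all}}^t=n^{-t}\sum_{i,j\,:\,e(i)=e(j)}\ket{i}\bra{j}.
\]
Next I group basis vectors by their type. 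Define
\[
v_e=n^{-t/2}\sum_{i\,:\,e(i)=e}\ket{i}.
\]
The vectors $v_e$ are pairwise orthogonal, and $\|v_e\|^2=\Pr[\be_t=e]=:p_e$.

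Let $H$ be a parity check matrix of $C^\perp$, so that $x-y\in C^\perp$ iff $Hx=Hy$. The expansions above become
\[
\rho_{\mathrm{all}}^t=\sum_e\ketbra{v_e}{v_e},
\qquad
\rho_C^t=\sum_s\ketbra{w_s}{w_s},
\quad\text{where } w_s=\sum_{e\,:\,He=s}v_e .
\]

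\textbf{Step 2: block decomposition.} For each syndrome $s$, let $V_s=\Span\{v_e:He=s\}$. These subspaces are mutually orthogonal. Both states are supported on $\bigoplus_s V_s$ and are block diagonal with respect to this decomposition. The block of $\rho_C^t$ on $V_s$ is the rank-one operator $P_s=\ketbra{w_s}{w_s}$, and the block of $\rho_{\mathrm{all}}^t$ on $V_s$ is $Q_s=\sum_{He=s}\ketbra{v_e}{v_e}$. Fidelity is additive over orthogonal blocks, since the square roots and the trace norm both split blockwise. Hence $F(\rho_C^t,\rho_{\mathrm{all}}^t)=\sum_s F(P_s,Q_s)$, where $F$ is extended to unnormalized positive operators in the obvious way.

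\textbf{Step 3: fidelity per block.} Because $P_s$ is rank one, $F(P_s,Q_s)=\sqrt{\bra{w_s}Q_s\ket{w_s}}$. Orthogonality of the $v_e$ gives
\[
\bra{w_s}Q_s\ket{w_s}=\sum_{He=s}\|v_e\|^4=\sum_{He=s}p_e^2 .
\]
Therefore
\[
F(P_s,Q_s)=\Big(\sum_{He=s}p_e^2\Big)^{1/2}\ \ge\ \max_{g\,:\,Hg=s}p_g .
\]
Summing over $s$ and using the MAP formula for $p_{\mathrm{decode}}^t(C^\perp)$ stated after the definition of average-case decoding yields $F(\rho_C^t,\rho_{\mathrm{all}}^t)\ge p_{\mathrm{decode}}^t(C^\perp)$. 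Fuchs--van de Graaf then gives the lemma.

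The only delicate point is bookkeeping. One must check that the parity check matrix in the decoding problem is the one for $C^\perp$, whose kernel is $C^\perp$, which matches the condition $e(i)-e(j)\in C^\perp$. One must also check that the blockwise additivity of fidelity is legitimate for these unnormalized blocks. Neither step should be a real obstacle.
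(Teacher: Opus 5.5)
Your argument is correct, and it takes a genuinely different, if closely related, route from the paper's.

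The paper works with the trace distance directly. It proves the exact identity $\frac{1}{2}\|\rho_C^t-\rho_{\rm all}^t\|_1=\E_{s\sim\calS}[\nu(\calE_s)]$, where $\calE_s$ is the posterior of $\be_t$ given syndrome $s$. It then bounds each block via $\nu(\calE_s)\le\sqrt{1-\sum_x\calE_s(x)^2}$, using that $\ketbra{\sqrt{\calE_s}}{\sqrt{\calE_s}}-\diag(\calE_s)$ is traceless with at most one positive eigenvalue, together with a Frobenius-norm estimate. Finally it needs Jensen's inequality over $s$ and $\E_s[(\max_x\calE_s(x))^2]\ge(\E_s\max_x\calE_s(x))^2$ to reach $p_{\rm decode}^t(C^\perp)$.

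That per-block estimate is really the pure-versus-mixed case of Fuchs--van de Graaf, applied block by block. You instead compute the fidelity of the full states exactly, $F=\sum_s(\sum_{He=s}p_e^2)^{1/2}$, and apply Fuchs--van de Graaf once. This is shorter and dispenses with both the eigenvalue argument and Jensen.

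What the paper's route buys is an exact formula for the trace distance and a slightly sharper intermediate bound. Writing $\|\calE_s\|_2$ for the $\ell_2$ norm of the posterior, the paper gets $\sqrt{1-\sum_s\calS(s)\|\calE_s\|_2^2}$, while yours is $\sqrt{1-(\sum_s\calS(s)\|\calE_s\|_2)^2}$, which is weaker by Cauchy--Schwarz. The two coincide once $\|\calE_s\|_2$ is replaced by $\max_x\calE_s(x)$, so the final bound is the same.

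The bookkeeping points you flag are fine. $H$ can be taken to be a generator matrix of $C$, whose kernel is $C^\perp$. Fidelity does split over the common orthogonal block decomposition $\bigoplus_s V_s$, since both states are block diagonal with respect to it.
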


This translates the question of showing lower bounds for testing $C$-codeword phase states into a purely classical question about the average-case decodability of the dual code $C^\perp$ with respect to the error $\be_t$.
Average-case decoding over different noise channels is one of the most basic topics in all of coding theory, dating back to Shannon's pioneering work in the 40's~\cite{Sha48},
and there is a wealth of literature for us to draw on.
For our bounds, we care about the case when our code $C$ is the Reed--Muller code $\RM_q[m, d]$,
and it is well-known that the corresponding dual code is the Reed--Muller code $C^{\perp} = \RM_q[m, m(q-1)-d-1]$.
Average-case decodability of the Reed--Muller code is a particularly well-studied topic, and strong bounds for (a closely related distribution to) our error distribution have been shown by Abbe, Shpilka, and Wigderson~\cite{ASW15} for the field $\F_2$.
Proving our main result requires us to generalize their bound to the field $\F_q$.
To state our bound, let us write $D_{d,q}(r)$ for the number of monomials in $r$ variables of total degree at most $d$, with each variable having degree at most $q-1$;
note that this is simply the dimension of the code $\RM_q[r,d]$.
Then we show the following generalization of {\cite[Theorem 6.2]{ASW15}.
\begin{theorem}[Average-case decodability of the Reed--Muller code]\label{thm:our-generalization}
    Let $q$ be a prime number.
    Let $C = \RM_q[m, 2d+1]$.
    Let $0 \leq r \leq m$,
    and let $t \leq D_{d,q}(r)$ be a positive integer.
Then
\begin{equation*}
    p^t_{\mathrm{decode}}(C^{\perp}) \geq 1 - t q^{r-m}.
\end{equation*}
\end{theorem}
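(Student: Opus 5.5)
We follow the strategy of \cite[Theorem 6.2]{ASW15}: reduce decoding a random error $\be_t$ from its syndrome under $C^\perp$ to an \emph{interpolation} question about the support of the error. The parity check matrix of $C^\perp=\RM_q[m,m(q-1)-2d-2]$ is a generator matrix of $C=\RM_q[m,2d+1]$. Hence the syndrome of $\be_t$ is the vector of weighted sums $\sum_{j}e_j\,g(x_j)$ over all polynomials $g$ of degree at most $2d+1$, where the $x_j\in\F_q^m$ are the points in the support of $\be_t$ and the $e_j$ are the multiplicities. We cannot simply require that the support be determined by this information, since the MAP decoder must output the exact error $\be_t$. It suffices to show that with probability at least $1-tq^{r-m}$ the error $\be_t$ is the \emph{unique} vector, among all vectors $g$ with support of size at most $t$, satisfying $Hg=H\be_t$. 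Uniqueness of the preimage then makes the MAP decoder succeed. Although the error $\be_t$ can have repeated points, its support has size at most $t$ and its nonzero entries are in $\F_q$, so uniqueness among sparse vectors is enough.

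The first step is the uniqueness criterion. Suppose $g\neq \be_t$ are two vectors with supports of size at most $t$ and the same syndrome. Then $g-\be_t$ is a nonzero codeword of $C^\perp$ supported on a set $S\cup S'$, where $S=\supp(\be_t)$. Our goal is to rule this out using the following property:
\begin{quote}
for every $x\in S$ there is a polynomial $h_x$ of degree at most $d$ that vanishes on $S\setminus\{x\}$ and satisfies $h_x(x)\neq 0$.
\end{quote}
Given this property, for any point $y$ we take products $h_x\cdot\ell$ of degree at most $2d+1$, where $\ell$ is affine, together with squares $h_x^2$. These give dual constraints (polynomials of degree at most $2d+1$) against which the codeword $g-\be_t$ must be orthogonal. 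The argument of ASW then shows that the coefficient $g_x-(\be_t)_x$ vanishes and that $g$ has no support outside $S$. Concretely, for $y\notin S$ we pair $g$ against $h_x^2$, $h_x^2\cdot(\text{coordinate})$, and similar products to recover $S$ and the multiplicities. I expect adapting this step to $\F_q$ to be fiddly but standard; roughly, the role of $d$ versus $2d+1$ is to let products of two degree-$d$ interpolants plus one linear factor stay inside $C$.

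The second, main step is probabilistic: the points of $S$ are at most $t$ points chosen uniformly and independently, and we must show that the interpolation property holds with probability at least $1-tq^{r-m}$. We use the restriction trick. The property holds if the evaluation vectors $(\text{monomials of degree}\le d)(x)$ for $x\in S$ are linearly independent, that is, if $S$ is an independent set for $\RM_q[m,d]$ evaluations. We show this by projection onto $r$ coordinates. Restrict to the monomials in the first $r$ variables, of which there are $D_{d,q}(r)\ge t$. Then add the points one at a time: the $j$-th point fails to be independent of the previous ones only if its evaluation lies in the span of at most $j-1<D_{d,q}(r)$ evaluation vectors, which are non-generic. This is where I expect the main obstacle to lie, because we must bound the probability that a uniform point lies in the common zero set of a nonzero space of degree-$d$ polynomials in $m$ variables. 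The naive Schwartz--Zippel bound gives $d/q$, not $q^{r-m}$. To obtain $q^{r-m}$, I would argue as ASW do with a union bound over the $t$ points. The probability that point $j$ collides with a previous point in a way that breaks independence is bounded by using the remaining $m-r$ coordinates as fresh randomness. Independence fails essentially only if the point agrees, on its last $m-r$ coordinates, with a configuration determined by the previous points, which has probability $q^{-(m-r)}$ per point. Summing over the $t$ points gives the bound $tq^{r-m}$.
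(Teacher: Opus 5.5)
Your outline follows the same structure as the paper's proof. You reduce MAP success to the event that $\be_t$ is the unique preimage of weight at most $t$. You derive that event from linear independence of the degree-$\le d$ evaluation functionals $E_{\ba_1},\dots,E_{\ba_t}$, which is the $\F_q$ analogue of ASW's Lemma 6.11. You then bound the failure of independence by adding the points one at a time.

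\textbf{The gap is in this last step.} Suppose $E_{\ba_1},\dots,E_{\ba_j}$ are independent. Then $E_{\ba_{j+1}}$ falls in their span exactly when $\ba_{j+1}\in\cl_d(A_j)$, the common zero set of the space $I_d(A_j)$ of polynomials in $\mathcal{P}_{m,d,q}$ vanishing on $A_j=\{\ba_1,\dots,\ba_j\}$. So you need the deterministic bound $|\cl_d(A_j)|\le q^r$ whenever $j<D_{d,q}(r)$, and your proposal does not supply it.

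The projection idea does not work as stated. If you keep only monomials in the first $r$ variables, the evaluation vectors depend only on the projections of the points to $\F_q^r$. The last $m-r$ coordinates then supply no fresh randomness. Moreover, independence of $t\le D_{d,q}(r)$ random points of $\F_q^r$ against $\mathcal{P}_{r,d,q}$ is false in general: for $d\ge r(q-1)$ and $t=q^r$ it would require $q^r$ uniform points of $\F_q^r$ to be pairwise distinct. If instead you keep all $m$-variate monomials, your claim that failure ``essentially only'' happens when the last $m-r$ coordinates match something determined by earlier points is unjustified. Nothing forces $\cl_d(A_j)$ to have that coordinate structure, and bounding its size is precisely the nontrivial content.

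\textbf{The missing ingredient} is an extremal bound on common zero sets of large subcodes of $\RM_q[m,d]$. Any subspace of dimension at least $a=D_{d,q}(m)-D_{d,q}(r)$ has support of size at least $q^m-q^r$; that is, the generalized Hamming weight $d_a(\RM_q[m,d])$ equals $q^m-q^r$. The paper imports this from Heijnen and Pellikaan's determination of the generalized Hamming weights of $q$-ary Reed--Muller codes; ASW use Wei's theorem for $q=2$. Since $\dim I_d(A_j)=D_{d,q}(m)-j>a$, its common zero set $\cl_d(A_j)$ has at most $q^r$ points. Hence $\Pr[\ba_{j+1}\in\cl_d(A_j)]\le q^{r-m}$, and summing over $j$ gives $tq^{r-m}$. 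This event also automatically excludes repeated $\ba_i$, which settles your concern about repeated points.

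\textbf{A smaller point on your first step.} Pairing against $h_x^2$ and $h_x^2\ell$ alone only shows that $x$ is a weighted barycenter of the competitor's support $S'$, which does not force $x\in S'$. You need all cross products $h_xh_y$, of degree $\le 2d$. These make the vectors $(h_x(z))_{z\in S'}$ pairwise orthogonal, with nonzero self-pairings, under the bilinear form weighted by the competitor's entries. They are therefore linearly independent, and together with $|S'|\le t$ this forces $|S'|=t$. Then the products $x_\ell h_xh_y$, of degree $\le 2d+1$, give the eigenvector relation that pins down $S'=S$ and the coefficients.
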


Let us note a couple of peculiarities of this theorem statement.
First, it is stated in terms of $\RM_q[m, 2d+1]$ rather than the code that we care about, which is $\RM_q[m, d]$; this is because if we state it for $\RM_q[m, d]$, then we will need to refer to the quantity $D_{(d-1)/2, q}(r)$ and ensure that $d$ is odd, which is mildly cumbersome to deal with.
Next, note that $r$ is a free parameter, and must be tuned to achieve the strongest bound possible.

We also note that this framework yields lower bounds for degree-$d$ phase states with $d \ge 3$.
For degree-$2$ phase states, we prove an $\Omega(m)$ lower bound via a different approach. Specifically, we show that distinguishing a random degree-$2$ phase state from one restricted to a halfspace requires $\Omega(m)$ copies.

While our primary focus is on property testing lower bounds, our framework yields a stronger consequence. Namely, whenever $p_{\rm decode}^t(C^\perp)$ is large, the $t$-th moment operator of a random $C$-codeword phase state is close to that of a Haar-random state.
Specifically, let $\rho_{\rm Haar}^t$ denote the $t$-th moment operator of the Haar measure on pure states. It is known that $\rho_{\rm all}^t$ is close to $\rho_{\rm Haar}^t$ (see \cref{appendix:random-fq}). Combined with \cref{lemma:intro-decode}, this implies that $\rho_C^t$ is close to $\rho_{\rm Haar}^t$. Equivalently, random $C$-codeword phase states form approximate state designs.
We defer the details of this connection to \cref{sec:testing-LBs}. Here, we highlight one consequence: low-degree phase states form highly accurate designs for moments far larger than their degree.

\begin{theorem}
    Let $q$ be a prime number. 
    Define $M \coloneqq \binom{\lfloor m/2 \rfloor}{\lfloor (d-1)/2\rfloor}$ and suppose that $M = 2^{o(m)}$. Then for any $a < \frac{1}{4} \log_2 q$, the uniform ensemble of $\RM_q[m,d]$-codeword phase states is a $2^{-am}$-approximate state $t$-design for $t = \Omega(M)$.
\end{theorem}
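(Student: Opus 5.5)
The plan is to chain three estimates: \cref{thm:our-generalization}, then \cref{lemma:intro-decode}, then the comparison of $\rho^t_{\mathrm{all}}$ with $\rho^t_{\mathrm{Haar}}$ from \cref{appendix:random-fq}. Throughout I write $d' \coloneqq \lfloor (d-1)/2\rfloor$, so that $2d'+1 \le d$.

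\emph{Step 1: pass to the odd degree.} Since $\RM_q[m,2d'+1] \subseteq \RM_q[m,d]$, we have $\RM_q[m,d]^\perp \subseteq \RM_q[m,2d'+1]^\perp$. Average-case decoding of a subcode is at least as easy: the MAP decoder for the larger code remains a valid decoder for the smaller one, because it receives a finer syndrome and can always coarsen it. Hence
\[
p^t_{\mathrm{decode}}(\RM_q[m,d]^\perp) \ge p^t_{\mathrm{decode}}(\RM_q[m,2d'+1]^\perp).
\]

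\emph{Step 2: invoke \cref{thm:our-generalization} with $r=\lfloor m/2\rfloor$.} Since $D_{d',q}(r) \ge \binom{r}{d'} = M$ (count multilinear monomials), every $t \le M$ is admissible. The theorem then gives
\[
p^t_{\mathrm{decode}} \ge 1 - t q^{\lfloor m/2\rfloor - m} \ge 1 - M q^{-m/2}.
\]
Plugging this into \cref{lemma:intro-decode} yields
\[
\mathrm{D}_{\mathrm{tr}}(\rho^t_C,\rho^t_{\mathrm{all}}) \le \sqrt{1-(1-Mq^{-m/2})^2} \le \sqrt{2M}\, q^{-m/4}.
\]
Now $q^{-m/4} = 2^{-(\frac14\log_2 q)m}$ and $M = 2^{o(m)}$. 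So for any fixed $a < \frac14\log_2 q$, the factor $\sqrt{2M}$ is absorbed into the gap, and the bound is at most $\tfrac12 2^{-am}$ for $m$ large enough.

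\emph{Step 3: compare with Haar.} Use the bound from \cref{appendix:random-fq} that $\rho^t_{\mathrm{all}}$ is close to $\rho^t_{\mathrm{Haar}}$. Heuristically, a random phase state is a diagonal random-phase ensemble on $n=q^m$ basis states. Its $t$-th moment differs from Haar only on the collision subspace, i.e.\ tuples with repeated indices, together with $q$-periodicity effects of the phases. The resulting error should be of order $t^2/q^m$ up to $\poly(t)$ factors. With $t = \Theta(M) = 2^{o(m)}$ this is $2^{-m\log_2 q + o(m)}$, far below $2^{-am}$. The triangle inequality then gives $\mathrm{D}_{\mathrm{tr}}(\rho^t_C,\rho^t_{\mathrm{Haar}}) \le 2^{-am}$, which is the approximate $t$-design statement in trace-distance form.

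\emph{Main obstacle.} The delicate point is Step 3, together with matching the paper's precise notion of approximate design. If the design is defined in a multiplicative or relative-error sense rather than by trace distance, the additive bound above must be converted. That conversion would cost a dimension factor, and I would first check whether the appendix bound is stated in the needed norm. For Step 3 itself I would make sure the appendix estimate holds uniformly for $t$ up to $2^{o(m)}$, not merely for constant $t$. The other two steps are direct applications of earlier results.
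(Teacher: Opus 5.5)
Your proposal is correct and follows essentially the paper's route. The paper first proves the general bound $\mathrm{D}_{\rm tr}(\rho^t_C,\rho^t_{\mathrm{Haar}})\le\sqrt{2tq^{\lfloor m/2\rfloor-m}}+t(t-1)/q^m$ by combining the same pieces you use: subcode monotonicity, the Reed--Muller decoding bound with $r=\lfloor m/2\rfloor$ and $D_{d',q}(r)\ge M$, \cref{lemma:intro-decode}, and \cref{thm:random-fq-phase}. It then sets $\gamma=2^{-am}$ and uses $M=2^{o(m)}$, exactly as you do.

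The worries in your last paragraph do not arise. The paper defines approximate designs in trace distance, so no conversion is needed. Moreover, \cref{thm:random-fq-phase} gives $\mathrm{D}_{\rm tr}(\rho^t_{\mathrm{all}},\rho^t_{\mathrm{Haar}})\le t(t-1)/n$ for every $t$, so your Step 3 is rigorous rather than heuristic.
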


Finally, we also show that the learnability of codeword states admits a purely classical characterization in terms of average-case decoding. 
In particular, the sample complexity is governed by the order-$1/2$ R\'enyi entropy, which we denote by $H_{1/2}(\cdot)$, of the syndrome distribution of the dual code under the error $\be_t$. 

\begin{theorem}[Learning codeword states]
\label{thm:learning}
Let $C \subseteq \F_q^n$ be a linear code, and let $H$ be a parity check matrix of $C^\perp$. 
Given $t$ copies of an unknown codeword state $\ket{\psi_c}$ for $c \in C$, there exists a learning algorithm that identifies $c$ with probability at least $1-\delta$ if and only if 
\[
H_{1/2}(H\be_t)\ge \dim(C) + \log_q(1-\delta).
\]
\end{theorem}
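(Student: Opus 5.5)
The plan is to reduce the learning problem, via a unitary change of basis, to discriminating $q^{k}$ states on $\F_q^{k}$ with $k=\dim(C)$. These states turn out to be the Fourier ``characters'' reweighted by the square root of the syndrome distribution. For this ensemble the optimal identification probability can be computed exactly, and it equals $q^{-k}\bigl(\sum_s\sqrt{p(s)}\bigr)^2 = q^{H_{1/2}(H\be_t)-k}$. The theorem then follows by rearranging $q^{H_{1/2}-k}\ge 1-\delta$, provided we measure entropies in base $q$ so that $H_{1/2}(\bs)=2\log_q\sum_s\sqrt{p(s)}$.

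\emph{Step 1 (rewriting the copies).} Since $H$ is a parity check matrix of $C^\perp$, its rows span $C$. Hence every $c\in C$ can be written $c=H^{\mathsf T}u$ for a unique $u\in\F_q^k$ (taking $H$ to have $k$ independent rows). For $a=(a_1,\dots,a_t)\in[n]^t$ write $e(a)=\delta_{a_1}+\cdots+\delta_{a_t}$. Then
\[
\ket{\psi_c}^{\otimes t}= n^{-t/2}\sum_{a\in[n]^t}\omega^{\langle c,e(a)\rangle}\ket{a}
=\sum_{s\in\F_q^k}\omega^{\langle u,s\rangle}\sqrt{p(s)}\,\ket{\phi_s},
\]
where $p(s)=\Pr[H\be_t=s]$ and $\ket{\phi_s}\propto\sum_{a:He(a)=s}\ket{a}$ is normalized. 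The $\ket{\phi_s}$ have disjoint supports, so they are orthonormal. Consequently, up to a fixed isometry, the learner holds $\ket{\Psi_u}=\sum_s\sqrt{p(s)}\,\omega^{\langle u,s\rangle}\ket{s}$, and identifying $c$ is the same as identifying $u$.

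\emph{Step 2 (achievability).} I will apply the inverse quantum Fourier transform over $\F_q^k$ and measure. The probability of outcome $u$ on input $\ket{\Psi_u}$ is $\bigl|q^{-k/2}\sum_s\sqrt{p(s)}\bigr|^2=q^{-k}\bigl(\sum_s\sqrt{p(s)}\bigr)^2$. This is the same for every $u$, so it holds in the worst case as well as on average.

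\emph{Step 3 (converse).} It suffices to bound the average success probability over a uniformly random $u$, since that average is at least the worst-case success probability. I will use SDP duality for state discrimination. For any POVM $\{E_u\}$ and any $Y$ satisfying $Y\succeq q^{-k}\ketbra{\Psi_u}{\Psi_u}$ for all $u$, the average success probability is $\sum_u q^{-k}\bra{\Psi_u}E_u\ket{\Psi_u}\le\mathrm{tr}(Y)$. Write $D=\diag(\sqrt{p})$, $\ket{\chi_u}=\sum_s\omega^{\langle u,s\rangle}\ket{s}$ so that $\ket{\Psi_u}=D\ket{\chi_u}$, and $Z=\sum_s\sqrt{p(s)}$. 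I choose $Y=q^{-k}Z\,D$. Restricted to the support of $p$, the condition $Y\succeq q^{-k}D\ketbra{\chi_u}{\chi_u}D$ is equivalent to $Z D^{-1}\succeq\ketbra{\chi_u}{\chi_u}$. That in turn holds iff $\bra{\chi_u}D\ket{\chi_u}/Z\le 1$, and this ratio is exactly $\sum_s\sqrt{p(s)}/Z=1$. Therefore $\mathrm{tr}(Y)=q^{-k}Z^2$, matching Step 2.

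I expect the converse to be the main obstacle. The key choice is to guess the correct dual witness, namely $Y\propto\rho^{1/2}$ with $\rho=\diag(p)$ the average state, rather than trying to bound an arbitrary measurement directly. Care is also needed with the support of $p$: components where $p(s)=0$ can be discarded, because every $\ket{\Psi_u}$ lies in the span of $\{\ket{s}:p(s)>0\}$.
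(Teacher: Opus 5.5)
Your proof is correct. It reaches the same formula as the paper, but your converse uses a different technique.

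\textbf{The paper's route.} It does not reduce to the states $\ket{\Psi_u}$. It invokes the Eldar--Forney lemma (\cref{prop:eldar-forney}), whose proof has three steps:
\begin{enumerate}
\item it symmetrizes an arbitrary POVM into a covariant one $M_g=U_gM_0U_g^\dagger$;
\item it decomposes $\ket{\phi}$ into character components and shows $P_\chi M_0P_\chi=P_\chi/|G|$;
\item it bounds $\bra{\phi}M_0\ket{\phi}$ by Cauchy--Schwarz.
\end{enumerate}
It then reads off $\tr\sqrt{\rho_{\rm avg}}=\sum_s\sqrt{p(s)}$ from the spectrum computation in \cref{prop:spectrum-formula}. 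Passing from average-case to worst-case success requires a separate covariance argument for the PGM (\cref{cor:pgm-learning-worst-case}).

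\textbf{How your steps map onto it.} Your vectors $\ket{\phi_s}$ are exactly those character components. Your inverse-Fourier measurement, pulled back through $\ket{s}\mapsto\ket{\phi_s}$ and compressed to the support of $p$, is precisely the pretty-good measurement. To see this, note that in your coordinates $\rho_{\rm avg}=\diag(p)$, so $\rho_{\rm avg}^{-1/2}\ket{\Psi_u}$ is the support-restricted character $\ket{\chi_u}$.

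\textbf{The genuine difference is Step 3.} Your witness $Y=q^{-k}Z\,D=q^{-k}\,\tr\sqrt{\rho_{\rm avg}}\,\sqrt{\rho_{\rm avg}}$ is the standard dual certificate for PGM optimality on geometrically uniform ensembles. Weak duality then bounds every POVM at once, with no symmetrization step and no Cauchy--Schwarz.

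\textbf{What each route buys.} Yours is shorter and self-contained. It also gives the worst-case statement for free, because your measurement's success probability is visibly independent of $u$. The paper's route proves a reusable optimality lemma for arbitrary finite abelian geometrically uniform ensembles. It also plugs directly into the spectral facts already developed for the testing lower bounds.
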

Thus, the copy complexity of learning codeword states reduces to the classical question of determining the smallest $t$ for which the inequality in \cref{thm:learning} is satisfied.

\subsection{Future directions}

Our work leaves open several interesting problems related to the learnability and testability of low-degree states.
First, as we have seen above, the work of Arunachalam, Bravyi, Dutt, and Yoder~\cite{arunachalam2022phase} showed that $\Theta(m^{d-1})$ copies are necessary and sufficient for learning a degree-$d$ phase state over $\F_2^m$.
But what about over $\F_q^m$?
We conjecture that these states can be learned with $\Theta(m^{d-1})$ copies as well. 

Next, we have shown lower bounds for the problem of testing degree-$d$ phase states, but we do not have matching upper bounds.
The one exception is the case of testing degree-$2$ phase states over $\F_2^m$, where our lower bound of $\Omega(m)$ matches the upper-bound of $O(m/\epsilon^2)$ that follows from the $O(m)$-copy learning algorithm of R\"{o}tteler~\cite{Rot09}.
We conjecture that this pattern holds more generally,
and that $\Theta(m^{d-1}/\epsilon^2)$ copies are necessary and sufficient to test degree-$d$ phase states over $\F_q^m$,
matching our conjectured bound for learning degree-$d$ phase states in terms of the $n$ and $d$ parameters.
This would parallel the learning and testing bounds for degree-$d$ polynomials in the passive learning and testing scenarios, with the factor-$m$ savings due to the Bernstein--Vazirani algorithm.
By showing improved error tolerance of high-rate binary Reed--Muller codes over the binary symmetric channel (BSC), one can hope to upgrade~\cref{thm:our-generalization} to show a lower bound of $\Omega(m^{d-2})$ copies,\footnote{For $d = 2$, it turns out that one cannot hope for a bound better than $O(1)$ via this approach.} nearly matching the upper bound. This belief is consistent with a vibrant line of work showing that, for various parameter regimes, the binary Reed--Muller code achieves capacity on the binary erasure channel (BEC), binary symmetric channel (BSC), and more generally, binary memoryless channels (BMS) (see the book~\cite{ASSY23} for a recap of the numerous such results). As such, one should expect the Reed--Muller code to achieve a similar error tolerance as high-rate random binary linear codes, suggesting an improved version of~\cref{thm:our-generalization} that improves $\RM_q[m,2d+1]$ to $\RM_q[m,d+2]$.

More broadly, one can ask whether there exists any linear code $C$ for which the corresponding $C$-codeword states can be tested significantly more efficiently than they can be learned.
For Reed--Muller codes of degree $1$ and $2$, the testing and learning complexities match.
We conjecture that this holds generally for all linear codes.

\begin{conjecture}[No hyperefficient $C$-codeword testers]
For every linear code $C$, the copy complexity of testing and learning $C$-codeword states is the same.
\end{conjecture}

Finally, although our lower bounds rule out the use of low-degree phase states for proving $\QMA(2) = \NEXP$,  they don't rule out using some \emph{other} family of quantum states to prove this statement.
Recall that in our hypothetical $\QMA(2)$ protocol for $\NEXP$,
we are given as input a succinctly encoded $\mathsf{3Sat}$ formula $\varphi$ on $N = 2^{n}$ variables.
For each assignment $x = (x_1, \ldots, x_N)$ to the variables,
we would like there to be a corresponding proof state $\ket{\pi_x}$ on $\poly(n)$ qubits.
We want these proof states to have small overlap with each other, say $|\braket{\pi_x}{\pi_y}|^2 \leq 1/2$ for all $x \neq y$, so that two proof states $\ket{\pi_{x}}$ and $\ket{\pi_y}$ with $x \neq y$ can induce different behaviors from the verifier.
Finally, given $\poly(n, 1/\epsilon)$ copies of a state $\ket{\pi}$, we want it to be testable whether (YES) $\ket{\pi} = \ket{\pi_x}$ for some $x$ or (NO) $\ket{\pi}$ is $\epsilon$-far from all the $\ket{\pi_x}$'s.
Note that this family of states must be extremely dense, as it is fitting $2^{2^{n}}$ pairwise far-apart states into just $\poly(n)$ qubits, and as a result, we call it a \emph{dense family of testable states}. We conjecture that no such family exists.

\begin{conjecture}[No dense family of testable states]
    There is no dense family of testable states.
\end{conjecture}

As we have previously discussed,
stabilizer states,
while very similar to degree-2 phase states,
are testable with $O(1)$ copies~\cite{GNW21},
which is much smaller than the $\Omega(m)$ lower bound for phase states that we prove.
A stabilizer state is a state of the form $C \cdot \ket{0^n}$,
where $C$ is an element of the Clifford group.
We believe that a possibly interesting test case for our conjecture is states of the form $C \cdot \ket{0^n}$, where $C$ is drawn from higher levels of the Clifford hierarchy.

\section{Preliminaries}

Throughout the paper, we let $n$ denote both the dimension of the Hilbert space and the block length of the code.
We use $\log$ to denote base-$2$ logarithms and $\ln$ to denote the natural logarithm. For an event $E$, we use the convention $\mathbf{1}\{E\}$ to denote the indicator of the event, i.e., $\mathbf{1}\{E\}$ equals $1$ if and only if the event $E$ occurs. 
For $u \in \F_q^n$ and $S \subseteq \F_q^n$, let
\[
\delta(u,S) \coloneqq \min_{w \in S} \frac{d_H(u,w)}{n}
\]
denote the relative Hamming distance from $u$ to $S$. For any set $C \subseteq \F_q^n$, we define its rate to be $R(C) \coloneqq \log_q{\abs{C}}/n$. Moreover, we define its relative distance to be $\delta(C) \coloneqq \min_{x,y \in C:  x \neq y}\{\delta(x,y)\}$.

We use the convention of denoting random variables with bold notation, for example $\mathbf{a},\mathbf{b}, \mathbf{c}$. For a discrete random variable $\mathbf{X}$ with distribution $\{p(x)\}_x$, we denote the Shannon entropy of $\mathbf{X}$ as $H(\mathbf{X}) = -\sum_x p(x) \log p(x)$. Similarly for $p\in [0,1]$, we define the binary entropy function as
$h(p) = -p \log p - (1-p)\log(1-p)$. Finally, for a quantum state $\rho$, we denote its von Neumann entropy as $S(\rho) = - \tr(\rho \log \rho)$, where $\log(\cdot)$ is taken with respect to the eigenvalues of the state $\rho$. For a matrix $M$, we define $\|M\|_1$ as its trace norm, i.e., sum of its absolute eigenvalues. Throughout, we let $U_{n,q}$ denote the uniform distribution over $\F_q^n$. When unambiguous, we drop the subscript $q$ and let $U_n$ denote the distribution $U_{n,q}$.

\begin{definition}[$\eps$-tester]\label{def:tester}
Let $\calP \subseteq \C^d$ be a set of pure states and let $\dist(\cdot, \cdot)$ be a distance metric.
A state $\ket{\psi}$ is $\eps$-far from $\calP$ if $\min_{\ket\phi \in \calP} \dist(\ket\psi, \ket\phi) \geq \eps$.
An $\eps$-tester for $\calP$ using $t$ copies is a two-outcome measurement $\{M, I -M\}$ acting on $(\C^d)^{\otimes t}$ such that for every pure state $\ket{\psi}$:
\begin{itemize}
    \item If $\ket{\psi} \in \calP$ then $\tr(M \ketbra{\psi}{\psi}^{\otimes t}) \geq \frac{2}{3}$.
    \item If $\ket{\psi}$ is $\eps$-far from being in $\calP$ then $\tr(M \ketbra{\psi}{\psi}^{\otimes t}) \leq \frac{1}{3}$. 
\end{itemize}
\end{definition}

Unless otherwise stated, our distance metric will be the infidelity between quantum states.

Next, we define some notation and facts for a classical probability distribution $\calD$ over a finite set $X$. For any $x \in X$, let $\calD(x) \coloneqq \Pr_{y \sim \calD}[y=x]$.

\begin{definition}
\label{def:vu}
For any probability distribution $\calD$ on a finite set $X$, define
\[
\ket{\sqrt{\calD}} \coloneqq \sum_{x \in X}\sqrt{\calD(x)}\,\ket{x} \qquad\text{and}\qquad \diag(\calD) \coloneqq \sum_{x \in X}\calD(x)\ketbra{x}{x}.
\]
We then define
\[
\nu(\calD) \coloneqq \frac{1}{2}\norm{\ketbra{\sqrt{\calD}}{\sqrt{\calD}}-\diag(\calD)}_1.
\]
\end{definition}

\begin{lemma}
\label{lemma:nu-formula}
For any distribution $\calD$ over a finite set $X$, we have the inequality
\begin{equation*}
    \nu(\calD) \le \sqrt{1 - \sum_{x \in X}{\calD(x)^2}}
\end{equation*}
\end{lemma}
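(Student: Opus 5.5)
Write $v = \ket{\sqrt{\calD}}$, so that $\ketbra{v}{v}$ is a pure state, and write $\sigma = \diag(\calD)$, which is a mixed state. The plan is to bound the trace distance $\nu(\calD) = \frac12\norm{\ketbra{v}{v} - \sigma}_1$ by the Fuchs--van de Graaf inequality. When one argument is pure, that inequality gives
\[
\tfrac12\norm{\ketbra{v}{v} - \sigma}_1 \le \sqrt{1 - F},
\]
where $F = \bra{v}\sigma\ket{v}$ is the fidelity (in its squared convention). So the task reduces to computing $F$.

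Computing directly,
\[
\bra{v}\sigma\ket{v} = \sum_{x}\calD(x)\,\bigl|\braket{x}{v}\bigr|^2 = \sum_x \calD(x)\cdot\calD(x) = \sum_x \calD(x)^2 .
\]
Substituting this into the bound gives $\nu(\calD) \le \sqrt{1-\sum_x \calD(x)^2}$, which is exactly the claim.

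To stay self-contained, I would prove the pure-versus-mixed inequality directly rather than cite it. Purify $\sigma$ as
\[
\ket{\Sigma} = \sum_x \sqrt{\calD(x)}\,\ket{x}\ket{x}.
\]
Then find a purification $\ket{V} = \ket{v}\otimes\ket{\phi}$ of $\ketbra{v}{v}$ whose overlap satisfies $|\braket{V}{\Sigma}|^2 = \bra{v}\sigma\ket{v}$. This works because, for a pure state, the overlap with any purification of $\sigma$ is maximized over $\ket\phi$ at $\bra{v}\sigma\ket{v}$; here one can take $\ket{\phi}$ proportional to $(\bra{v}\otimes I)\ket{\Sigma}$. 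Two facts then finish the argument. First, the trace distance between two pure states $\ket a,\ket b$ is $\sqrt{1-|\braket ab|^2}$. Second, trace distance is monotone under the partial trace. Together these give the bound.

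Neither step is really an obstacle. The only point needing care is using the right fidelity convention, namely squared overlap rather than overlap, so that the bound is $\sqrt{1-F}$ with $F = \sum_x\calD(x)^2$ and not $\sqrt{1-F^2}$.
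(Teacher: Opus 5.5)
Your argument is correct, but it takes a different route from the paper. You invoke the pure-versus-mixed Fuchs--van de Graaf bound $\frac12\norm{\ketbra{v}{v}-\sigma}_1 \le \sqrt{1-\bra{v}\sigma\ket{v}}$. You prove it via the purifications $\ket{\Sigma}$ and $\ket{v}\otimes\ket{\phi}$, with $\ket{\phi}\propto\sum_x \calD(x)\ket{x}$, together with monotonicity of trace distance under partial trace.

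The paper instead argues spectrally. The matrix $A=\ketbra{\sqrt{\calD}}{\sqrt{\calD}}-\diag(\calD)$ is a rank-one PSD matrix minus a PSD matrix, so it has at most one positive eigenvalue $\lambda$. Since $\tr A = 0$, this gives $\nu(\calD)=\lambda$. Then $\lambda^2 \le \tr(A^2) = 1+\sum_x \calD(x)^2 - 2\sum_x \calD(x)^2$.

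The paper's proof is shorter and needs only eigenvalue bookkeeping, with no purification or data-processing inequality. However, for a general mixed $\sigma$ the same computation only yields $\sqrt{1+\tr(\sigma^2)-2\bra{v}\sigma\ket{v}}$. It matches your bound here only because $\diag(\calD)$ is the dephasing of $\ketbra{v}{v}$, so that $\tr(\sigma^2)=\bra{v}\sigma\ket{v}=\sum_x\calD(x)^2$.

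Your route relies on the heavier monotonicity fact, but in exchange it gives $\sqrt{1-\bra{v}\sigma\ket{v}}$ for an arbitrary mixed state $\sigma$.
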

\begin{proof}
Let
\[
A(\calD)\coloneqq \ketbra{\sqrt{\calD}}{\sqrt{\calD}}-\diag(\calD).
\]
Then $A(\calD)$ has at most one positive eigenvalue. Denote this eigenvalue by $\lambda(\calD)$. Then we see that $\nu(\calD) = \lambda(\calD)$. Moreover, we see that
\begin{align*}
    \lambda(\calD)^2 &\le \tr(A(\calD)^2) \\
    &= \tr((\ketbra{\sqrt{\calD}}{\sqrt{\calD}})^2) + \tr(\diag(\calD)^2) - 2\tr(\ketbra{\sqrt{\calD}}{\sqrt{\calD}}\diag(\calD)) \\
    &= 1 + \sum_{x \in X}{\calD(x)^2} - 2\sum_{x \in X}{\calD(x)^2} \\
    &= 1 - \sum_{x \in X}{\calD(x)^2} .
\end{align*}
Taking square roots on both sides, we conclude the lemma.
\end{proof}

Finally, we will be working with the following discrete error model.

\begin{definition}[Error model]
Let $\mathcal{E}_{n,q}^t$ be a distribution of vectors $\be_t \in \F_q^n$ that is defined as
\begin{equation*}
    \be_t = \delta_{\ba_1} + \cdots + \delta_{\ba_t},
\end{equation*}
where $\ba_1, \ldots, \ba_t$ are sampled independently and uniformly at random from $[n]$.\footnote{Here $\delta_i$ is the vector with $1$ in the $i$'th entry and $0$'s elsewhere.}  When it is clear from context, we drop the subscripts and superscripts from $\mathcal{E}_{n,q}^t$ and simply denote it as $\mathcal{E}$.
\end{definition}

\subsection{Average codeword states}
\label{subsec:avg-codeword-states}

In this subsection, we define average codeword state ensembles along with some helpful facts and lemmas. Let $\ket{+^n} \coloneqq \tfrac{1}{\sqrt{n}}\sum_{i=1}^n{\ket{i}}$. For $v \in \F_q^n$, we let $Z^v \in \mathbb {C}^{n \times n}$ be the diagonal matrix with entries $(Z^v)_{i,i} = \omega^{v_i}$, where $\omega_q \coloneqq e^{2\pi i/q}$. Fix any natural number $t \in \mathbb{N}$. Consider a sample $\be_t \sim \calE^t_{n,q}$. Throughout the paper, when the values $n,q,t$ are clear from context, we will typically abbreviate $\calE^t_{n,q}$ to $\calE$, $\ket{+^n}$ to $\ket{+}$, and $\omega_q$ to $\omega$. First, we introduce the notion of a uniformly random codeword state.

\begin{definition}[Average codeword state]
    Let $V \subseteq \F_q^n$ be a linear code, and let $t$ be an integer. Then we define the mixed state
    \begin{equation*}
        \rho_V^t \coloneqq \E_{\bv \sim V}\left[\ketbra{\psi_{\bv}}{\psi_{\bv}}^{\otimes t}\right].
    \end{equation*}
    where $\ket{\psi_v} \coloneqq \frac{1}{\sqrt{n}}\sum_{i=1}^n{\omega^{v_i}\ket{i}}$. When $C$ is all of $\F_q^n$, we write $\rho^t_{\mathrm{all}} \coloneqq \rho^t_{\F_q^n}$.
\end{definition}

\noindent 
To get a better understanding of $\rho_V^t$, it will be helpful to introduce the following dephasing~channel.

\begin{definition}[Code dephasing channel]
\label{def:code-dephasing}
For any linear code $V \le \F_q^n$, define the completely dephasing channel on $n^t$-dimensional states as follows
\begin{equation*}
    \calZ_V^t(\sigma) \coloneqq \E_{\bv \sim V}\left[(Z^{\bv})^{\otimes t}\sigma (Z^{-\bv})^{\otimes t}\right],
\end{equation*}
which on input $\sigma$ conjugates it by the unitary $(Z^v)^{\otimes t}$ for a uniformly random $v \in V$.
\end{definition}

We relate these quantum channels to our average codeword state in the following fact.

\begin{fact}[Average codeword as channel output]
\label{fact:code-ensemble-as-channel-output}
    We have the identity $\rho_V^t = \calZ_V^t(\ketbra{+}{+}^{\otimes t})$.
\end{fact}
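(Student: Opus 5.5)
The identity follows by unfolding both sides from the definitions, so the plan is a direct computation.

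First I will observe that $\ket{\psi_v} = Z^v\ket{+}$. Indeed, $Z^v$ is diagonal with entries $\omega^{v_i}$, so
\[
Z^v\ket{+} = \tfrac{1}{\sqrt{n}}\sum_{i=1}^n \omega^{v_i}\ket{i} = \ket{\psi_v}.
\]
Tensoring gives $\ket{\psi_v}^{\otimes t} = (Z^v)^{\otimes t}\ket{+}^{\otimes t}$.

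Next I will take adjoints. Since $Z^v$ is unitary and diagonal with entries $\omega^{v_i}$, its adjoint is diagonal with entries $\overline{\omega^{v_i}} = \omega^{-v_i}$, so $(Z^v)^\dagger = Z^{-v}$. Hence
\[
\ketbra{\psi_v}{\psi_v}^{\otimes t} = (Z^v)^{\otimes t}\,\ketbra{+}{+}^{\otimes t}\,(Z^{-v})^{\otimes t}.
\]

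Finally, I will average both sides over $\bv$ drawn uniformly from $V$. The left side becomes $\rho_V^t$ by its definition, and the right side becomes $\calZ_V^t(\ketbra{+}{+}^{\otimes t})$ by \cref{def:code-dephasing}.

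None of these steps is a real obstacle. The only point needing care is the sign convention $(Z^v)^\dagger = Z^{-v}$, which holds because $\omega$ is a root of unity and exponents are read mod $q$.
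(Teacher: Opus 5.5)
Your proof is correct and is the same direct unfolding of definitions the paper relies on; the paper states this as a Fact without proof and later recalls $\ket{\psi_v} = Z^v\ket{+}$. Writing $\ketbra{\psi_v}{\psi_v}^{\otimes t} = (Z^v)^{\otimes t}\ketbra{+}{+}^{\otimes t}(Z^{-v})^{\otimes t}$ and averaging over $\bv \sim V$ is exactly the intended argument.
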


With this link at hand, we can now begin to describe the advantages of introducing this dephasing channel abstraction, the first being the following fact.

\begin{fact}[Code dephasing composition]
\label{fact:dephasing-channel-composition}
    For any two linear codes $V_1, V_2 \subseteq \F_q^n$, we have that $\calZ_{V_1}^t \circ \calZ_{V_2}^t = \calZ_{V_2}^t \circ \calZ_{V_1}^t = \calZ_{V_1 + V_2}^t$. In particular, for $\calZ_{\rm all}^t \coloneqq \calZ_{\F_q^n}^t$, we have that $\calZ_{\rm all}^t = \calZ_{\rm all}^t \circ \calZ_V^t = \calZ_V^t \circ \calZ_{\rm all}^t$.
\end{fact}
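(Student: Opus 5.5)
The plan is to verify the composition identity directly from \cref{def:code-dephasing}, by writing both sides as averages over unitary conjugations and using that $v \mapsto (Z^v)^{\otimes t}$ is a group homomorphism from $(\F_q^n,+)$ into the unitaries. Since $Z^v$ is diagonal with entries $\omega^{v_i}$, we have $Z^{v_1} Z^{v_2} = Z^{v_1+v_2}$ (exponents add modulo $q$, which is consistent because $\omega^q = 1$). Hence $(Z^{v_1})^{\otimes t}(Z^{v_2})^{\otimes t} = (Z^{v_1+v_2})^{\otimes t}$. Because all these operators are diagonal, they commute, so $Z^{-v_2}Z^{-v_1} = Z^{-(v_1+v_2)}$ as well.

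First I will compute
\[
\calZ_{V_1}^t \circ \calZ_{V_2}^t(\sigma) = \E_{\bv_1 \sim V_1}\E_{\bv_2 \sim V_2}\left[(Z^{\bv_1+\bv_2})^{\otimes t}\,\sigma\,(Z^{-(\bv_1+\bv_2)})^{\otimes t}\right].
\]
The key step is then to show that if $\bv_1,\bv_2$ are independent and uniform on $V_1$ and $V_2$, then $\bv_1+\bv_2$ is uniform on the subspace $V_1+V_2$. The addition map $V_1 \times V_2 \to V_1+V_2$ is a surjective group homomorphism, so every fiber is a coset of its kernel $\{(u,-u): u \in V_1\cap V_2\}$. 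All fibers therefore have equal size $|V_1 \cap V_2|$, and the pushforward of the uniform distribution is uniform. This yields $\calZ_{V_1+V_2}^t(\sigma)$.

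Applying the same argument with $V_1$ and $V_2$ swapped, and using $V_1+V_2=V_2+V_1$, gives the other order. For the ``in particular'' statement, I take $V_1=\F_q^n$ and note that $\F_q^n + V = \F_q^n$.

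There is no real obstacle here. The only point needing care is the uniformity of the sum distribution, which the equal-fiber argument handles.
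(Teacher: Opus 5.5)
Your proof is correct and complete. The identity $Z^{v_1}Z^{v_2}=Z^{v_1+v_2}$ reduces the composition to an average over $\bv_1+\bv_2$. Your equal-fiber argument (fibers are cosets of $\{(u,-u):u\in V_1\cap V_2\}$) correctly shows this sum is uniform on $V_1+V_2$, and the ``in particular'' clause follows from $\F_q^n+V=\F_q^n$.

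The paper states this as a Fact without proof, so there is no written argument to compare against. An equally short alternative is available from the entrywise computation in the proof of \cref{lemma:as-projection-channel}. That computation shows $\calZ_V^t$ multiplies the $(a,b)$ entry of $\sigma$ by $\mathbf{1}\bigl[\sum_j\delta_{a_j}-\sum_j\delta_{b_j}\in V^\perp\bigr]$. Composing two such channels multiplies the indicators, and $V_1^\perp\cap V_2^\perp=(V_1+V_2)^\perp$ gives the result. This route makes commutativity immediate but relies on the character identity and duality. Your route stays at the level of group averages and needs no duality at all.
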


Next, we have the following handy reformulation of the dephasing channel $\calZ_V^t$ as the following projective measurement channel.

\begin{lemma}[Dephasing as projections]
\label{lemma:as-projection-channel}
    For any linear code $V \le \F_q^n$ with generator matrix $M \in \F_q^{\ell \times n}$, define
    \begin{equation*}
        T_s^V \coloneqq \{(i_1, \ldots , i_t) \in [n]^t \; : \; M(\sum_{j=1}^t{\delta_{i_j}}) = s\}.\footnote{\rm Note that the definition of $\Pi_s^V$ is contingent on the choice of a generator matrix of $V$. Since the generator matrices will always be clear from context, we will for simplicity abuse notation and write $\Pi_s^V$ in place of $\Pi_s^M$.}
    \end{equation*}
    for every $s\in \F_q^\ell$ and let $\Pi_s^V$ be the projection matrix onto the subspace $\{\ket{\psi} \in \mathbb{C}^{n^t} :\, \mathrm{supp}(\ket{\psi}) \subseteq T_s^V\}$. Then, we have the following: for every $n^t$-dimensional state $\sigma$, we have    \begin{equation*}
        \calZ_V^t(\sigma) = \sum_{s \in \F_q^\ell}{\Pi_s^V\sigma\Pi_s^V}.
    \end{equation*}
\end{lemma}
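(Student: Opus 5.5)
We will prove the identity of \cref{lemma:as-projection-channel} by expanding both sides in the computational basis of $(\C^n)^{\otimes t}$ and checking that they agree entry by entry. By linearity of both sides in $\sigma$, it suffices to check the action on each matrix unit $\ketbra{i}{j}$, where $i=(i_1,\ldots,i_t)$ and $j=(j_1,\ldots,j_t)$ range over $[n]^t$.

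\textbf{Left-hand side.} The operator $(Z^v)^{\otimes t}$ is diagonal, with entry $\omega^{\sum_k v_{i_k}} = \omega^{\langle v, e(i)\rangle}$ on $\ket{i}$, where $e(i)\coloneqq \sum_{k=1}^t \delta_{i_k}\in\F_q^n$. Hence
\[
\calZ_V^t(\ketbra{i}{j}) = \E_{\bv\sim V}\big[\omega^{\langle \bv, e(i)-e(j)\rangle}\big]\,\ketbra{i}{j}.
\]
Write $\bv = M^{\top}\bu$ with $\bu$ uniform in $\F_q^\ell$. Since $M$ generates $V$, the map $u\mapsto M^\top u$ is onto $V$ with fibers of equal size, so this parametrization gives the uniform distribution on $V$. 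Then $\langle \bv, w\rangle = \langle \bu, Mw\rangle$, and the character-orthogonality identity $\E_{\bu}[\omega^{\langle \bu, y\rangle}] = \mathbf{1}\{y=0\}$ (valid since $q$ is prime and $\omega$ is a primitive $q$th root of unity) gives coefficient $\mathbf{1}\{M e(i) = M e(j)\}$.

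\textbf{Right-hand side.} Each $\Pi_s^V$ is the diagonal projector onto the basis vectors $\ket{i}$ with $i\in T_s^V$. Therefore
\[
\Pi_s^V\ketbra{i}{j}\Pi_s^V = \mathbf{1}\{i\in T_s^V\}\,\mathbf{1}\{j\in T_s^V\}\,\ketbra{i}{j}.
\]
The sets $T_s^V$ partition $[n]^t$, with $i$ lying in $T_{Me(i)}^V$. So summing over $s$ yields exactly $\mathbf{1}\{Me(i)=Me(j)\}\ketbra{i}{j}$, which matches the left-hand side. Extending linearly to arbitrary $\sigma$ completes the proof.

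The only point needing care is the uniformity of $M^\top\bu$ on $V$, in case $M$ has dependent rows. The surjective linear map $u\mapsto M^\top u$ has all fibers equal to cosets of its kernel, so the pushforward of the uniform distribution is uniform on $V$. Beyond this, the argument is a routine computation and I expect no real obstacle.
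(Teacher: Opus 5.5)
Your proof is correct and is essentially the paper's argument. Both compute the $(a,b)$ computational-basis entry of $\calZ_V^t(\sigma)$ and reduce it, via character orthogonality, to the indicator $\mathbf{1}\{M e(a) = M e(b)\}$, which is exactly the block structure of $\sum_s \Pi_s^V \sigma \Pi_s^V$. The only cosmetic differences are that you pull the average back to $\F_q^\ell$ through $v = M^\top u$, whereas the paper applies $\E_{\bv\sim V}[\omega^{\langle w,\bv\rangle}] = \mathbf{1}[w\in V^\perp]$ directly together with $V^\perp = \ker M$, and that you spell out the right-hand side, which the paper leaves implicit.
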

\begin{proof}
To prove this lemma, it suffices to show that for any two sequences $(a_1, \ldots , a_t), (b_1, \ldots , b_t) \in [n]^t$, we have that
\begin{equation*}
    \bra{a_1, \ldots , a_t} \calZ_V^t(\sigma) \ket{b_1, \ldots , b_t} = \sigma_{(a_1, \ldots , a_t), (b_1, \ldots , b_t)} \cdot 1\left[M(\sum_{i=1}^t{\delta_{a_i}}) = M(\sum_{i=1}^t{\delta_{b_i}})\right].
\end{equation*}
Indeed, using the identity $\E_{\bv \sim V}\left[\omega^{\langle  u,\bv \rangle}\right] = \mathbf{1}[u \in V^\perp]$ and the observation that $M$ forms the parity-check matrix of $V^\perp$, we have that
\begin{align*}
    \bra{a_1, \ldots , a_t} \calZ_V^t(\sigma) \ket{b_1, \ldots , b_t} &= \E_{\bv \sim V}\left[\bra{a_1, \ldots , a_t}(Z^v)^{\otimes t}\sigma (Z^{-v})^{\otimes t} \ket{b_1, \ldots , b_t}\right] \\
    &= \E_{\bv \sim V}\left[\omega^{\sum_{i=1}^t{v_{a_i}}} \bra{a_1, \ldots , a_t}\sigma \ket{b_1, \ldots , b_t}\omega^{-\sum_{i=1}^t{v_{b_i}}}\right] \\
    &= \sigma_{(a_1, \ldots , a_t), (b_1, \ldots , b_t)} \cdot \E_{\bv \sim V}\left[\omega^{\sum_{i=1}^t{v_{a_i}}-\sum_{i=1}^t{v_{b_i}}}\right] \\
    &= \sigma_{(a_1, \ldots , a_t), (b_1, \ldots , b_t)} \cdot \E_{\bv \sim V}\left[\omega^{\langle \sum_{i=1}^t{\delta_{a_i}}-\sum_{i=1}^t{\delta_{b_i}}, v\rangle}\right] \\
    &= \sigma_{(a_1, \ldots , a_t), (b_1, \ldots , b_t)} \cdot \mathbf{1}\left[\sum_{i=1}^t{\delta_{a_i}}-\sum_{i=1}^t{\delta_{b_i}} \in V^\perp \right] \\
    &= \sigma_{(a_1, \ldots , a_t), (b_1, \ldots , b_t)} \cdot \mathbf{1}\left[M(\sum_{i=1}^t{\delta_{a_i}}-\sum_{i=1}^t{\delta_{b_i}}) = 0 \right] \\
    &= \sigma_{(a_1, \ldots , a_t), (b_1, \ldots , b_t)} \cdot \mathbf{1}\left[M(\sum_{i=1}^t{\delta_{a_i}}) = M(\sum_{i=1}^t{\delta_{b_i}}) \right]. \qedhere
\end{align*}
\end{proof}

Now, using the projective channel measurement formulation, we give an explicit characterization of the spectrum of $\rho_V^t$, formulated in the following proposition.

\begin{proposition}[Outcome probabilities]
\label{prop:spectrum-formula}
    For any linear code $V \subseteq \F_q^n$ with generator matrix $M \in \F_q^{\ell \times n}$ and vector $s \in \F_q^\ell$, we have the identity $\norm{\Pi_s^V\ket{+}^{\otimes t}}_2^2 = \Pr[M\be_t = s]$.
\end{proposition}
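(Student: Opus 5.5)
The identity follows by expanding $\ket{+}^{\otimes t}$ in the computational basis of $(\C^n)^{\otimes t}$ and using the fact that $\Pi_s^V$ is a coordinate projection. We first write
\begin{equation*}
    \ket{+}^{\otimes t} = n^{-t/2}\sum_{(i_1,\ldots,i_t)\in[n]^t}\ket{i_1,\ldots,i_t}.
\end{equation*}
By its definition in \cref{lemma:as-projection-channel}, $\Pi_s^V$ is the diagonal projector onto the span of the basis vectors $\ket{i_1,\ldots,i_t}$ with $(i_1,\ldots,i_t)\in T_s^V$. Applying it therefore keeps exactly those terms and discards the rest, giving
\begin{equation*}
    \Pi_s^V\ket{+}^{\otimes t} = n^{-t/2}\sum_{(i_1,\ldots,i_t)\in T_s^V}\ket{i_1,\ldots,i_t}.
\end{equation*}

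Since the basis vectors are orthonormal, the squared norm of this vector is
\begin{equation*}
    \norm{\Pi_s^V\ket{+}^{\otimes t}}_2^2 = n^{-t}\,\abs{T_s^V}.
\end{equation*}

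Next we interpret this count probabilistically. In the error model, $\be_t = \sum_j \delta_{\ba_j}$, where $(\ba_1,\ldots,\ba_t)$ is uniform on $[n]^t$. Each tuple therefore has probability exactly $n^{-t}$, and this holds for ordered tuples with repeats as well. The event $M\be_t = s$ is precisely the event $(\ba_1,\ldots,\ba_t)\in T_s^V$. Hence
\begin{equation*}
    \Pr[M\be_t=s] = n^{-t}\abs{T_s^V},
\end{equation*}
which matches the norm computed above.

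The only point requiring care is that $T_s^V$ is a set of ordered $t$-tuples, not of error vectors. Different tuples can yield the same $\be_t$, but summing over tuples with uniform weight $n^{-t}$ is exactly how the law of $\be_t$ is defined, so no multiplicity correction is needed.
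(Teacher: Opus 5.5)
Your proof is correct and follows the paper's argument: both expand $\ket{+}^{\otimes t}$ in the computational basis, obtain $\norm{\Pi_s^V\ket{+}^{\otimes t}}_2^2 = n^{-t}\,\abs{T_s^V}$, and identify this with the probability that the uniform tuple $(\ba_1,\ldots,\ba_t)$ lies in $T_s^V$. Your remark that $T_s^V$ consists of ordered tuples, so no multiplicity correction is needed, is right.
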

\begin{proof}
    For every $s \in \F_q^\ell$, from the definition of $\Pi_s^V$ given in~\cref{lemma:as-projection-channel}, we see that
    \begin{align*}
        \norm{\Pi_s^V\ket{+}^{\otimes t}}_2^2 &= \sum_{(i_1, \ldots , i_t) \in [n]^t}n^{-t} \cdot \mathbf{1}[(i_1, \ldots , i_t) \in T_s] = \Pr[M\be_t = s]. \qedhere
    \end{align*}
\end{proof}

\section{Testing lower bounds}
\label{sec:testing-LBs}

In this section, we prove~\cref{thm:main}. We first begin in~\cref{subsec:indistinguishability} with showing that the decodability of $C^\perp$ against the error model $\calE_{n,q}^t$ implies that a uniform $C$-codeword phase state is information-theoretically indistinguishable from a uniformly random phase state. Next, in~\cref{subsec:testing-lbs}, we leverage this indistinguishability result to show testing lower bounds for such a linear code $C$. Then, in~\cref{subsec:low-deg-phase-testing-lbs}, we specialize the testing lower bound to several parameter regimes of the Reed--Muller code. Finally, we conclude in~\cref{subsec:deg-2-testing-lbs} with an $\Omega(m)$ lower bound for the $m$-variate degree-$2$ Reed--Muller code via a different approach than the one established in the preceding subsections. 

\subsection{Decodability implies indistinguishability}
\label{subsec:indistinguishability}

Let $C \subseteq \F_q^n$ be a linear code of dimension $k$ with generator matrix $G \in \F_q^{k \times n}$. In order to understand the maximum distinguishability between $\rho_C^t$ and $\rho_{\rm all}^t$, our goal in this subsection is to express the trace distance between the states $\rho_C^t$ and $\rho_{\rm all}^t$ in terms of the posterior distribution of $\be_t$ conditioned on $G\be_t$. To do so, we first recall the definition of the MAP (maximum \emph{a posteriori}) decoder.  

\begin{definition}[MAP successful decoding probability]
\label{def:map}
Let $V \subseteq \F_q^n$ be a linear code with parity-check matrix $P \in \F_q^{(n-\ell) \times n}$. Let $p_{\rm decode}^t(V)$ denote the success probability of the MAP decoder for the linear code $V$ against errors $\be_t \sim \calE_{n,q}^t$, which is the maximum achievable decoding success probability. Note that $p_{\rm decode}^t(V)$ admits the closed formula
\begin{equation*}
    p_{\rm decode}^t(V) = \sum_{s \in \F_q^{n-\ell}} \max_{x \in \F_q^n : Px = s} \{\calE_{n,q}^t(x)\}.
\end{equation*}
\end{definition}

Now, we can state the main result of this subsection.

\begin{theorem}[Decodability implies indistinguishability]
\label{thm:decodability-implies-indistinguishability}
    For every linear code $C \subseteq \F_q^n$, we have the~inequality
    \begin{equation*}
        \frac{1}{2}\norm{\rho_C^t-\rho_{\mathrm{all}}^t}_1 \le \sqrt{1 - p_{\rm decode}^t(C^\perp)^2}.
    \end{equation*}
\end{theorem}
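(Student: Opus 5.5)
The plan is to write both $\rho_C^t$ and $\rho_{\mathrm{all}}^t$ as block-diagonal in the same orthogonal decomposition, and to recognize each block as the pair $\ketbra{\sqrt{\calD}}{\sqrt{\calD}}$ versus $\diag(\calD)$ from \cref{def:vu}. Let $G \in \F_q^{k\times n}$ be a generator matrix of $C$. Then $G$ is a parity-check matrix of $C^\perp$, so by \cref{def:map}
\[
p^t_{\rm decode}(C^\perp)=\sum_{s}\max_{e:Ge=s}\Pr[\be_t=e].
\]

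\textbf{Step 1: the block decomposition.} By \cref{fact:code-ensemble-as-channel-output} and \cref{lemma:as-projection-channel},
\[
\rho_C^t=\sum_{s\in\F_q^k}\Pi^C_s\ketbra{+}{+}^{\otimes t}\Pi^C_s .
\]
Applying \cref{lemma:as-projection-channel} to $V=\F_q^n$ with generator matrix the identity gives
\[
\rho_{\mathrm{all}}^t=\sum_{e\in\F_q^n}\Pi_e\ketbra{+}{+}^{\otimes t}\Pi_e ,
\]
where $\Pi_e$ projects onto strings $(i_1,\dots,i_t)$ with $\sum_j\delta_{i_j}=e$. Since $T^C_s=\bigsqcup_{e:Ge=s}T_e$, each $\Pi_s^C$ is the sum of the $\Pi_e$ with $Ge=s$. (This refinement is also what \cref{fact:dephasing-channel-composition} expresses.)

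Write $p_s=\Pr[G\be_t=s]$ and let $\calD_s(e)=\Pr[\be_t=e\mid G\be_t=s]$ be the posterior. Let $\ket{\chi_e}$ denote the normalized uniform superposition over $T_e$. By \cref{prop:spectrum-formula}, $\Pi_e\ket{+}^{\otimes t}=\sqrt{\Pr[\be_t=e]}\,\ket{\chi_e}$. Hence
\[
\Pi_s^C\ket{+}^{\otimes t}=\sqrt{p_s}\sum_{e:Ge=s}\sqrt{\calD_s(e)}\ket{\chi_e}.
\]
Let $U_s$ be the isometry $\ket{e}\mapsto\ket{\chi_e}$ on $\{e:Ge=s\}$; it is an isometry because the sets $T_e$ are disjoint. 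Then the $s$-block of $\rho_C^t$ is $p_s U_s\ketbra{\sqrt{\calD_s}}{\sqrt{\calD_s}}U_s^\dagger$, and the $s$-block of $\rho_{\mathrm{all}}^t$ is $p_s U_s\diag(\calD_s)U_s^\dagger$. The blocks for different $s$ are supported on orthogonal subspaces, so the trace norm is additive over them and isometries preserve it. This yields
\[
\tfrac12\norm{\rho_C^t-\rho_{\mathrm{all}}^t}_1=\sum_s p_s\,\nu(\calD_s).
\]

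\textbf{Step 2: bounding each block.} By \cref{lemma:nu-formula}, $\nu(\calD_s)\le\sqrt{1-\sum_e\calD_s(e)^2}$. Set $m_s=\max_e\calD_s(e)$. Since $\sum_e\calD_s(e)^2\ge m_s^2$, each block satisfies $\nu(\calD_s)\le\sqrt{1-m_s^2}$.

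\textbf{Step 3: combining the blocks.} The function $x\mapsto\sqrt{1-x^2}$ is concave on $[0,1]$, so Jensen's inequality with weights $p_s$ gives
\[
\sum_s p_s\sqrt{1-m_s^2}\le\sqrt{1-\Big(\sum_s p_s m_s\Big)^2}.
\]
Finally, $\sum_s p_s m_s=\sum_s\max_{e:Ge=s}\Pr[\be_t=e]=p^t_{\rm decode}(C^\perp)$, which is the claimed bound.

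The main point needing care is Step 1: verifying that the blocks are simultaneously orthogonal for both states, and that the same isometry $U_s$ maps each $s$-block onto the $\sqrt{\calD}$-versus-$\diag(\calD)$ form. Once that block structure is in place, Steps 2 and 3 are routine.
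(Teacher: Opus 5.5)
Your proof is correct and is essentially the paper's argument. Step 1 is the paper's identity $\frac12\|\rho_C^t-\rho_{\mathrm{all}}^t\|_1=\E_{s\sim\calS}[\nu(\calE_s)]$, with the isometry $U_s$ made explicit, and Steps 2--3 reproduce its second lemma; the only difference is that you bound $\sum_e\calD_s(e)^2\ge m_s^2$ before using concavity of $x\mapsto\sqrt{1-x^2}$, whereas the paper first applies Jensen to the square root and then uses $\E[m_s^2]\ge(\E[m_s])^2$.
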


To prove this theorem, we introduce some definitions. Let $\calS$ be the distribution of $G\be_t$ over $\F_q^k$. For each $s \in \F_q^k$, define the set of vectors $X_s \coloneqq \{x \in \F_q^n : Gx = s\}$. Moreover, let $\calE_s$ be the distribution of $\be_t \sim \calE$ when conditioned on the event $G\be_t = s$. Note that $\supp(\calE_s) \subseteq X_s$. We first begin with the following lemma.

\begin{lemma}
\label{lemma:exact-trace-distance-vu}
We have the equality
\[
\frac{1}{2}\norm{\rho_C^t-\rho_{\mathrm{all}}^t}_1 = \E_{s \sim \calS}[\nu(\calE_s)],
\]
where $\nu(\cdot)$ is defined  in~\cref{def:vu} as $\nu(\calE_s) \coloneqq \frac{1}{2}\norm{\ketbra{\sqrt{\calE_s}}{\sqrt{\calE_s}}-\diag(\calE_s)}_1$.
\end{lemma}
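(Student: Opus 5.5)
We will write both states as outputs of dephasing channels and then block-diagonalize them using the projectors $\Pi_s^C$ from \cref{lemma:as-projection-channel}, taking the generator matrix $G$ of $C$ as the matrix $M$ there. By \cref{fact:code-ensemble-as-channel-output}, $\rho_C^t = \calZ_C^t(\ketbra{+}{+}^{\otimes t})$ and $\rho_{\rm all}^t = \calZ_{\rm all}^t(\ketbra{+}{+}^{\otimes t})$. By \cref{fact:dephasing-channel-composition}, $\rho_{\rm all}^t = \calZ_{\rm all}^t(\rho_C^t)$. By \cref{lemma:as-projection-channel}, $\rho_C^t = \sum_{s\in\F_q^k}\Pi_s^C\ketbra{+}{+}^{\otimes t}\Pi_s^C$, so $\rho_C^t$ is block diagonal with respect to the decomposition $\C^{n^t}=\bigoplus_s \mathrm{range}(\Pi_s^C)$. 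Since $\calZ_{\rm all}^t$ only conjugates by diagonal unitaries, it preserves each block, and $\rho_{\rm all}^t$ is block diagonal in the same decomposition. Hence
\[
\tfrac12\norm{\rho_C^t-\rho_{\rm all}^t}_1=\sum_s \tfrac12\norm{\Pi_s^C\ketbra{+}{+}^{\otimes t}\Pi_s^C-\calZ_{\rm all}^t(\Pi_s^C\ketbra{+}{+}^{\otimes t}\Pi_s^C)}_1 .
\]

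Next we identify each block with the distribution $\calE_s$. Write $\ket{\phi_s}=\Pi_s^C\ket{+}^{\otimes t}$. By \cref{prop:spectrum-formula}, $\norm{\ket{\phi_s}}_2^2=\calS(s)$. Group the sequences $(i_1,\ldots,i_t)\in T_s^C$ according to the error vector $x=\sum_j\delta_{i_j}\in X_s$, and let $\ket{u_x}$ be the normalized uniform superposition over the sequences mapping to $x$. Then $\ket{\phi_s}=\sum_{x\in X_s}\sqrt{\calE(x)}\,\ket{u_x}=\sqrt{\calS(s)}\sum_x\sqrt{\calE_s(x)}\ket{u_x}$.

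Applying \cref{lemma:as-projection-channel} with $V=\F_q^n$ and $M=I$, the channel $\calZ_{\rm all}^t$ projects onto the subspaces indexed by the value of $x$. It therefore maps $\ketbra{\phi_s}{\phi_s}$ to $\calS(s)\sum_x \calE_s(x)\ketbra{u_x}{u_x}$. The vectors $\ket{u_x}$ are orthonormal, so the isometry $\ket{x}\mapsto\ket{u_x}$ carries $\ketbra{\sqrt{\calE_s}}{\sqrt{\calE_s}}-\diag(\calE_s)$ to the $s$-th block divided by $\calS(s)$. The trace norm is invariant under this isometry, so the $s$-th block contributes exactly $\calS(s)\,\nu(\calE_s)$. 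Summing over $s$ gives $\E_{s\sim\calS}[\nu(\calE_s)]$.

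The main obstacle is the careful justification of these two identities: first, that $\calZ_{\rm all}^t$ respects the $s$-block decomposition, which holds because $G$ applied to a fixed $x$ determines $s$, so the $x$-partition refines the $s$-partition; and second, that the $\ket{u_x}$ are orthonormal with the correct amplitudes $\sqrt{\calE(x)}$. The second point is a direct counting check: $\calE(x)=|\{(i_1,\ldots,i_t):\sum_j\delta_{i_j}=x\}|/n^t$.
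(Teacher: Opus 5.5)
Your proposal is correct and is essentially the paper's proof. Both arguments split $\rho_C^t-\rho_{\rm all}^t$ into orthogonal blocks indexed by the syndrome $s$: the paper uses $\Pi_s^C=\sum_{x\in X_s}\Pi_x^{\rm all}$ directly, and your commutation of $\calZ_{\rm all}^t$ with the diagonal projectors amounts to the same refinement. Each normalized block is then identified with $\ketbra{\sqrt{\calE_s}}{\sqrt{\calE_s}}-\diag(\calE_s)$ via the orthonormal vectors $\Pi_x^{\rm all}\ket{+}^{\otimes t}/\sqrt{\calE(x)}$, which are exactly your $\ket{u_x}$. One small point to make explicit: syndromes with $\calS(s)=0$ and vectors with $\calE(x)=0$ contribute nothing, so $\ket{u_x}$ only needs to be defined on $\supp(\calE_s)$.
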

\begin{proof}
    Following the definition of $\Pi_s^V$ given in~\cref{lemma:as-projection-channel}, for any $s \in \F_q^k$, observe that
    \begin{equation*}
        \Pi_s^C = \sum_{x \in X_s}{\Pi_x^{\rm all}} .
    \end{equation*}
    Thus, using~\cref{fact:code-ensemble-as-channel-output} and the pairwise orthogonality of the $\Pi_s^V$'s from~\cref{lemma:as-projection-channel}, we see that
    \begin{align*}
        \frac{1}{2}\norm{\rho_C^t-\rho_{\mathrm{all}}^t}_1 &= \frac{1}{2}\NORM{\calZ_C^t(\ketbra{+}{+}^{\otimes t})-\calZ_{\rm all}^t(\ketbra{+}{+}^{\otimes t})}_1 \\
        &= \frac{1}{2}\NORM{\sum_{s \in \F_q^k}{\Pi_s^C\ketbra{+}{+}^{\otimes t}\Pi_s^C} - \sum_{x \in \F_q^n}{\Pi_x^{\rm all}\ketbra{+}{+}^{\otimes t}}\Pi_x^{\rm all}}_1 \\
        &= \frac{1}{2}\NORM{\sum_{s \in \F_q^k}{\left(\Pi_s^C\ketbra{+}{+}^{\otimes t}\Pi_s^C - \sum_{x \in X_s}{\Pi_x^{\rm all}\ketbra{+}{+}^{\otimes t}\Pi_x^{\rm all}}\right)}}_1 \\
        &= \sum_{s \in \F_q^k}{\frac{1}{2}\NORM{\Pi_s^C\ketbra{+}{+}^{\otimes t}\Pi_s^C - \sum_{x \in X_s}{\Pi_x^{\rm all}\ketbra{+}{+}^{\otimes t}\Pi_x^{\rm all}}}}_1 \\
        &= \sum_{s \in \F_q^k}{\calS(s) \cdot \frac{1}{2}\NORM{\frac{1}{\calS(s)}\Pi_s^C\ketbra{+}{+}^{\otimes t}\Pi_s^C - \frac{1}{\calS(s)}\sum_{x \in X_s}{\Pi_x^{\rm all}\ketbra{+}{+}^{\otimes t}\Pi_x^{\rm all}}}}_1 .
    \end{align*}
    Now, by~\cref{prop:spectrum-formula}, we know that $\frac{1}{\calS(s)}\Pi_s^C\ketbra{+}{+}^{\otimes t}\Pi_s^C$ forms a pure state. Moreover, by~\cref{prop:spectrum-formula}, for any $x \in X_s$, we also have that
    \begin{equation*}
        \tr\left(\frac{1}{\calS(s)} {\Pi_x^{\rm all}\ketbra{+}{+}^{\otimes t}}\Pi_x^{\rm all}\right) = \frac{1}{\calS(s)} \cdot \NORM{\Pi_x^{\rm all}\ket{+}^{\otimes t}}_2^2 = \frac{1}{\calS(s)} \cdot \Pr[\be_t = x] = \calE_s(x).
    \end{equation*}
    Since $\Pi_s^C = \sum_{x \in X_s}{\Pi_x^{\rm all}}$, we can write
    \begin{equation*}
        \frac{1}{\sqrt{\calS(s)}}\Pi_s^C\ket{+}^{\otimes t} = \sum_{x \in X_s}{\sqrt{\calE_s(x)} \cdot \frac{1}{\sqrt{\calE(x)}}\Pi_x^{\rm all}\ket{+}^{\otimes t}}.
    \end{equation*}
    Note that $\frac{1}{\sqrt{\calE(x)}}\Pi_x^{\rm all}\ket{+}^{\otimes t}$ is a pure state by~\cref{prop:spectrum-formula}. Therefore, by applying~\cref{def:vu} for $\calD = \calE_s$, we conclude that 
    \begin{equation*}
        \frac{1}{2}\NORM{\frac{1}{\calS(s)}\Pi_s^C\ketbra{+}{+}^{\otimes t}\Pi_s^C - \frac{1}{\calS(s)}\sum_{x \in X_s}{\Pi_x^{\rm all}\ketbra{+}{+}^{\otimes t}\Pi_x^{\rm all}}}_1 = \nu(\calE_s),
    \end{equation*}
    which finishes the proof.
\end{proof}

Next, we prove that being able to correct errors in the dual code causes $\E_{s \sim \calS}[\nu(\calE_s)]$ to be small.

\begin{lemma}
\label{lemma:dual-decodability-ub}
For any linear code $C \subseteq \F_q^n$, we have 
$\E_{s \sim \calS}[\nu(\calE_s)] \leq \sqrt{1 - p_{\rm decode}^t(C^\perp)^2}$.
\end{lemma}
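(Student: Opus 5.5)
We combine \cref{lemma:nu-formula} with Jensen's inequality, and then identify the quantity that appears with the MAP success probability of $C^\perp$.

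First, by \cref{lemma:nu-formula} applied to each conditional distribution $\calE_s$, we have $\nu(\calE_s) \le \sqrt{1-\sum_{x}\calE_s(x)^2}$. Since $\sum_x \calE_s(x)^2 \ge \max_x \calE_s(x)^2$, this gives $\nu(\calE_s)\le \sqrt{1-(\max_{x\in X_s}\calE_s(x))^2}$. The function $u\mapsto\sqrt{1-u^2}$ is concave on $[0,1]$, so Jensen's inequality over $s\sim\calS$ yields
\[
\E_{s\sim\calS}[\nu(\calE_s)] \le \sqrt{1-\Bigl(\E_{s\sim\calS}\bigl[\max_{x\in X_s}\calE_s(x)\bigr]\Bigr)^2}.
\]
Alternatively, we could use $u\mapsto\sqrt{1-u}$ together with $\E[\max^2]\ge(\E\max)^2$.

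Second, we compute the expectation:
\[
\E_{s\sim\calS}\bigl[\max_{x\in X_s}\calE_s(x)\bigr]=\sum_{s}\calS(s)\max_{x:Gx=s}\frac{\calE(x)}{\calS(s)}=\sum_{s\in\F_q^k}\max_{x:Gx=s}\calE(x).
\]
Here $\calE_s(x)=\calE(x)/\calS(s)$ on $X_s$, and the sum runs over $s$ with $\calS(s)>0$; terms with $\calS(s)=0$ contribute zero anyway. The key observation is that the generator matrix $G$ of $C$ is a parity-check matrix of $C^\perp$, since $C^\perp=\{y: Gy=0\}$. Hence, by \cref{def:map}, this sum is exactly $p^t_{\rm decode}(C^\perp)$, with $P=G$ and $n-\ell=k$.

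The only subtle point is this identification of $G$ as the parity-check matrix for the dual code, and checking that the indexing in \cref{def:map} matches. This requires $G$ to have full row rank $k$, which holds for a generator matrix. With that in place, substituting into the Jensen bound gives the claim. None of the steps presents a real obstacle.
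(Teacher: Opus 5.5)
Your proof is correct and follows the paper's argument: \cref{lemma:nu-formula}, the bound $\sum_x \calE_s(x)^2 \ge \max_x \calE_s(x)^2$, a concavity/Jensen step, and the identification of $\sum_s \max_{x:Gx=s}\calE(x)$ with $p^t_{\rm decode}(C^\perp)$ because $G$ is a parity-check matrix of $C^\perp$. The only difference is the order of steps: the paper applies Jensen to $u\mapsto\sqrt{1-u}$ first and then uses $\E[\max^2]\ge(\E\max)^2$, which is exactly the alternative you mention.
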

\begin{proof}
By \cref{lemma:nu-formula}, Jensen's inequality, and linearity of expectation, we have that 
\begin{align*}
   \E_{s \sim \calS}[\nu(\calE_s)] 
   \le \E_{s \sim \calS}\left[\sqrt{1 - \sum_{x \in \F_q^n} \calE_s(x)^2}\right]
   \leq \sqrt{1 - \sum_{x \in \F_q^n} \E_{s \sim \calS}[\calE_s(x)^2]}. 
\end{align*}
Now, because $G$ is the generator matrix of $C$, it is additionally the parity-check matrix of $C^\perp$. Therefore, using the formula for $p_{\rm decode}^t(C^\perp)$ given in~\cref{def:map}, we conclude that 
\begin{align*}
\sum_{x \in \F_q^n} \E_{s \sim \calS}[\calE_s(x)^2] 
&= \sum_{x \in \F_q^n} \sum_{s \in \F_q^k} \calS(s) \calE_s(x)^2 \\
&\ge \sum_{s \in \F_q^k} \calS(s) (\max_{x \in \F_q^n}\{\calE_{s}(x)\})^2  \\
&\ge \left(\sum_{s \in \F_q^k} \calS(s) \max_{x \in \F_q^n} \calE_{s}(x)\right)^2  \\
&= p_{\rm decode}^t(C^\perp)^2 . \qedhere
\end{align*}
\end{proof}

\noindent Now, by directly combining~\cref{lemma:exact-trace-distance-vu} and~\cref{lemma:dual-decodability-ub}, we  get $  \frac{1}{4}\norm{\rho_C^t-\rho_{\mathrm{all}}^t}_1^2 \le {1 - p_{\rm decode}^t(C^\perp)^2}$, immediately concluding~\cref{thm:decodability-implies-indistinguishability}.

\subsection{Testing lower bounds from indistinguishability}
\label{subsec:testing-lbs}

With~\cref{thm:decodability-implies-indistinguishability} now established, in this subsection, we will present our testing lower bounds for $t$ copies of a codeword state whenever $C^\perp$ is decodable against $t$ random errors. Now, for $v \in \F_q^n$, recall the definition $\ket{\psi_v} \coloneqq Z^v \ket{+}$ from~\cref{subsec:avg-codeword-states}. Every linear code $C$ induces the following property of quantum states 
\[
\calP_C \coloneqq \{\ket{\psi_c} : c \in C\}.
\]
With these definitions at hand, we can now state the main theorem of this subsection.

\begin{theorem}
\label{thm:tester-LB}
Let $C \subseteq \F_q^n$ be a linear code, and let $\gamma \coloneqq \sqrt{1 - p^t_{\rm dec}(C^\perp)^2} + \frac{t(t-1)}{n}$.
Fix $\eps<1-\frac{\pi}{4}$ and $\delta>0$. For all sufficiently large $n$, if
\[
\gamma  \le \frac{1}{3}-\delta,
\]
then every $\eps$-tester for $\mathcal P_C$ requires at least $t+1$ copies.
\end{theorem}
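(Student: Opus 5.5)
We argue by contradiction. Suppose some $\eps$-tester $\{M, I-M\}$ for $\mathcal P_C$ uses $t$ copies (more copies can only help, so ruling out exactly $t$ suffices). We evaluate its acceptance probability on two ensembles. On the YES side, every $\ket{\psi_c}$ with $c \in C$ lies in $\mathcal P_C$, so averaging gives $\tr(M\rho_C^t) \ge 2/3$. On the NO side we use $\rho_{\rm all}^t$, the average over $\ket{\psi_{\bx}}$ with $\bx \sim \F_q^n$ uniform. We must show that, except on a vanishing fraction of $\bx$, the state $\ket{\psi_{\bx}}$ is $\eps$-far from $\mathcal P_C$, so that $\tr(M\rho_{\rm all}^t) \le 1/3 + o(1)$. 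Then Helstrom and \cref{thm:decodability-implies-indistinguishability} give
\[
\tfrac13 - o(1) \le \tr(M(\rho_C^t - \rho_{\rm all}^t)) \le \tfrac12\norm{\rho_C^t-\rho_{\rm all}^t}_1 \le \sqrt{1-p^t_{\rm dec}(C^\perp)^2}.
\]
This contradicts $\gamma \le 1/3 - \delta$ once $n$ is large enough that the $o(1)$ term is below $\delta$.

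The $t(t-1)/n$ term in $\gamma$ suggests a cleaner route than a crude farness bound. I would replace $\rho_{\rm all}^t$ by the Haar moment $\rho_{\rm Haar}^t$, or equivalently compare with the symmetric-subspace average. The known estimate that $\rho_{\rm all}^t$ is $O(t^2/n)$-close to $\rho_{\rm Haar}^t$ (\cref{appendix:random-fq}) should supply exactly the $t(t-1)/n$ slack; it comes from the mass on collision tuples $i_j = i_k$. Farness of Haar-random states is then easier to establish. For a fixed $\ket{\psi_c}$, we have $|\braket{\phi}{\psi_c}|^2 \sim \mathrm{Beta}(1,n-1)$, so $\Pr[|\braket{\phi}{\psi_c}|^2 \ge 1-\eps] = \eps^{n-1}$. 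A union bound over $|C| \le q^n$ codewords fails when $q\eps > 1$, so I would not rely on it alone.

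The main obstacle is this farness step, and it is where the hypothesis $\eps < 1 - \pi/4$ enters. My plan is to work directly with random phase states: for $\bx$ uniform and any $c$, the overlap $\braket{\psi_c}{\psi_{\bx}} = \frac1n\sum_i \omega^{x_i - c_i}$ is an average of $n$ i.i.d.\ uniform $q$-th roots of unity. Maximizing over $c \in \F_q^n$ amounts to rotating each $\omega^{x_i}$ by an arbitrary root of unity, and the best achievable value is governed by a mean-of-angles quantity. The cleanest version is the continuous relaxation with uniform phases on the circle, where the expected best alignment of a single unit vector over a half-plane gives $|\mathbb E[\cos\theta \mid |\theta| \le \pi/2]|$. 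I expect the constant $\pi/4$ to arise as the maximal fidelity attainable on average by projecting random phases onto structured families. Concretely, I would show that
\[
\max_{c \in C}|\braket{\psi_c}{\psi_{\bx}}|^2 \le \pi/4 + o(1)
\]
with high probability. To do this I would bound the overlap by a supremum of a sum of $n$ independent bounded terms, use McDiarmid concentration around its expectation, and bound the expectation by $\pi/4$ via the continuous-angle computation, using linearity of $C$ to restrict to a structured maximizer. Given that, the infidelity exceeds $1 - \pi/4 > \eps$, so $\ket{\psi_{\bx}}$ is $\eps$-far with probability $1-o(1)$. This finishes the contradiction argument above.
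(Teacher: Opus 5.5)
\textbf{The gap is the farness step.} It cannot be repaired as long as your NO ensemble is $\rho_{\rm all}^t$. Your plan needs $\max_{c\in C}|\braket{\psi_c}{\psi_{\bx}}|^2\le \pi/4+o(1)$ with high probability over uniform $\bx\in\F_q^n$. The theorem, however, is stated for every linear code $C$, and this bound fails for dense codes that satisfy the hypothesis.

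Take $C=\F_q^n$. Then $C^\perp=\{0\}$, $p^t_{\rm decode}(C^\perp)=1$ and $\gamma=t(t-1)/n$, so the hypothesis holds for $t$ up to order $\sqrt n$. Yet every $\ket{\psi_{\bx}}$ lies in $\mathcal{P}_C$. Here $\rho_C^t=\rho_{\rm all}^t$, so no NO ensemble built from $\rho_{\rm all}^t$ can possibly work. More generally, if $\dim C=k$, taking $c\in C$ that agrees with $\bx$ on an information set gives $|\braket{\psi_c}{\psi_{\bx}}|\ge 2k/n-1$ for every $\bx$.

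Your heuristic for $\pi/4$ is also misplaced. A phase state has all amplitudes of modulus exactly $1/\sqrt n$, so it is itself a flat state, and phase misalignment alone cannot keep it away from a rich family of flat states. In any case, your half-plane computation gives $(2/\pi)^2$, not $\pi/4$. Note too that your route never uses the $t(t-1)/n$ term in $\gamma$; that term is there precisely because the comparison has to be made against Haar.

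\textbf{The missing idea} is that farness must come from the moduli of the amplitudes. This is why the NO ensemble should be Haar, which is the route you mention in your second paragraph and then abandon.

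First, combine \cref{thm:decodability-implies-indistinguishability} with $\mathrm{D}_{\rm tr}(\rho_{\rm all}^t,\rho_{\rm Haar}^t)\le t(t-1)/n$ to get $\mathrm{D}_{\rm tr}(\rho_C^t,\rho_{\rm Haar}^t)\le\gamma$. Then bound the Haar measure of the $\eps$-neighborhood of the code-independent superset $\mathcal{P}_{\rm flat}\supseteq\mathcal{P}_C$ of all states with flat moduli. This removes any need for a union bound over $C$.

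For $\ket\psi=\sum_i\alpha_i\ket i$ one has $\max_{\ket\theta\in\mathcal{P}_{\rm flat}}|\braket{\psi}{\theta}|^2=L(\ket\psi)^2$ with $L(\ket\psi)=n^{-1/2}\sum_i|\alpha_i|$. Three facts then give the bound:
\begin{itemize}
\item $|\alpha_1|^2\sim\mathrm{Beta}(1,n-1)$;
\item $\mathbb{E}L=\sqrt n\,\Gamma(3/2)\Gamma(n)/\Gamma(n+1/2)\to\sqrt\pi/2$;
\item $L$ is $1$-Lipschitz.
\end{itemize}
So L\'evy's lemma gives $\Pr[L^2\ge 1-\eps]\le 2e^{-c_\eps n}$ exactly when $\eps<1-\pi/4$.

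Averaging the acceptance probability over Haar then yields $\frac23-\gamma\le\tr(M\rho_{\rm Haar}^t)\le\frac13+\frac43e^{-c_\eps n}$. This contradicts $\gamma\le\frac13-\delta$ for large $n$. Testers with fewer than $t$ copies are handled by monotonicity of trace distance under partial trace, or by padding as you note.
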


To prove this theorem, we first show that random $C$-codeword phase states form approximate state $t$-designs. 
We then show that this state design property, along with the fact that the neighborhood around $C$-codeword phase states has negligible Haar measure,  implies the property testing lower bound. 
We establish some notation and basic facts to formalize this.
Let $\mathrm{Haar}$ denote the Haar measure on the unit sphere $\C^n$.
Define $\rho_{\rm Haar}^t \coloneqq  \mathbb{E}_{\ket{\bm{\vartheta}} \sim \mathrm{Haar}} \left[\, \ket{\bm{\vartheta}}\!\bra{\bm{\vartheta}}^{\otimes t} \,\right]$. 

\begin{definition}[$\eps$-approximate state $t$-design]
    Let $\eps \in [0,1]$, $t \in \N$, and let $\calQ$ be a distribution over $n$-qudit states.
    Let $\rho_\calQ^t \coloneqq \E_{\ket\psi \sim \calQ}[\ketbra{\psi}{\psi}^{\otimes t}]$.
    We say that $\calQ$ is an $\eps$-approximate state $t$-design if 
    \[
    \mathrm{D}_{\rm tr}(\rho_\calQ^t, \rho_{\rm Haar}^t) \le \eps.
    \]
\end{definition}

The next theorem shows that a random $\F_q$ phase state is close to $\rho_{\rm Haar}^t$. 
The case $q=2$ was established by Brakerski and Shmueli~\cite{brakerski10.1007/978-3-030-36030-6_10}; see also \cite{ananth2022pseudorandom}. We extend this result to an arbitrary prime $q$. For completeness, we provide the proof in \cref{appendix:random-fq}.

\begin{restatable}{theorem}{randomfqphase}
\label{thm:random-fq-phase}
We have the inequality 
\[
\mathrm{D}_{\rm tr}(\rho_{\rm all}^t, \rho_{\rm Haar}^t) \le \frac{t(t-1)}{n}.
\]
\end{restatable}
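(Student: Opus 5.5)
We prove \cref{thm:random-fq-phase} by passing through the symmetric subspace. Recall that $\rho_{\rm Haar}^t = \Pi_{\rm sym}/\binom{n+t-1}{t}$, where $\Pi_{\rm sym}$ is the projector onto the symmetric subspace of $(\C^n)^{\otimes t}$. By \cref{lemma:as-projection-channel} (with $V=\F_q^n$ and generator matrix the identity) together with \cref{fact:code-ensemble-as-channel-output}, we have
\[
\rho_{\rm all}^t=\sum_{x\in\F_q^n}\Pi^{\rm all}_x\ketbra{+}{+}^{\otimes t}\Pi^{\rm all}_x .
\]
Each block $\Pi^{\rm all}_x\ket{+}^{\otimes t}$ is $n^{-t/2}$ times the sum of $\ket{i_1,\dots,i_t}$ over all tuples with $\sum_j\delta_{i_j}\equiv x \pmod q$.

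The plan is to compare both states on the \emph{collision-free} tuples, meaning those with distinct $i_1,\dots,i_t$. Let $\Pi_{\rm dist}$ project onto the span of collision-free tuples. For such a tuple the vector $\sum_j\delta_{i_j}$ is a $0/1$ vector of weight exactly $t$. It determines the multiset $\{i_1,\dots,i_t\}$, and conversely two collision-free tuples give the same $x$ iff they are permutations of one another. This holds even mod $q$, since the entries are only $0$ and $1$.

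Consequently, for each weight-$t$ set $S$, the restricted block is $n^{-t/2}\sqrt{t!}\,\ket{\mathrm{sym}_S}$, where $\ket{\mathrm{sym}_S}$ is the normalized uniform superposition over orderings of $S$. Define
\[
\sigma \coloneqq \binom{n}{t}^{-1}\sum_S \ketbra{\mathrm{sym}_S}{\mathrm{sym}_S}.
\]
Then $\rho_{\rm all}^t \ge (t!\binom{n}{t}/n^t)\,\sigma$, and the remainder is PSD. It is supported on blocks $x$ of weight $<t$, hence orthogonal to $\sigma$, and has trace $1-\Pr[\text{distinct}]$. This gives
\[
\mathrm{D}_{\rm tr}(\rho_{\rm all}^t,\sigma)\le 1-\frac{n(n-1)\cdots(n-t+1)}{n^t}\le \binom{t}{2}/n
\]
by a union bound over pairs.

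Similarly, $\Pi_{\rm sym}$ decomposes over occupation-number types as $\sum_{\text{types}}\ketbra{\mathrm{sym}_{\text{type}}}{\mathrm{sym}_{\text{type}}}$. The collision-free types are exactly the $\binom{n}{t}$ vectors $\ket{\mathrm{sym}_S}$. Hence $\rho_{\rm Haar}^t$ is uniform over $\binom{n+t-1}{t}$ orthonormal vectors, of which $\sigma$ uses the $\binom nt$ collision-free ones. This yields
\[
\mathrm{D}_{\rm tr}(\rho_{\rm Haar}^t,\sigma)=1-\binom{n}{t}\Big/\binom{n+t-1}{t}=1-\prod_{j=0}^{t-1}\frac{n-j}{n+j}\le\sum_j \frac{2j}{n+j}\le \frac{t(t-1)}{n}.
\]

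The triangle inequality then only yields $\tfrac{3}{2}\,t(t-1)/n$, so the main obstacle is matching the exact constant. To get $t(t-1)/n$ we avoid the triangle inequality and compare directly. Both states are block diagonal, with one-dimensional blocks on the $\ket{\mathrm{sym}_S}$ plus a remainder. On the collision-free part, $\rho_{\rm all}$ has weight $p_1=t!\binom nt/n^t$ and $\rho_{\rm Haar}$ has weight $p_2=\binom nt/\binom{n+t-1}t$, with $p_2\ge p_1$; on this part both are proportional to $\sigma$. The trace distance is therefore at most
\[
\tfrac12\big[(p_2-p_1)+(1-p_1)+(1-p_2)\big]=1-p_1\le \binom t2/n \le t(t-1)/n,
\]
which is even stronger than claimed. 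The one point we must verify carefully is $p_2\ge p_1$, i.e.\ $n^t\ge \binom{n+t-1}{t}\,t!=n(n+1)\cdots(n+t-1)$. This fails for $t\ge2$, so in fact $p_1\ge p_2$ and the bound becomes $1-p_2\le t(t-1)/n$ by the product estimate above. Either way we obtain
\[
\max(1-p_1,\,1-p_2)\le t(t-1)/n,
\]
which proves the theorem.
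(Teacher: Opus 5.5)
Your final argument is correct and is essentially the paper's proof. Both ensembles are block diagonal with respect to the collision-free subspace and restrict there to the same normalized state, with weights $p_{\rm all}=(n)_t/n^t \ge p_{\rm Haar}=(n)_t/(n(n+1)\cdots(n+t-1))$. The triangle inequality then gives $\mathrm{D}_{\rm tr}\le 1-p_{\rm Haar}$, which the product estimate bounds by $t(t-1)/n$. Your use of the projection-channel blocks and the type basis of $\mathrm{Sym}^t(\mathbb{C}^n)$, in place of explicit matrix entries and $\sum_\pi U_\pi$, is only cosmetic.

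In a write-up, drop the detour through $\sigma$ and the retracted claim $p_2\ge p_1$. Also note separately that the case $t>n$ is trivial, since there $\binom{n}{t}=0$ and your $\sigma$ is undefined.
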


We can combine \cref{thm:decodability-implies-indistinguishability,thm:random-fq-phase} to show that any average $C$-codeword state is close to Haar-random, as long as $p_{\rm decode}^t(C^\perp)$ is large.

\begin{corollary}
\label{lemma:general-c-state-design}
We have the inequality
\[
\mathrm{D}_{\mathrm{tr}}(\rho_{C}^t, \rho_{\rm Haar}^t) \leq \sqrt{1 - p_{\rm decode}^t(C^\perp)^2} +  \frac{t(t-1)}{n} \eqqcolon \gamma.
\]
Therefore, the uniform distribution over $C$-codeword states is a $\gamma$-approximate $t$-design.
\end{corollary}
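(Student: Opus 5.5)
The plan is to apply the triangle inequality for the trace distance, with $\rho_{\mathrm{all}}^t$ as the intermediate state:
\[
\mathrm{D}_{\mathrm{tr}}(\rho_C^t,\rho_{\rm Haar}^t) \le \mathrm{D}_{\mathrm{tr}}(\rho_C^t,\rho_{\mathrm{all}}^t) + \mathrm{D}_{\mathrm{tr}}(\rho_{\mathrm{all}}^t,\rho_{\rm Haar}^t).
\]
Since $\mathrm{D}_{\mathrm{tr}}(\rho,\sigma)=\frac12\norm{\rho-\sigma}_1$ is a metric on density matrices, this inequality is immediate. The first term is bounded by \cref{thm:decodability-implies-indistinguishability}, which gives exactly $\frac12\norm{\rho_C^t-\rho_{\mathrm{all}}^t}_1 \le \sqrt{1-p_{\rm decode}^t(C^\perp)^2}$. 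The second term is bounded by \cref{thm:random-fq-phase}, which gives $\mathrm{D}_{\rm tr}(\rho_{\rm all}^t,\rho_{\rm Haar}^t)\le t(t-1)/n$. Adding the two bounds yields $\gamma$.

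For the design conclusion, observe that $\rho_C^t$ is by definition $\rho_\calQ^t$ when $\calQ$ is the uniform distribution over $\{\ket{\psi_c}: c\in C\}$. Here we sample $\bc\sim C$ uniformly, and if distinct codewords happen to give the same state, the multiplicities are still accounted for correctly by the expectation. So the bound just proved is precisely the definition of a $\gamma$-approximate state $t$-design.

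None of these steps should be a real obstacle, since both ingredients are stated earlier in the paper. The one point to check is that the two theorems refer to the same Hilbert space $\C^n$ and the same $t$-fold tensor power. They do: in both, $n$ is the block length, $\ket{+}$ lives in $\C^n$, and the Haar measure is on the unit sphere of $\C^n$. Hence the triangle inequality applies directly.
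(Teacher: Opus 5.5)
Your proposal is correct and matches the paper's proof: the paper also applies the triangle inequality through $\rho_{\mathrm{all}}^t$, bounding the first term by \cref{thm:decodability-implies-indistinguishability} and the second by \cref{thm:random-fq-phase}. The design conclusion then follows directly from the definition, as you say.
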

\begin{proof}
    Combine \cref{thm:decodability-implies-indistinguishability,thm:random-fq-phase} and apply the triangle inequality.
\end{proof}

Although our focus is on property testing lower bounds, \cref{lemma:general-c-state-design} may be of independent interest. In \cref{cor:low-degree-design}, we show that low-degree phase states yield highly accurate designs for moments much larger than the degree. We leave a broader investigation of which codes give rise to high-quality designs, as well as potential applications to quantum pseudorandomness and cryptography, to future work.

\cref{lemma:general-c-state-design} alone is insufficient to obtain property testing lower bounds. Indeed, the set of all pure states forms an exact design, yet membership in this set is trivial to test. 
To obtain lower bounds, we need an additional property: the $\eps$-neighborhood of $\calP_C$ must also have negligible Haar measure.
For a property $\calP$, we quantify this with $\eta_\eps(\calP) \coloneqq \Pr_{\ket{\psi}\sim \mathrm{Haar}(\C^n)}[\max_{\ket\phi \in \calP} |\braket{\psi}{\phi}|^2 \ge 1 - \eps]$.

\begin{lemma}
\label{lemma:tester-lb}
Suppose 
$\mathrm{D}_{\mathrm{tr}}(\rho_{C}^t, \rho_{\rm Haar}^t) \leq \gamma$. 
If 
\[
\eta_\eps(\calP_C) < \frac{1}{2} - \frac{3\gamma}{2}, 
\]
then every $\eps$-tester for $\calP_C$ requires at least $t+1$ copies.
\end{lemma}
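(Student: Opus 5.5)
We argue by contradiction. Suppose $\{M, I-M\}$ is an $\eps$-tester for $\calP_C$ that uses $t$ copies; testers with fewer copies can be padded to $t$ copies by ignoring extras, so this is the only case. We then compare the acceptance probability of the tester on two mixed inputs, $\rho_C^t$ and $\rho_{\rm Haar}^t$.

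\textbf{YES side.} Every $\ket{\psi_c}$ with $c\in C$ lies in $\calP_C$, so $\tr(M\ketbra{\psi_c}{\psi_c}^{\otimes t})\ge 2/3$. Averaging over $c$ gives $\tr(M\rho_C^t)\ge 2/3$.

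\textbf{NO side.} Let $\ket{\vartheta}$ be Haar random and let $F$ be the event that it is \emph{not} $\eps$-far from $\calP_C$. With infidelity as the distance, $\eps$-far means $\max_{\phi\in\calP_C}(1-|\braket{\vartheta}{\phi}|^2)\ge\eps$ fails to hold for the closest $\phi$; in other words, $F$ is contained in the event $\max_{\phi}|\braket{\vartheta}{\phi}|^2 > 1-\eps$. Hence $\Pr[F]\le \eta_\eps(\calP_C)$. We should check carefully whether the threshold is strict or non-strict here, but the direction of the containment is the one we need.

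On the complement of $F$ the tester accepts with probability at most $1/3$. On $F$ we bound the acceptance probability by $1$. Therefore
\[
\tr(M\rho_{\rm Haar}^t)\le \tfrac13(1-\eta_\eps)+\eta_\eps=\tfrac13+\tfrac23\eta_\eps .
\]

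\textbf{Combining.} Since $0\le M\le I$, we have $|\tr(M\rho)-\tr(M\sigma)|\le \mathrm{D}_{\rm tr}(\rho,\sigma)$, and so
\[
\tfrac23\le \tr(M\rho_C^t)\le \tfrac13+\tfrac23\eta_\eps+\gamma .
\]
This rearranges to $\eta_\eps\ge \tfrac12-\tfrac32\gamma$, contradicting the hypothesis $\eta_\eps(\calP_C)<\tfrac12-\tfrac{3\gamma}{2}$. So no $t$-copy tester exists, and every $\eps$-tester needs at least $t+1$ copies.

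There is no real obstacle here. The one point requiring care is the NO-side containment: showing that the set of states not $\eps$-far from $\calP_C$ sits inside the fidelity neighborhood measured by $\eta_\eps$, including the boundary case $\max_\phi|\braket{\vartheta}{\phi}|^2=1-\eps$. This is immediate from the definition of $\eta_\eps$ with ``$\ge$''. Measurability of the maximum over the finite set $\calP_C$ is trivial.
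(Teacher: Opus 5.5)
Your proof is correct and is essentially the paper's argument. You average completeness to get $\tr(M\rho_C^t)\ge 2/3$, bound $\tr(M\rho_{\rm Haar}^t)\le \frac13+\frac23\eta_\eps(\calP_C)$, and compare the two via the trace-distance bound; padding a $k<t$-copy tester with $I$ on the extra copies is the same as the paper's appeal to contractivity of trace distance under partial trace. Your boundary handling is also right. The not-$\eps$-far event is exactly $\{\max_\phi|\braket{\vartheta}{\phi}|^2>1-\eps\}\subseteq\{\max_\phi|\braket{\vartheta}{\phi}|^2\ge 1-\eps\}$, although your sentence describing $\eps$-farness should take the $\min_\phi$ of the infidelity, not the $\max$. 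The paper's set $\calB_\eps$ is written with the inequality reversed, evidently a typo.
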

\begin{proof}
Assume toward a contradiction that there is an $\eps$-tester using $k$ copies with $k \le t$.
Let $M$ be the accepting POVM element, so $0 \leq M \le I$ and the tester is $\{M, I-M\}$.
By definition of an $\eps$-tester, if $\ket\psi \in \calP_C$, then $\tr(M \ketbra{\psi}{\psi}^{\otimes k}) \geq 2/3$, and if $\ket\psi$ is $\eps$-far from $\calP$, then $\tr(M \ketbra{\psi}{\psi}^{\otimes k}) \le 1/3$.
Thus, by linearity, we have   
\[
\abs{\tr(M \rho_C^k) - \tr(M \rho_{\rm Haar}^k)} \leq 
\mathrm{D}_{\rm tr}(\rho_C^k, \rho_{\rm Haar}^k) \leq \gamma. 
\]
The last line uses that $\mathrm{D}_{\mathrm{tr}}(\rho_{\rm C}^k, \rho_{\rm Haar}^k) \leq \gamma$ for all $k \le t$, because partial trace is a CPTP map and the trace distance is contractive. 
Therefore, 
\begin{equation}
\label{haar-lb}
\tr(M \rho_{\rm Haar}^k) \geq  \tr(M \rho_C^k) - \gamma \geq \frac{2}{3} - \gamma.
\end{equation}

Now let $\calB_\eps \coloneqq \{ \ket{\psi} : \exists \ket\phi \in \calP_C : \abs{\braket{\psi}{\phi}}^2 < 1-\eps \}$. By definition, $\Pr_{\ket\psi \sim \mathrm{Haar}(\C^n)}[\ket\psi \in \calB_\eps] = \eta_\eps(\calP_C)$.
If $\ket\psi \not\in \calB_\eps$, then $\ket\psi$ is $\eps$-far from $\calP$. Thus, the soundness of the tester implies that the test accepts with probability at most $1/3$.
Therefore, 
\[
\tr(M \ketbra{\psi}{\psi}^{\otimes k}) \leq \mathbf{1}\{\ket\psi \in \calB_\eps\} + (1 - \mathbf{1}\{\ket\psi \in \calB_\eps\}) \cdot \frac{1}{3}.
\]
Averaging over the Haar distribution gives, 
\begin{equation}
\label{eq:up}
\E_{\ket\psi \sim \mathrm{Haar}(\C^n)}\tr(M \ketbra{\psi}{\psi}) = \tr(M \rho_{\rm Haar}^k) \leq \eta_\eps(\calP_C) + (1 - \eta_\eps(\calP_C)) \cdot \frac{1}{3} \leq \frac{1}{3} + \frac{2}{3}\eta_\eps(\calP_C).
\end{equation}

Finally, combining \cref{haar-lb,eq:up} yields 
\[
\frac{2}{3} - \gamma \leq \frac{1}{3} + \frac{2}{3}\eta_\eps(\calP_C) 
\iff 
\frac{1 -3 \gamma}{2} \leq \eta_\eps(\calP_C). 
\]
Thus any $\eps$-tester using $k \le t$ copies forces 
$\frac{1 -3 \gamma}{2} \leq \eta_\eps(\calP_C)$. 
It follows that if  $\frac{1 -3 \gamma}{2} > \eta_\eps(\calP_C)$, such a tester cannot exist. So every $\eps$-tester must satisfy $k \geq t + 1$.
\end{proof}

With \cref{lemma:tester-lb} established, it remains to bound the quantity $\eta_\eps(\calP_C)$.
A standard approach is to apply a union bound. 
Indeed, for any fixed state $\ket{\varphi}$, 
$\Pr_{\ket{\psi} \sim \mathrm{Haar}}[\abs{\braket{\psi}{\varphi}}^2 \ge 1 - \eps] = \eps^{2(n-1)}$. 
It follows that $\eta_\eps(\calP_C) \le \abs{C} \eps^{2(n-1)}$.
Instead, we prove a bound that avoids any dependence on $\abs{C}$. This is particularly useful because it will yield lower bounds that are independent of the dimension of $C$. To establish it, we will need the following series of results. 

\begin{fact}
\label{fact:beta-dist}
Let $\ket{\psi} = \sum_{i = 1}^n \bm{\alpha}_i \ket{i}$ be a Haar-random pure state. For every $i \in [n]$, \[\abs{\bm{\alpha}_i}^2 \sim \mathrm{Beta}(1, n-1).\]    
\end{fact}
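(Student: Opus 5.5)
The plan is to realize a Haar-random state through independent complex Gaussians and then read off the marginal law of one squared amplitude. Write $\ket{\psi} = \bm{g}/\norm{\bm{g}}_2$, where $\bm{g} = (\bm{g}_1,\ldots,\bm{g}_n)$ has i.i.d.\ standard complex Gaussian entries. The distribution of $\bm{g}$ is invariant under every unitary, and so is the distribution of its normalization. Since the Haar measure on the unit sphere is the unique unitarily invariant probability measure, this construction does produce a Haar-random pure state. It follows that
\[
\abs{\bm{\alpha}_i}^2 = \frac{\abs{\bm{g}_i}^2}{\sum_{j=1}^n \abs{\bm{g}_j}^2}.
\]

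Next, identify the laws of the pieces. Each $\abs{\bm{g}_j}^2$ is the sum of the squares of two independent real Gaussians with the same variance, so it is exponentially distributed, that is, $\mathrm{Gamma}(1,\theta)$ for a common scale $\theta$. These variables are independent. Split the denominator as $\bm{X} + \bm{Y}$, where $\bm{X} = \abs{\bm{g}_i}^2 \sim \mathrm{Gamma}(1,\theta)$ and $\bm{Y} = \sum_{j\neq i}\abs{\bm{g}_j}^2 \sim \mathrm{Gamma}(n-1,\theta)$, with $\bm{X}$ and $\bm{Y}$ independent.

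Finally, apply the standard Gamma–Beta fact: if $\bm{X}$ and $\bm{Y}$ are independent with $\bm{X}\sim\mathrm{Gamma}(a,\theta)$ and $\bm{Y}\sim\mathrm{Gamma}(b,\theta)$, then $\bm{X}/(\bm{X}+\bm{Y}) \sim \mathrm{Beta}(a,b)$. With $a=1$ and $b=n-1$ this gives the claim. If a self-contained argument is preferred, use the change of variables $(x,y)\mapsto (u,s) = (x/(x+y),\, x+y)$, whose Jacobian is $s$. The joint density then factors as a function of $u$ times a function of $s$, and the $u$-factor is proportional to $(1-u)^{n-2}$, which is exactly the $\mathrm{Beta}(1,n-1)$ density.

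As an alternative that avoids Gamma variables, parametrize the complex unit sphere as the real unit sphere $S^{2n-1}$ and compute $\Pr[\abs{\bm{\alpha}_i}^2 \ge x]$ as a normalized surface-area ratio, which gives $(1-x)^{n-1}$. This matches the Beta tail and is also the formula used just before for $\eta_\eps$.

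None of these steps is a real obstacle. The only point requiring care is the justification that the Gaussian construction yields the Haar measure, and that follows from uniqueness of the invariant measure. The rest is routine computation with distributions.
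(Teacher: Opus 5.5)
Your proposal is correct and follows essentially the same route as the paper: normalize an i.i.d.\ complex Gaussian vector, note that the $|g_j|^2$ are i.i.d.\ exponential, and read off the $\mathrm{Beta}(1,n-1)$ law. The paper phrases the last step through the $\mathrm{Dirichlet}(1,\dots,1)$ marginal rather than your $\mathrm{Gamma}(1)$ versus $\mathrm{Gamma}(n-1)$ split, but this is the same fact.

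One small aside: your tail formula $(1-x)^{n-1}$ gives $\varepsilon^{n-1}$ at $x=1-\varepsilon$. This does not literally match the $\varepsilon^{2(n-1)}$ written in the paper's union-bound remark, and your expression is the correct one.
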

\begin{proof}
A Haar-random state in \(\mathbb C^n\) can be generated by normalizing a standard complex Gaussian vector:
\[
\ket{\psi}
=
\frac{(\bz_1,\dots,\bz_n)}{\sqrt{\sum_{j=1}^n |\bz_j|^2}},
\]
where \(\bz_1,\dots,\bz_n\) are iid \(\mathcal{CN}(0,1)\). Therefore
\[
|\bm{\alpha}_i|^2
= \frac{|\bz_i|^2}{\sum_{j=1}^n |\bz_j|^2}.
\]
Since \(|\bz_1|^2,\dots,|\bz_n|^2\) are iid $\mathrm{Exp}(1)=\Gamma(1,1)$, the vector
\[
\left(
\frac{|\bz_1|^2}{\sum_{j=1}^n |\bz_j|^2},
\dots,
\frac{|\bz_n|^2}{\sum_{j=1}^n |\bz_j|^2}
\right)
\]
has the Dirichlet distribution $\mathrm{Dirichlet}(1,\dots,1)$. Hence each marginal is $|\bm{\alpha}_i|^2 \sim \mathrm{Beta}(1,n-1)$, as claimed.
\end{proof}

\begin{lemma}[Wendel-Gautschi Inequalities {\cite{wendel1948note,gautschi1959some}}]
\label{lemma:wendel}
For $x > 0$ and $0 < s < 1$, 
\[
\left(\frac{x}{x+s} \right)^{1-s} \le \frac{\Gamma(x+s)}{x^s \Gamma(x)} \le 1, 
\]
where, for $z \in \C$ with $\Re(z) > 0$,
$\Gamma(z) = \int_0^\infty t^{z-1} e^{-t} \, dt$ is the gamma function.
\end{lemma}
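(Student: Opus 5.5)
The statement to prove is the Wendel--Gautschi inequality: for $x>0$ and $0<s<1$, we want $\left(\frac{x}{x+s}\right)^{1-s} \le \frac{\Gamma(x+s)}{x^s\Gamma(x)} \le 1$. The plan is to use log-convexity of $\Gamma$ on $(0,\infty)$, which follows from Hölder's inequality applied to the integral $\int_0^\infty t^{z-1}e^{-t}\,dt$. Concretely, for $z_1,z_2>0$ and $\lambda\in[0,1]$,
\[
\Gamma(\lambda z_1+(1-\lambda)z_2)\le \Gamma(z_1)^{\lambda}\Gamma(z_2)^{1-\lambda}.
\]
To see this, write $t^{\lambda z_1+(1-\lambda)z_2-1}e^{-t} = (t^{z_1-1}e^{-t})^{\lambda}(t^{z_2-1}e^{-t})^{1-\lambda}$ and apply Hölder with exponents $1/\lambda$ and $1/(1-\lambda)$. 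We will also use the functional equation $\Gamma(z+1)=z\Gamma(z)$, obtained by integration by parts.

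\textbf{Upper bound.} Write $x+s=(1-s)\cdot x+s\cdot(x+1)$. Log-convexity then gives
\[
\Gamma(x+s)\le \Gamma(x)^{1-s}\Gamma(x+1)^{s}=\Gamma(x)^{1-s}(x\Gamma(x))^s=x^s\Gamma(x),
\]
which is exactly the right-hand inequality.

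\textbf{Lower bound.} Apply log-convexity at the point $x+1$, written as a convex combination of $x+s$ and $x+s+1$: since $x+1=s\cdot(x+s)+(1-s)\cdot(x+s+1)$,
\[
\Gamma(x+1)\le \Gamma(x+s)^{s}\Gamma(x+s+1)^{1-s}=\Gamma(x+s)\,(x+s)^{1-s}.
\]
Using $\Gamma(x+1)=x\Gamma(x)$, this rearranges to
\[
\frac{\Gamma(x+s)}{\Gamma(x)}\ge \frac{x}{(x+s)^{1-s}}.
\]
Dividing by $x^s$ gives $\frac{\Gamma(x+s)}{x^s\Gamma(x)}\ge \frac{x^{1-s}}{(x+s)^{1-s}}=\left(\frac{x}{x+s}\right)^{1-s}$, as required.

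The only step needing care is choosing the correct convex-combination decompositions and checking that the weights sum to $1$; the Hölder step is routine. Since the paper cites this result, it could also simply be taken from \cite{wendel1948note,gautschi1959some}.
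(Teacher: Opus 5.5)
Your proof is correct: both convex-combination identities, $x+s=(1-s)x+s(x+1)$ and $x+1=s(x+s)+(1-s)(x+s+1)$, check out, and log-convexity of $\Gamma$ (via H\"older) together with $\Gamma(z+1)=z\Gamma(z)$ gives exactly the two bounds. The paper gives no proof of this lemma and only cites it; your argument is essentially Wendel's original H\"older-inequality proof from the cited reference.
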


\begin{lemma}[L\'{e}vy's Lemma, see e.g. \cite{Gerken13measureconcentration}]
\label{lem:levy}
Let $\mathbb{S}^n$ denote the set of all $n$-dimensional pure quantum states, and let $f: \mathbb{S}^n \to \mathbb{R}$ be $L$-Lipschitz, meaning that $\abs{f(\ket{\psi}) - f(\ket{\varphi})} \le L \cdot \norm{\ket{\psi} - \ket{\varphi}}_2$. Then:
\[
\Pr_{\ket{\psi} \sim \rm{Haar}} \left[\abs{f(\ket{\psi}) - \E[f]} \ge \eps  \right] \le 2\exp\left(-\frac{n \eps^2}{9\pi^3 L^2} \right).
\]
\end{lemma}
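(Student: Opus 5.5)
Since Lévy's Lemma is a standard concentration-of-measure fact that the paper only cites, I would prove it by reducing to concentration on a real sphere, with no quantum input. \emph{Identification step:} writing $\ket{\psi}=\sum_j(x_j+iy_j)\ket{j}$ identifies $\mathbb{S}^n$ isometrically with the real unit sphere $S^{2n-1}\subset\R^{2n}$. Here $\norm{\ket\psi-\ket\varphi}_2$ becomes the Euclidean chord distance, and the Haar measure becomes the uniform (rotation-invariant) surface measure $\sigma$. So it suffices to show: for $N=2n$ and $f:S^{N-1}\to\R$ $L$-Lipschitz, $\sigma(\abs{f-\E f}\ge\eps)\le 2\exp(-N\eps^2/(18\pi^3L^2))$. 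By scaling I may assume $L=1$.

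\emph{Concentration around the median.} Let $M_f$ be a median of $f$. I would invoke Lévy's spherical isoperimetric inequality: among sets of measure $1/2$, the hemisphere minimizes the measure of its $\eps$-neighbourhood. This gives $\sigma(A_\eps)\ge 1-\sqrt{\pi/8}\,e^{-(N-2)\eps^2/2}$ whenever $\sigma(A)\ge1/2$. Apply it to $A=\{f\le M_f\}$ and to $\{f\ge M_f\}$; by the 1-Lipschitz property, $\{f\ge M_f+\eps\}\subseteq S^{N-1}\setminus A_\eps$ (and symmetrically). Hence $\sigma(\abs{f-M_f}\ge\eps)\le 2\sqrt{\pi/8}\,e^{-(N-2)\eps^2/2}$. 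If I wish to avoid the full isoperimetric theorem, I would instead use the Maurey--Pisier Gaussian argument. That route pushes forward the Gaussian concentration $\Pr[\abs{g(\bz)-\E g}\ge\eps]\le 2e^{-2\eps^2/(\pi^2 L^2)}$ via $\bz\mapsto\bz/\norm{\bz}$, handling the singularity at $0$ by restricting to $\norm{\bz}\ge\sqrt N/2$, an event of probability $1-e^{-\Omega(N)}$. This is presumably the origin of the $\pi^3$ in the constant.

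\emph{Median to mean, and constant bookkeeping.} Integrate the tail bound to get $\abs{\E f-M_f}\le\E\abs{f-M_f}\le c_0/\sqrt{N-2}$ for an explicit $c_0$. Now split on $\eps$. If $N\eps^2/(9\pi^3)\le\ln 2$, the right-hand side is at least $1$ and there is nothing to prove. Otherwise $\eps\gtrsim 9.1/\sqrt N$, which exceeds $2c_0/\sqrt{N-2}$. Then $\{\abs{f-\E f}\ge\eps\}\subseteq\{\abs{f-M_f}\ge\eps/2\}$, whose measure is at most $2\sqrt{\pi/8}\,e^{-(N-2)\eps^2/8}$. Comparing exponents, $(N-2)/8\ge N/(18\pi^3)=2n/(18\pi^3)$ for all $N\ge 4$, and the very small cases follow from the trivial regime. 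This gives the stated bound with $n\eps^2/(9\pi^3L^2)$ in the exponent.

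The main obstacle is the isoperimetric inequality itself (or, on the alternative route, the Gaussian Lipschitz concentration with its $\pi^2/8$ constant). I would cite either one as standard rather than reprove it. Everything after that is arithmetic; the generous constant $9\pi^3$ leaves ample slack for loose estimates.
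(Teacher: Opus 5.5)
\paragraph{Comparison with the paper.} The paper does not prove this lemma. It is imported from the concentration-of-measure literature with a pointer to \cite{Gerken13measureconcentration}, so there is no in-paper argument to match. Your main route is a correct, standard proof. The identification of unit vectors in $\C^n$ with $S^{2n-1}\subset\R^{2n}$ is isometric for the norm in the hypothesis. L\'evy's isoperimetric inequality then gives concentration around a median; it is usually stated for geodesic neighbourhoods, which is fine because chordal distance is the smaller of the two. Integrating the tail gives $\E|f-M_f|\le \pi/(2\sqrt{N-2})$, which moves the bound to the mean. After that, the exponent $(N-2)\eps^2/8$ beats the required $N\eps^2/(18\pi^3)$ by a wide margin.

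The constant $9\pi^3$ in the statement comes instead from the Gaussian route. There one normalizes $\E f=0$, so $|f|\le 2L$ pointwise. The homogeneous extension $g(z)=\|z\|\,f(z/\|z\|)$ is then $3L$-Lipschitz on all of $\R^N$. One applies Maurey--Pisier to $g(\bz)$ and passes back to $f$ using independence of $\|\bz\|$ and $\bz/\|\bz\|$ plus Jensen in the radius. Roughly, $9\pi^3$ is $3^2$, times $\pi^2$ from Maurey--Pisier, times one more $\pi$ from $(\E\|\bz\|)^2\ge 2N/\pi$.

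That route is more elementary: it needs no spherical isoperimetry and works with the mean directly, but it loses constants. Yours rests on a deeper theorem and gets a far better exponent. Your Gaussian sketch as written (``restricting to $\|\bz\|\ge\sqrt N/2$'') does not by itself give a globally Lipschitz function to which Gaussian concentration applies. The homogeneous (or a McShane) extension is the missing device, but this does not affect your main argument.

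\paragraph{Two bookkeeping slips, both harmless.}
\begin{enumerate}
\item $\sqrt{9\pi^3\ln 2}\approx 13.9$, not $9.1$. You only need $\eps>2c_0/\sqrt{N-2}=\pi/\sqrt{N-2}$, so the underestimate errs in the safe direction.
\item For $N=2$ (i.e.\ $n=1$), the isoperimetric exponent $N-2$ vanishes, $c_0/\sqrt{N-2}$ is undefined, and large $\eps$ does not fall in your trivial regime. The clean fix is to note that $|f-\E f|\le 2L$ pointwise, so the event is empty for $\eps>2L$. For $\eps\le 2L$, the right-hand side is at least $2e^{-4n/(9\pi^3)}\ge 1$ whenever $n\le 48$, which disposes of all small $n$ at once.
\end{enumerate}
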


We now prove our the main technical lemma that will imply a bound on $\eta_\eps(\calP_C)$.

\begin{lemma}\label{lemma:small-haar-measure}
Define $\calP_{\rm flat} = \{ \frac{1}{\sqrt n} \sum_{i = 1}^n e^{i \theta_i} \ket{i} : \theta_i \in \R\}$.
    For every fixed $\eps < 1 - \tfrac{\pi}{4}$, there exists $c_\eps > 0$ and $n_\eps$ such that for all $n \ge n_\eps$, 
    \[
    \Pr_{\ket{\psi} \sim \mathrm{Haar}} \left[\max_{\ket{\theta} \in \calP_{\rm flat}} \abs{\braket{\psi}{\theta}}^2 \ge 1 - \eps \right] \le 2 e^{-c_\eps n}.
    \]
\end{lemma}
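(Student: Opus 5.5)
The first step is to compute the maximum over $\calP_{\rm flat}$ in closed form. For a fixed $\ket{\psi}=\sum_i \alpha_i\ket{i}$ and $\ket{\theta}=\frac{1}{\sqrt n}\sum_i e^{i\theta_i}\ket{i}$, we have $\braket{\psi}{\theta} = \frac{1}{\sqrt n}\sum_i \overline{\alpha_i} e^{i\theta_i}$. Choosing the phases $\theta_i$ to align each term gives
\[
\max_{\ket\theta\in\calP_{\rm flat}} \abs{\braket{\psi}{\theta}}^2 = \frac{1}{n}\Big(\sum_{i=1}^n \abs{\alpha_i}\Big)^2 = \frac{\norm{\psi}_1^2}{n}.
\]
So the event in question is $f(\ket\psi)\coloneqq \frac{1}{\sqrt n}\sum_i\abs{\alpha_i} \ge \sqrt{1-\eps}$, and the lemma becomes a concentration statement for $f$.

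Next we control $\E[f]$. By \cref{fact:beta-dist}, $\abs{\alpha_i}^2\sim\mathrm{Beta}(1,n-1)$, so
\[
\E\abs{\alpha_i} = \frac{\Gamma(3/2)\Gamma(n)}{\Gamma(n+1/2)},
\]
and therefore $\E[f] = \sqrt n\,\Gamma(3/2)\,\Gamma(n)/\Gamma(n+1/2)$. Applying \cref{lemma:wendel} with $x=n$ and $s=1/2$ gives $\Gamma(n+1/2)\ge n^{1/2}\Gamma(n)\,(n/(n+1/2))^{1/2}$. Hence
\[
\E[f]\le \Gamma(3/2)\sqrt{(n+1/2)/n} = \tfrac{\sqrt\pi}{2}\sqrt{1+\tfrac{1}{2n}}.
\]
The upper half of Wendel's inequality gives the matching lower bound, so $\E[f]\to\sqrt{\pi}/2$, consistent with the threshold $1-\eps>\pi/4$. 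Since $\sqrt{1-\eps}>\sqrt\pi/2$, we set $\tau\coloneqq\frac12(\sqrt{1-\eps}-\sqrt\pi/2)>0$. Then for $n\ge n_\eps$ we have $\E[f]\le \sqrt\pi/2+\tau$, and the event implies $f-\E[f]\ge\tau$.

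Finally, we apply \cref{lem:levy}. Because $\abs{f(\psi)-f(\varphi)}\le \frac{1}{\sqrt n}\norm{\abs{\psi}-\abs{\varphi}}_1\le \frac{1}{\sqrt n}\sqrt n\norm{\psi-\varphi}_2$ by Cauchy--Schwarz and the reverse triangle inequality entrywise, $f$ is $1$-Lipschitz. L\'evy's lemma then gives probability at most $2\exp(-n\tau^2/(9\pi^3))$, so we can take $c_\eps=\tau^2/(9\pi^3)$.

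The main obstacle is the estimate of $\E[f]$. The Wendel--Gautschi inequality is needed to see that $\E[f]$ sits at $\sqrt\pi/2+O(1/n)$, strictly below the threshold $\sqrt{1-\eps}$ uniformly for large $n$; this is exactly where the condition $\eps<1-\pi/4$ enters. The Lipschitz bound and the application of L\'evy's lemma are routine.
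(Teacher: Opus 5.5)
Your proposal is correct and follows the paper's proof essentially step for step. It uses the same reduction to $L(\psi)=\frac{1}{\sqrt n}\sum_i|\alpha_i|$, the same Beta-marginal computation of the mean bounded via Wendel--Gautschi, the same $1$-Lipschitz check, and the same application of L\'evy's lemma, and your $\tau$ is exactly the paper's $\Delta_\varepsilon/2$, giving the same constant $c_\varepsilon=\Delta_\varepsilon^2/(36\pi^3)$.
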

\begin{proof}
For $\ket{\psi} = \sum_{i = 1}^n \alpha_i \ket{i}$, define $L(\ket\psi) = \tfrac{1}{\sqrt n} \sum_{i = 1}^n \abs{\alpha_i}$. 
Then
\[
\max_{\ket{\theta} \in \calP_{\rm flat}} \abs{\braket{\psi}{\theta}}^2 
= \max_{\theta_i} \Abs{ \frac{1}{\sqrt n} \sum_{i = 1}^n e^{-i\theta_i} \alpha_i}^2 = L(\ket{\psi})^2.
\]
By symmetry, we have 
\[
\E[L(\ket\psi)] = \frac{1}{\sqrt{n}} \sum_{i = 1}^n \E \abs{\alpha_i} = \sqrt{n} \E\abs{\alpha_1}.
\]
For Haar-random $\ket{\psi}$, we have by \cref{fact:beta-dist} that $\abs{\bm{\alpha}_1}^2 \sim \mathrm{Beta}(1, n-1)$, which has density function $f(x) = (n-1)(1-x)^{n-2}$ for $x \in [0,1]$.
Therefore 
\[
\E\abs{\bm{\alpha}_1} = (n-1) \int_0^1 x^{1/2}(1-x)^{n-2} dx =  \frac{\Gamma(3/2)\Gamma(n)}{\Gamma(n + 1/2)}, 
\]
where $\Gamma(\cdot)$ is the gamma function.
Using the fact that $\Gamma(3/2) = \sqrt{\pi}/2$ and \cref{lemma:wendel} with $x = n$ and $s = 1/2$, we have 
\[
\frac{\sqrt{\pi}}{2} \le \E L(\ket{\psi}) \le \frac{\sqrt\pi}{2} \sqrt{1 + \frac{1}{2n}}. 
\]
Now set $a_\eps \coloneqq \sqrt{1 - \eps}$. Since $\eps < 1 - \pi/4$, we have $a_\eps > \sqrt{\pi}/2$. Define $\Delta_\eps \coloneqq a_\eps - \sqrt{\pi}/2 > 0$.
By the upper bound on $\E L(\ket\psi)$, there exists an $n_\eps$ such that for all $n \ge n_\eps$, 
\[
\E [L(\ket\psi)] \le \frac{\sqrt{\pi}}{2} \sqrt{1 + \frac{1}{2n}} \le \frac{\sqrt{\pi}}{2} + \frac{\Delta_\eps}{2} = a_\eps - \frac{\Delta_\eps}{2}. 
\]
We next note that $L$ is $1$-Lipschitz on the unit sphere. Indeed, for $\ket{\psi} = \sum_i \alpha_i \ket{i}$ and $\ket{\varphi} = \sum_i \beta_i \ket{i}$, we~have 
\begin{align*}
\abs{L(\ket\psi)-L(\ket\varphi)}
&= \frac1{\sqrt n} \left| \sum_{i=1}^n |\alpha_i|-\sum_{i=1}^n |\beta_i| \right| \\
&\le \frac1{\sqrt n}\sum_{i=1}^n \bigl||\alpha_i|-|\beta_i|\bigr| \\
&\le \frac1{\sqrt n}\sum_{i=1}^n |\alpha_i-\beta_i| \\
&\le \left(\sum_{i=1}^n |\alpha_i-\beta_i|^2\right)^{1/2} \\
&= \norm{\ket\psi-\ket\varphi}_2.
\end{align*}

Then
\begin{align*}
\Pr_{\ket{\psi} \sim \mathrm{Haar}} \left[\max_{\ket{\theta} \in \calP_{\rm flat}} \abs{\braket{\psi}{\theta}}^2 \ge 1 - \eps \right]
&= \Pr_{\ket\psi \sim \mathrm{Haar}}[L(\ket{\psi})^2 \geq 1 - \eps] \\
&= \Pr_{\ket\psi \sim \mathrm{Haar}}[L(\ket{\psi}) \geq \sqrt{1 - \eps}] \\
&= \Pr_{\ket\psi \sim \mathrm{Haar}}[L(\ket{\psi}) \geq a_\eps] \\
&\le \Pr_{\ket\psi \sim \mathrm{Haar}}[\abs{L(\ket{\psi}) - \E L(\ket{\psi})} \geq \frac{\Delta_\eps}{2}]  \\
&\le 2 \exp\left(- \frac{n \Delta_\eps^2}{36 \pi^3} \right).
\end{align*}
The second-to-last line follows because, for all $n \ge n_\eps$, $\E L(\ket{\psi}) \le a_\eps - \tfrac{\Delta_\eps}{2}$. Thus the event $L(\ket\psi) \ge a_\eps$ implies $L(\ket\psi) - \E L(\ket\psi) \ge \Delta_\eps / 2$.
The last line follows from L\'evy's Lemma (\cref{lem:levy}).
Therefore the claim holds with $c_\eps = \tfrac{\Delta_\eps^2}{36\pi^3}$ where $\Delta_\eps = \sqrt{1 - \eps} - \tfrac{\sqrt{\pi}}{2}$.
\end{proof}

With that, we can now establish the main theorem of this subsection.

\begin{proof}[Proof of \cref{thm:tester-LB}]
Because $\calP_C \subseteq \calP_{\rm flat}$, we have $\eta_\eps(\calP_C) \le \eta_\eps(\calP_{\rm flat})$.
Thus, by \cref{lemma:small-haar-measure}, there exist constants
$c_\eps>0$ and $n_\eps$ such that for all $n\ge n_\eps$,
$\eta_\eps(\calP_C)\le 2e^{-c_\eps n}$.

Now fix $\delta>0$. Choose $n_{\eps,\delta}\ge n_\eps$ large enough so that
for all $n\ge n_{\eps,\delta}$,
$2e^{-c_\eps n}<\frac{3\delta}{2}$.
Then, for all $n\ge n_{\eps,\delta}$,
$\eta_\eps(\calP_C) < \frac{3\delta}{2}.$ By \cref{lemma:general-c-state-design}, we have that $\mathrm{D}_{\mathrm{tr}}(\rho_C^{t},\rho_{\mathrm{Haar}}^{t})  \le \gamma.$
Thus, by \cref{lemma:tester-lb} and our bound $\eta_\eps(\calP_C) < \frac{3\delta}{2}$, every $\eps$-tester for $\calP_C$ requires $t+1$ copies if 
\[
\frac{3\delta}{2} < \frac{1}{2} - \frac{3\gamma}{2} \iff \gamma < \frac{1}{3} - \delta.
\]
This completes the proof.
\end{proof}

\subsection{Application to degree-\texorpdfstring{$d$}{d} phase states}
\label{subsec:low-deg-phase-testing-lbs}

In this subsection, we will apply the machinery developed in~\cref{subsec:indistinguishability,subsec:testing-lbs} to Reed--Muller codeword states to prove that there are no low-degree tests for quantum states.
To do so, we will use the following statement, which, when combined with \cref{thm:tester-LB}, will imply our lower bounds.
In what follows, $D_{d,q}(r)$ denotes the number of monomials in $r$ variables of total degree at most $d$, with each variable having degree at most $q-1$.

\begin{restatable}{corollary}{decodebounds}
\label{corollary:rm-decode-bounds}
Let $q$ be a prime number, $m \ge 1$, $0 \le d \le \lfloor \tfrac{m(q-1)-2}{2}\rfloor$, and $0 \le r \le m$.
Let $C = \RM_q[m,2d+1]$, let $t \le D_{d,q}(r)$, and let $\be_t \sim \calE_{q^m,q}^t$. 
Then
\[
\sqrt{1 - p_{\rm decode}^t(C^\perp)^2}
\le \sqrt{2tq^{r-m}}.
\]
\end{restatable}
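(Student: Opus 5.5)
The plan is to derive the corollary from \cref{thm:our-generalization} together with an elementary inequality. If $p \coloneqq p^t_{\rm decode}(C^\perp) \ge 1 - x$ with $x \in [0,1]$, then
\[
1 - p^2 = (1-p)(1+p) \le 2(1-p) \le 2x.
\]
Taking $x = tq^{r-m}$ gives $\sqrt{1-p^2} \le \sqrt{2tq^{r-m}}$, which is the claim. If $tq^{r-m} > 1$, the bound is trivial because $1 - p^2 \le 1 < 2tq^{r-m}$. So all the content lies in establishing $p^t_{\rm decode}(\RM_q[m,2d+1]^\perp) \ge 1 - tq^{r-m}$ for $t \le D_{d,q}(r)$. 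The restriction $d \le \lfloor (m(q-1)-2)/2\rfloor$ only guarantees that $2d+1 \le m(q-1)-1$, so that $C$ is a proper code and $C^\perp = \RM_q[m, m(q-1)-2d-2]$ is nontrivial.

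To prove that decoding bound I would adapt the argument of \cite[Theorem 6.2]{ASW15} to $\F_q$. The parity check matrix of $C^\perp$ is the generator matrix $G$ of $C = \RM_q[m,2d+1]$, so the syndrome of $\be_t = \sum_j \delta_{\ba_j}$ is the list of evaluations
\[
\sum_j f(\ba_j) \quad \text{for all } f \text{ of degree at most } 2d+1.
\]
The MAP decoder succeeds whenever $\be_t$ is the unique error (of the relevant form) with that syndrome. It therefore suffices to bound the probability that there is a different multiset $\{b_j\}$ of size at most $t$ (with multiplicities reduced mod $q$) satisfying $\sum_j f(a_j) = \sum_j f(b_j)$ for every $f$ of degree at most $2d+1$.

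The key step, following ASW, is a degree-doubling trick. Project the points to $r$ coordinates and consider the $t \times D_{d,q}(r)$ matrix $E$ of evaluations of degree-$\le d$ monomials in those $r$ coordinates at the points $\ba_j$. Products $gh$ with $\deg g, \deg h \le d$, times one extra linear form $\ell$, have degree at most $2d+1$. Hence the syndrome determines the quantities $\sum_j g(\ba_j)h(\ba_j)\ell(\ba_j)$, that is, the matrices $E^\top \diag(\ell(\ba_j)) E$. When $E$ has full row rank $t$, which needs $t \le D_{d,q}(r)$, one can recover the points $\ba_j$ themselves by linear algebra: solve for an interpolating/dual basis and read off the coordinate values via the $\ell$'s.

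The probability estimate is to show that $E$ has full row rank except with probability $tq^{r-m}$. I expect this to be the main obstacle. The plan is to condition on the $r$-coordinate projections being handled by the rank argument and the remaining $m-r$ coordinates providing collision avoidance. In the union bound, each of the $t$ points fails with probability $q^{-(m-r)}$, for example because it coincides with the affine span/kernel condition determined by the others in the remaining coordinates. Over $\F_q$ one must also handle repeated points, whose multiplicities are taken mod $q$, and the individual-degree cap $q-1$. This is why the count uses $D_{d,q}(r)$ and why the decoding target is the error vector rather than the multiset. Granting \cref{thm:our-generalization} as stated earlier in the paper, the corollary itself follows immediately from the first paragraph.
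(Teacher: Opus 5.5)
Your first paragraph is correct and matches the paper's final step ($1-(1-\gamma)^2\le 2\gamma$ for $\gamma=tq^{r-m}$). However, everything then hinges on $p^t_{\rm decode}(C^\perp)\ge 1-tq^{r-m}$, and you do not establish it. Granting \cref{thm:our-generalization} is circular: that introductory theorem is this decoding bound restated. In the body it is obtained only inside the proof of the present corollary, from \cref{cor:random-error-unique}.

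So your argument rests on the sketch, and its key probabilistic step fails. Fix the $r$ coordinates. Your matrix $E$ of degree-$\le d$ monomials in those coordinates depends only on the projections of the $\ba_j$ to $\F_q^r$, so the remaining $m-r$ coordinates cannot affect its rank. Its rank-deficiency probability is also nowhere near $tq^{r-m}$:
\begin{itemize}
\item With $d=r=1$ and $t=2=D_{1,q}(1)$, the rows $(1,(\ba_j)_1)$ are dependent with probability $1/q$, far above the target $2q^{1-m}$ for large $m$.
\item If $r(q-1)\le d$, then $D_{d,q}(r)=q^r$, and full rank would require $q^r$ random points of $\F_q^r$ to be pairwise distinct. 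That is exponentially unlikely.
\end{itemize}

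The missing idea is to use the full evaluation functionals $E_{\ba_j}$ on $\calP_{m,d,q}$, i.e.\ the columns of the generator matrix of $\RM_q[m,d]$, and to add points one at a time as in \cref{thm:asw-45}.
\begin{itemize}
\item Suppose $E_{\ba_1},\dots,E_{\ba_j}$ are independent with $j<t\le D_{d,q}(r)$. Then $E_{\ba_{j+1}}$ lies in their span iff $\ba_{j+1}\in\cl_d(A_j)$ (\cref{lemma:closure-lem}).
\item $|\cl_d(A_j)|\le q^r$. The subcode $I_d(A_j)=I_d(\cl_d(A_j))$ of $\RM_q[m,d]$ has dimension at least $a=D_{d,q}(m)-D_{d,q}(r)$ and vanishes on $\cl_d(A_j)$. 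The Heijnen--Pellikaan value $d_a(\RM_q[m,d])=q^m-q^r$ forces its support to have size at least $q^m-q^r$.
\item Summing the per-step failure probability $q^{r-m}$ over $j<t$ gives $tq^{r-m}$.
\end{itemize}
Your ``affine span'' heuristic only covers $d=1$, where $\cl_1(A)$ is the affine hull of $A$. For general $d$, this generalized-Hamming-weight bound is the real content.

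Independence also forces the $\ba_j$ to be distinct, which resolves the multiplicity issue you raise. Then $\be_t$ is a $0/1$ pattern of weight exactly $t$. Your degree-doubling argument, run with interpolating polynomials $f_i\in\calP_{m,d,q}$ and all coordinate functions $x_1,\dots,x_m$ (\cref{lemma:indep-gives-uniqueness}), shows it is the unique pattern of weight at most $t$ with its syndrome. That is what the MAP bound needs.
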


The proof of~\cref{corollary:rm-decode-bounds} is deferred to the end of~\cref{sec:asw}. For now, we will leverage this corollary and begin proving property testing lower bounds for Reed--Muller codeword states.
The following lemma is the most general lower bound we can prove.

\begin{lemma}
\label{lemma:general-rm-bound}
    Fix $\eps < 1 - \pi/4$, and suppose $d_0 \le \lfloor \tfrac{m(q-1)-2}{2}\rfloor$. 
    For all sufficiently large $m$, every $\eps$-tester for $\calP_{\RM_q[m,d]}$ requires 
    \[
    \Omega \left(\max_{0 \le r \le m} \max \left\{D_{d_0,q}(r), q^{m-r}, q^{m/2} \right\} \right) 
    \]
    copies.
\end{lemma}
\begin{proof}
   Let $r \in \{0,\dots, m\}$ and let $T_r \coloneq \min\{D_{d_0,q}(r), q^{m-r}, q^{m/2}\}$. 
   It suffices to show that, for an absolute constant $0 <c < 1$, no tester using $t \coloneqq \lfloor c T_r\rfloor$ copies exists.

   Let $C \coloneqq \RM_q[m,d]$ and $C' \coloneqq \RM_q[m, 2d_0 -1]$. Since $2d_0 + 1 \le d$, we have $C' \subseteq C$, and hence $C^\perp \subseteq {C'}^\perp$.
   Therefore, decoding from $C^\perp$ is at least as easy as decoding from ${C'}^\perp$, so $p^t_{\rm decode}(C^\perp) \ge p^t_{\rm decode}({C'}^\perp)$.
   Because $t \le T_r \le D_{d_0,q}(r)$, \cref{corollary:rm-decode-bounds} implies 
   \[ 
   \sqrt{1 - p^t_{\rm decode}({C'}^\perp)^2} \le \sqrt{2t q^{r-m}}.
   \]
   Because $p^t_{\rm decode}(C^\perp) \ge p^t_{\rm decode}({C'}^\perp)$, we get the same bound on $\sqrt{1 - p^t_{\rm decode}({C}^\perp)^2}$.

    By \cref{lemma:general-c-state-design}, we have $\mathrm{D}_{\rm tr}(\rho_C^t, \rho_{\rm Haar}^t) \le \gamma$ with 
    \[
    \gamma \le \sqrt{1 - p_{\rm decode}^t(C^\perp)^2} + \frac{t (t-1)}{q^m} \le \sqrt{2 t q^{r-m}} + \frac{t^2}{q^m}.
    \]
    Our choice of $t$ guarantees $t \le c q^{m-r}$ and $t \le c q^{m/2}$.
    Substituting these bounds yields 
    \[
    \gamma \le \sqrt{2 c} + c^2. 
    \]
    Therefore, by choosing $0 < c < 1$ sufficiently small makes $\gamma < 1/3 - \delta$ for some fixed $\delta > 0$. 
   \cref{thm:tester-LB} then implies that every $\eps$-tester requires at least $t+1 = \Omega(T_r)$ copies. Maximizing over $r$ completes the proof.
\end{proof}

Though seemingly opaque, we will instantiate \cref{lemma:general-rm-bound} to derive concrete testing lower bounds.

\begin{theorem}\label{thm:rm-testing-lb}
   Fix $\eps < 1 - \pi/4$, and set $d_0 = \lfloor \tfrac{d-1}{2}\rfloor$. 
   For all sufficiently large $m$, every $\eps$-tester for $\calP_{\RM_q[m,d]}$ requires
   \[
   \Omega \left(\binom{\lfloor m/2 \rfloor}{\le d_0} \right)
   \]
   copies.
\end{theorem}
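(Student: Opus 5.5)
The plan is to instantiate \cref{lemma:general-rm-bound} with $d_0 = \lfloor (d-1)/2 \rfloor$, which satisfies $2d_0+1 \le d$. Although the lemma is written with a $\max$ inside, its proof gives, for each $r$, a lower bound of $\Omega(T_r)$ with
\[
T_r = \min\{D_{d_0,q}(r), q^{m-r}, q^{m/2}\}.
\]
It then maximizes over $r$. So it suffices to pick a single $r$ for which all three terms are at least $\binom{\lfloor m/2\rfloor}{\le d_0}$, up to constants.

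Take $r = \lfloor m/2 \rfloor$. Then $q^{m-r} \ge q^{m/2}$, so the minimum is $\min\{D_{d_0,q}(\lfloor m/2\rfloor), q^{m/2}\}$.

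For the first term, every multilinear monomial in $r$ variables of degree at most $d_0$ has each individual degree at most $1 \le q-1$. Hence $D_{d_0,q}(r) \ge \binom{r}{\le d_0}$.

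For the second term, $\binom{\lfloor m/2\rfloor}{\le d_0} \le 2^{\lfloor m/2\rfloor} \le q^{m/2}$ since $q \ge 2$. So the $q^{m/2}$ term never binds.

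Together these give $T_r \ge \binom{\lfloor m/2\rfloor}{\le d_0}$, and hence a lower bound of $\Omega(\binom{\lfloor m/2\rfloor}{\le d_0})$ copies.

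The remaining step is to check the hypothesis $d_0 \le \lfloor (m(q-1)-2)/2 \rfloor$ of \cref{corollary:rm-decode-bounds}, which \cref{lemma:general-rm-bound} inherits. If $d_0$ is too large for this, the binomial $\binom{\lfloor m/2\rfloor}{\le d_0}$ is just $2^{\lfloor m/2\rfloor}$, and I would handle that case by monotonicity. Lowering $d_0$ to $\min\{d_0, \lfloor m/2\rfloor\}$ leaves the binomial unchanged, and since $\lfloor m/2 \rfloor \le \lfloor (m(q-1)-2)/2\rfloor$ for $m \ge 2$, the hypothesis then holds. Using a smaller $d_0$ only makes $C'$ smaller, so the argument of \cref{lemma:general-rm-bound} still applies.

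I expect no real obstacle. The only delicate point is being careful that the $\Omega(T_r)$ bound holds for each fixed $r$, so that choosing $r = \lfloor m/2\rfloor$ is legitimate. \Cref{thm:main} then follows because $\binom{\lfloor m/2\rfloor}{\le d_0} \ge \binom{\lfloor m/2\rfloor}{d_0}$.
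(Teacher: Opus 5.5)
Your argument is the paper's. You apply \cref{lemma:general-rm-bound} with $r=\lfloor m/2\rfloor$, correctly reading it, as its proof shows, as giving $\Omega(T_r)$ with $T_r$ a minimum rather than a maximum. You then bound $D_{d_0,q}(r)\ge\binom{r}{\le d_0}$ by counting squarefree monomials. The paper disposes of the other two terms via $D_{d_0,q}(r)\le q^r\le q^{m/2}\le q^{m-r}$, and you via $\binom{r}{\le d_0}\le 2^r\le q^{m/2}$; these amount to the same thing.

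The one flaw is in your extra paragraph on the hypothesis $d_0\le\lfloor (m(q-1)-2)/2\rfloor$, which the paper never checks. Your inequality $\lfloor m/2\rfloor\le\lfloor (m(q-1)-2)/2\rfloor$ is false for $q=2$, where the right-hand side equals $\lfloor m/2\rfloor-1$. This matters because $q=2$ is exactly where the hypothesis can fail. For $d\le m(q-1)$ the hypothesis holds automatically when $q$ is odd, since then $m(q-1)$ is even. The only violating case is $q=2$, $m$ odd, $d=m$, where $d_0=\lfloor m/2\rfloor$ already, so truncating to $\min\{d_0,\lfloor m/2\rfloor\}$ changes nothing.

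The repair is easy in either of two ways:
\begin{itemize}
\item Truncate to $\lfloor m/2\rfloor-1$ instead. This only lowers the binomial from $2^{\lfloor m/2\rfloor}$ to $2^{\lfloor m/2\rfloor}-1$.
\item Observe that in this case $\RM_2[m,2d_0+1]$ is the whole space, so its dual is $\{0\}$ and $p^t_{\rm decode}=1$. Only the $t(t-1)/n$ term then remains, which already gives an $\Omega(q^{m/2})$ bound.
\end{itemize}
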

\begin{proof}
    We apply \cref{lemma:general-rm-bound} with $r = \lfloor m/2 \rfloor$.
    Since $D_{d_0,q}(r) \le q^r$ and $r = \lfloor m/2 \rfloor$, we have 
    \[
    D_{d_0, q}(r) \le q^{m-r} \qquad \text{and} \qquad D_{d_0,q}(r) \le q^{m/2}.
    \]
    Therefore, \cref{lemma:general-rm-bound} implies an $\Omega(D_{d_0,q}(r))$ testing lower bound.

    It remains to lower bound $D_{d_0,q}(r)$. We achieve this lower bound by counting every squarefree monomial of degree at most $d_0$, which clearly lower bounds $D_{d_0,q}(r)$: 
    \[
    D_{d_0,q}(r) \ge \sum_{j=0}^{\min\{r, d_0\}} \binom{r}{j} = \binom{r}{\le d_0}.\footnote{In the regime where $q > d$, one can strengthen this argument to obtain a slightly improved bound of $\Omega(\binom{\lfloor m/2 \rfloor + d_0}{d_0})$.}\qedhere
    \]
\end{proof}

In particular, in the regime in which low-degree tests are used in PCP constructions, we see that a superpolynomial number of copies are necessary.

We conclude this section by discussing state designs. Our testing lower bound implicitly establishes that random low-degree phase states form approximate state designs. We make this connection explicit in the following theorem.

\begin{theorem}
\label{thm:rm-designs}
    Let $q$ be a prime number, let $C = \RM_q[m,d]$, and set $d_0 = \lfloor \tfrac{d-1}{2}\rfloor$.
    Define 
    \[
    M \coloneqq \binom{\lfloor m/2 \rfloor}{d_0} \qquad \text{and} \qquad \gamma_\star \coloneq \sqrt{\frac{M}{q^{m/2}}}.
    \]
    For every $\gamma \in (0,1)$, the uniform ensemble of degree-$d$ phase states is a $\gamma$-approximate state $t$-design in trace distance for 
    \[
    t \le c\cdot \begin{cases}
        M, & \gamma\ge \gamma_\star*,
    \\[1mm]
        \gamma^2 q^{m/2}, & \gamma<\gamma_\star
    \end{cases},
    \]
    where $c > 0$ is a universal constant.
\end{theorem}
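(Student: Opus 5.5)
The plan is to combine \cref{lemma:general-c-state-design} with \cref{corollary:rm-decode-bounds}, exactly as in the proof of \cref{lemma:general-rm-bound}, but now tracking $\gamma$ explicitly. Let $C' = \RM_q[m,2d_0+1]$. Since $2d_0+1 \le d$, we have $C' \subseteq C$, hence $C^\perp \subseteq C'^\perp$ and $p^t_{\rm decode}(C^\perp) \ge p^t_{\rm decode}(C'^\perp)$. For any $0 \le r \le m$ and $t \le D_{d_0,q}(r)$, \cref{corollary:rm-decode-bounds} together with \cref{lemma:general-c-state-design} then gives
\[
\mathrm{D}_{\rm tr}(\rho_C^t,\rho_{\rm Haar}^t) \le \sqrt{2tq^{r-m}} + \frac{t^2}{q^m}.
\]
The task is to choose $r$ so that this is at most $\gamma$, with $t$ as claimed. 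I will take $r = \lfloor m/2\rfloor$ throughout, so that $q^{r-m} \le q^{-m/2}$ and $D_{d_0,q}(r) \ge \binom{\lfloor m/2\rfloor}{d_0} = M$, counting squarefree monomials of degree exactly $d_0$ as in \cref{thm:rm-testing-lb}. The standing hypothesis of the corollary, $d_0 \le \lfloor (m(q-1)-2)/2\rfloor$, should be checked here; if it fails then $M = 0$ and the statement is vacuous.

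With this choice of $r$, any $t \le M$ satisfies
\[
\mathrm{D}_{\rm tr} \le \sqrt{2t/q^{m/2}} + \Big(\sqrt{t/q^{m/2}}\Big)^4.
\]
Write $u \coloneqq t/q^{m/2}$. If $u \le c\gamma^2$ with $c \le 1/8$, then the bound becomes $\sqrt{2c}\,\gamma + c^2\gamma^4 \le \gamma$, using $\gamma < 1$. So it suffices that $t \le \min\{M, c\gamma^2 q^{m/2}\}$. The two cases of the theorem are just the two sides of this minimum.
\begin{itemize}
\item If $\gamma \ge \gamma_\star$, then $\gamma^2 q^{m/2} \ge M$, so the minimum is $\ge cM$ and $t \le cM$ works.
\item If $\gamma < \gamma_\star$, then $\gamma^2 q^{m/2} < M$, so $t \le c\gamma^2 q^{m/2}$ automatically satisfies $t \le M$.
\end{itemize}

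The only delicate point is bookkeeping. The design property is needed for $t$ up to the stated threshold, and $t \le D_{d_0,q}(r)$ must hold so that the corollary applies; both are ensured by $t \le cM \le M$ in the first case and by $t \le c\gamma^2q^{m/2} < M$ in the second. A smaller issue is that $\lfloor m/2\rfloor$ versus $m/2$ affects $q^{r-m}$ only favorably, since $r - m \le -m/2$. I do not expect a genuine obstacle. The substantive input is \cref{corollary:rm-decode-bounds}, which the statement allows me to assume.
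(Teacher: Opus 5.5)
Your proposal is correct and matches the paper's proof. Both combine \cref{lemma:general-c-state-design} with \cref{corollary:rm-decode-bounds} at $r=\lfloor m/2\rfloor$, use $D_{d_0,q}(r)\ge M$, note that $t\le c\gamma^2 q^{m/2}$ makes both error terms $O(\gamma)$, and read off the two cases from $\min\{M,\gamma^2 q^{m/2}\}$.

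You are a bit more careful than the paper, which leaves implicit both the subcode step $\RM_q[m,2d_0+1]\subseteq C$ and the corollary's hypothesis on $d_0$. One minor nit: for $q=2$ that hypothesis can fail with $M=1$ rather than $M=0$, but then $t\le cM<1$ is still vacuous.
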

\begin{proof}
    We apply \cref{lemma:general-c-state-design} and then \cref{corollary:rm-decode-bounds} with $r = \lfloor m/2 \rfloor$. For every $t \le D_{d_0, q}(r)$, we get
    \[
    \mathrm{D}_{\rm tr}(\rho^t_C, \rho^t_{\rm Haar}) \le \sqrt{2 t q^{r-m}} + \frac{t^2}{q^m} \le O\left(\sqrt{\frac{t}{q^{m/2}}}\right) + \frac{t^2}{q^m}.
    \]
    
    As argued in \cref{thm:rm-testing-lb}, we have $D_{d_0,q}(r) \ge M$.
    Therefore, we may take $t$ up to a constant multiple of $M$, subject only to making sure the two error terms are at most $O(\gamma)$.
    Observe that if $t \le c \gamma^2 q^{m/2}$, then 
    \[
    \sqrt{\frac{t}{q^{m/2}}} \le O(\sqrt{c}) \gamma \qquad \text{and} \qquad \frac{t^2}{q^m} \le c^2 \gamma.
    \]
    By choosing $c > 0$ sufficiently small gives $\mathrm{D}_{\rm tr}(\rho^t_C, \rho^t_{\rm Haar}) \le \gamma$.
    Thus we have a $\gamma$-approximate $t$-design with 
    \[
    t \le O\left(\min\{M, \gamma^2 q^{m/2} \} \right). 
    \]
    The two cases claimed in the result follow by noting the equality $M = \gamma^2 q^{m/2}$ occurs at $\gamma = \gamma_\star$.
\end{proof}

The usefulness of \cref{thm:rm-designs} depends on the application, and we leave a systematic exploration of such applications to future work. We do note, however, that exponentially accurate designs can be obtained from low-degree phase states, as the following corollary shows.

\begin{corollary}
\label{cor:low-degree-design}
   In the setting of \cref{thm:rm-designs}, suppose further that $M = 2^{o(m)}$. Then for any $a < \frac{1}{4} \log_2 q$, the uniform ensemble of degree-$d$ phase states is a $2^{-am}$-approximate state $t$-design for $t = \Omega(M)$.
\end{corollary}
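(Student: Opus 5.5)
We apply \cref{thm:rm-designs} with the target accuracy $\gamma \coloneqq 2^{-am}$, and check that in this regime the theorem allows $t$ to be a constant multiple of $M$. By \cref{thm:rm-designs}, the uniform ensemble of degree-$d$ phase states is a $\gamma$-approximate $t$-design for every $t \le c\cdot\min\{M,\gamma^2 q^{m/2}\}$. This is exactly the form of the bound derived in its proof: the two regimes $\gamma\ge\gamma_\star$ and $\gamma<\gamma_\star$ are just the two cases of this minimum. It therefore suffices to show $\gamma^2 q^{m/2} \ge M$ for all sufficiently large $m$. Then the minimum equals $M$, and $t = \lfloor cM\rfloor = \Omega(M)$ is admissible.

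The key step is the exponent comparison. We have
\[
\gamma^2 q^{m/2} = 2^{-2am}\cdot 2^{\frac{m}{2}\log_2 q} = 2^{m\left(\frac12\log_2 q - 2a\right)}.
\]
The hypothesis $a < \frac14\log_2 q$ gives $\beta \coloneqq \frac12\log_2 q - 2a > 0$, so $\gamma^2 q^{m/2} = 2^{\beta m}$ with $\beta$ a fixed positive constant. The assumption $M = 2^{o(m)}$ means $\log_2 M \le \beta m$ for all large $m$. Hence $M \le 2^{\beta m} = \gamma^2 q^{m/2}$. Equivalently, $\gamma \ge \gamma_\star = \sqrt{M/q^{m/2}}$, which places us in the first case of \cref{thm:rm-designs}.

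One subtlety: if $q$ is allowed to grow with $m$, then $\beta$ is not a constant. Even so, $\beta \ge \frac12\log_2 q - 2a$ stays bounded below by a positive quantity whenever $a$ is a fixed fraction below $\frac14\log_2 q$, and $o(m)$ is eventually dominated. For fixed prime $q$ and fixed $a$, the argument is immediate.

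The argument has no real obstacle. The only things to verify are the arithmetic above and that \cref{thm:rm-designs} indeed yields the $\min\{M,\gamma^2q^{m/2}\}$ form, with $D_{d_0,q}(\lfloor m/2\rfloor)\ge M$ as noted in the proof of \cref{thm:rm-testing-lb}. Finally, $\gamma = 2^{-am}\in(0,1)$ for $a>0$. For $a\le 0$, the claim follows a fortiori from the case of any small positive $a$, since a better approximation accuracy implies a worse one.
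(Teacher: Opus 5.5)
Your proof is correct and follows the paper's argument exactly. You set $\gamma = 2^{-am}$ in \cref{thm:rm-designs}, observe that $\gamma^2 q^{m/2} = 2^{(\frac12\log_2 q - 2a)m}$ has a positive exponent, and use $M = 2^{o(m)}$ to get $M \le \gamma^2 q^{m/2}$ (i.e.\ $\gamma \ge \gamma_\star$) for large $m$, which gives $t = \Omega(M)$. Your aside about growing $q$ is unnecessary, since for fixed $a$ the exponent only increases with $q$, but it does no harm.
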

\begin{proof}
    Set $\gamma = 2^{-am}$ in \cref{thm:rm-designs}. We get that 
    \[
    \gamma^2 q^{m/2} = 2^{(\frac{1}{2}\log_2 q - 2a)m}.
    \]
    Because $a < \frac{1}{4} \log_2 q$, we have 
    \[
    \frac{1}{2} \log_2 q - 2a > 0.
    \]
    Thus $\gamma^2 q^{m/2} = 2^{\Omega(m)}$.
    On the other hand, we have by supposition that $M = 2^{o(m)}$. Therefore, for sufficiently large $m$, $M \le \gamma^2 q^{m/2}$. Thus, we are in the case of \cref{thm:rm-designs} where the ensemble is a $2^{-am}$-approximate $t$-design for $t = \Omega(M)$.
\end{proof}

Of particular note is the sharp transition between degrees $2$ and $3$. Random degree-$2$ phase states form a $3$-design, whereas random degree-$3$ phase states already form an $n$-design.

\subsection{Degree-2 phase states}
\label{subsec:deg-2-testing-lbs}
Interestingly, our lower bounds in the previous section become trivial in the case $d=2$. Since degree-$2$ phase states are a subset of stabilizer states, which are known to be testable using $6$ copies~\cite{GNW21}, it is conceivable that degree-$2$ phase states are also testable using $O(1)$ copies.  Nevertheless, we show that testing degree-$2$ phase states still requires $\Omega(m)$ copies, matching the $O(m)$ upper bound to learn the entire state~\cite{Rot09}.
Our lower bound follows from a different argument. 
Instead of comparing the ensemble $\rho_C$, where $C = \RM_q[m,2]$, with the all-phase state ensemble $\rho_{\rm all}$, we obtain a new ensemble by truncating the support of phase states to a randomly chosen hyperplane.

We adopt notation that is more convenient for the degree-2 case. Let $1_t \in \F_q^t$ denote the all-ones vector over $\F_q$ of length $t$.
Given $x_1, \dots, x_t \in \F_q^m$, we let $X \in \F_q^{m \times t}$ be the matrix whose $i$'th column is $x_i$.
Define $\ket{X} \coloneqq \ket{x_1, \dots, x_t}$. For each $Q \in \F_q^{m \times m}$ and $\ell \in \F_q^m$, define the quadratic polynomial $f_{Q,\ell}: \F_q^m \to \F_q$ as
\[
f_{Q,\ell}(x) \coloneqq x^\top Q x + \langle \ell, x\rangle
\]
and let $\ket{\psi_{f_{Q,\ell}}} \coloneqq q^{-m/2}\sum_{x \in \F_q^m}{\omega^{f_{Q,\ell}(x)}\ket{x}}$.
Note that this covers all $m$-variate degree-$2$ polynomials over $\F_q$ with no constant term, which we ignore because it only changes the corresponding degree-$2$ phase state by a global phase. Moreover, for uniformly random $Q \in \F_q^{m \times m}$ and $\ell \in \F_q^m$, $f_{Q,\ell}$ will be a uniformly random degree-$2$ polynomial with no constant term.
Define $\rho_0 \coloneqq \E_{\boldsymbol{f}}[\ketbra{\psi_{\boldsymbol{f}}}{\psi_{\boldsymbol{f}}}^{\otimes t}]$, which is the average over all degree-$2$ phase states.

\begin{lemma}
\label{lemma:deg-2-yes-decomp}
The state $\rho_0$ can be written as
\[
\rho_0 = q^{-mt}\sum_{X,Y\in \F_q^{m \times t}} \mathbf{1}\{XX^\top = YY^\top \text{and }  X1_t = Y1_t\}\ketbra{X}{Y}.
\]
\end{lemma}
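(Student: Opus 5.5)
The plan is a direct Fourier computation. First, replace the average over uniformly random degree-$2$ polynomials $\boldsymbol{f}$ (with no constant term) by an average over uniformly random $Q \in \F_q^{m\times m}$ and $\ell \in \F_q^m$. This is legitimate because the map $(Q,\ell) \mapsto f_{Q,\ell}$ is $\F_q$-linear and surjective onto that space of polynomials, as noted in the text. A surjective linear map pushes the uniform distribution forward to the uniform distribution, since every fiber is a coset of the kernel and so all fibers have the same size. Hence
\[
\rho_0 = \E_{\bQ,\bell}\Big[\ketbra{\psi_{f_{\bQ,\bell}}}{\psi_{f_{\bQ,\bell}}}^{\otimes t}\Big].
\]

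Next, expand the tensor power in the computational basis. With $X$ having columns $x_1,\dots,x_t$ and $Y$ having columns $y_1,\dots,y_t$, this gives
\[
\rho_0 = q^{-mt}\sum_{X,Y\in\F_q^{m\times t}} \E_{\bQ,\bell}\Big[\omega^{\sum_{i=1}^t (f_{\bQ,\bell}(x_i) - f_{\bQ,\bell}(y_i))}\Big]\ketbra{X}{Y}.
\]

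The key step is to rewrite the exponent as two inner products. For the quadratic part,
\[
\sum_i x_i^\top Q x_i = \sum_{j,k} Q_{jk}\sum_i (x_i)_j (x_i)_k = \langle Q, XX^\top\rangle,
\]
where $\langle\cdot,\cdot\rangle$ denotes the entrywise (Frobenius) inner product over $\F_q$. For the linear part, $\sum_i\langle \ell, x_i\rangle = \langle \ell, X1_t\rangle$. The exponent is therefore
\[
\langle Q, XX^\top - YY^\top\rangle + \langle \ell, X1_t - Y1_t\rangle.
\]

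Since $\bQ$ and $\bell$ are independent, the expectation factors into two terms. I then apply the character orthogonality identity $\E_{\bv\sim V}[\omega^{\langle u,\bv\rangle}] = \mathbf{1}[u\in V^\perp]$, already used in the proof of \cref{lemma:as-projection-channel}, with $V$ the full space. Here $q$ prime is needed, so that $\F_q$ is the integers mod $q$ and $\omega^{a}$ is a nontrivial character whenever $a \neq 0$.
\begin{itemize}
\item Taking $V=\F_q^{m\times m}$, the $Q$-average equals $\mathbf{1}\{XX^\top = YY^\top\}$.
\item Taking $V=\F_q^m$, the $\ell$-average equals $\mathbf{1}\{X1_t = Y1_t\}$.
\end{itemize}
Multiplying the two indicators yields the claimed formula.

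The only step needing real care is the first one: the lemma is stated as an average over uniform polynomials $\boldsymbol{f}$, but the computation averages over uniform $Q$. This matters when $q=2$, where distinct matrices $Q$ give the same polynomial (the diagonal terms $Q_{jj}x_j^2 = Q_{jj}x_j$ overlap with the linear part). The pushforward argument in the first step handles this.

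It is also essential to average over all of $\F_q^{m\times m}$ rather than only symmetric matrices. Only then does the orthogonality relation force the full matrix equality $XX^\top = YY^\top$. Since $XX^\top$ and $YY^\top$ are symmetric, the symmetric version would in fact also force this equality when $q$ is odd, but the full average avoids needing to check that.
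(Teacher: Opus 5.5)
Your proposal is correct and matches the paper's proof step for step. Both average over uniform $(Q,\ell)$, rewrite the exponent as $\langle Q, XX^\top - YY^\top\rangle + \langle \ell, X1_t - Y1_t\rangle$, factor the expectation, and apply character orthogonality to each factor; the one thing you add is the pushforward argument that uniform $(Q,\ell)$ induces a uniform polynomial $f_{Q,\ell}$ (including when $q=2$), which the paper only asserts in its setup.
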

\begin{proof}
For uniformly random $\bQ \sim \F_q^{m \times m}$ and $\boldsymbol{\ell} \sim \F_q^m$, using the identity $\E_{\bv \sim \F_q^a}\left[\omega^{\langle  u,\bv \rangle}\right] = \mathbf{1}\{u = 0^a\}$, we have that 
\begin{align*}
    \rho_0 &= q^{-mt} \E_{\bQ, \boldsymbol{\ell}}\left[\sum_{X, Y \in \F_q^{m \times t}}{\omega^{\sum_{i=1}^t{f_{\bQ,\boldsymbol{\ell}}(x_i)} - \sum_{i=1}^t{f_{\bQ,\boldsymbol{\ell}}(y_i)}} \ketbra{X}{Y}}\right] \\
    &= q^{-mt} \sum_{X, Y \in \F_q^{m \times t}}{\E_{\bQ, \boldsymbol{\ell}}\left[\omega^{\sum_{i=1}^t{(\langle \bQ, x_ix_i^\top\rangle + \langle \boldsymbol{\ell}, x_i\rangle)} - \sum_{i=1}^t{(\langle \bQ, y_iy_i^\top\rangle + \langle \boldsymbol{\ell}, y_i\rangle)}}\right] \ketbra{X}{Y}} \\
    &= q^{-mt} \sum_{X, Y \in \F_q^{m \times t}}{\E_{\bQ, \boldsymbol{\ell}}\left[\omega^{\langle \bQ, XX^\top\rangle + \langle \boldsymbol{\ell}, X1_t \rangle - \langle \bQ, YY^\top\rangle - \langle \boldsymbol{\ell}, Y1_t \rangle} \right] \ketbra{X}{Y}} \\
    &= q^{-mt} \sum_{X, Y \in \F_q^{m \times t}}{\E_{\bQ}\left[\omega^{\langle XX^\top - YY^\top, \bQ\rangle}\right] \E_{\boldsymbol{\ell}}\left[[\omega^{\langle X1_t - Y1_t, \boldsymbol{\ell}\rangle}\right]\ketbra{X}{Y}} \\
    &= q^{-mt} \sum_{X, Y \in \F_q^{m \times t}}{\mathbf{1}\left\{XX^\top - YY^\top = 0\right\} \mathbf{1}\left\{X1_t - Y1_t = 0\right\}\ketbra{X}{Y}} \\
    &= q^{-mt} \sum_{X, Y \in \F_q^{m \times t}}{\mathbf{1}\left\{XX^\top = YY^\top \text{and } X1_t = Y1_t\right\}\ketbra{X}{Y}}. \qedhere
\end{align*} 
\end{proof}

We now define an ensemble of states, each of which is $(1-1/q)$-far from all degree-$2$ phase states.
For nonzero $v \in \F_q^m$, define the hyperplane $H_v \coloneqq \{x \in \F_q^m: \langle x , v \rangle = 0\}$ and the projector $P_v \coloneqq \sum_{x \in H_v}\ketbra{x}{x}$.
For an $m$-variate degree-$2$ $\F_q$-polynomial $f$, define $\ket{\psi_{v,f}} \coloneqq q^{-(m-1)/2}\sum_{x \in H_v} \omega^{f(x)} \ket{x}$. It is easy to see that 
\begin{equation}
\label{eq:proj-truncation}
\ket{\psi_{v,f}} = \sqrt{q} P_v \ket{\psi_f}.
\end{equation}
Define the ensemble $\rho_1 \coloneqq \E_{\bv \neq 0^m, \boldsymbol{f}}[\ketbra{\psi_{\bv,f}}{\psi_{\bv,f}}^{\otimes t}]$. 
\cref{eq:proj-truncation} immediately implies the following.
\begin{fact}\label{rho1-overlap}
The state $\rho_1$ can be written as
\[
\rho_1 = q^t \E_{\bv \neq 0^m}[P_{\bv}^{\otimes t} \rho_0 P_{\bv}^{\otimes t}], 
\]
and
\[
\braket{X}{\rho_1|Y} = q^t \E_{\bv \neq 0^m} \left[\mathbf{1}\{X \subseteq H_{\bv}\} \mathbf{1}\{Y \subseteq H_{\bv}\} \right] \braket{X}{\rho_0 |Y},
\]
where $X \subseteq H_v$ means that every column of $X$ is contained in the hyperplane $H_v$.
\end{fact}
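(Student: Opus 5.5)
The first identity follows by substituting \cref{eq:proj-truncation} into the definition of $\rho_1$ and using linearity; the second follows by taking matrix entries in the computational basis of $(\C^{q^m})^{\otimes t}$.

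\emph{First identity.} Fix a nonzero $v$ and a degree-$2$ polynomial $f$. By \cref{eq:proj-truncation}, $\ket{\psi_{v,f}} = \sqrt{q}\,P_v\ket{\psi_f}$, so
\[
\ketbra{\psi_{v,f}}{\psi_{v,f}} = q\,P_v\ketbra{\psi_f}{\psi_f}P_v .
\]
Taking the $t$-fold tensor power gives
\[
\ketbra{\psi_{v,f}}{\psi_{v,f}}^{\otimes t} = q^t\,P_v^{\otimes t}\ketbra{\psi_f}{\psi_f}^{\otimes t}P_v^{\otimes t}.
\]
Now take the expectation over $\boldsymbol{f}$ and $\bv$, which are independent. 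For fixed $v$, the map $\sigma\mapsto P_v^{\otimes t}\sigma P_v^{\otimes t}$ is linear, so it commutes with the expectation over $\boldsymbol{f}$. That expectation is exactly $\rho_0$, which yields $\rho_1 = q^t\,\E_{\bv\neq 0^m}[P_{\bv}^{\otimes t}\rho_0P_{\bv}^{\otimes t}]$.

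\emph{Second identity.} We compute $\bra{X}\cdot\ket{Y}$ of the first identity, with $\ket{X}=\ket{x_1,\dots,x_t}$. Since $P_v$ is diagonal in the computational basis,
\[
P_v^{\otimes t}\ket{Y} = \prod_{i=1}^t\mathbf{1}\{y_i\in H_v\}\,\ket{Y} = \mathbf{1}\{Y\subseteq H_v\}\ket{Y},
\]
and similarly on the bra side. Hence
\[
\bra{X}P_v^{\otimes t}\rho_0P_v^{\otimes t}\ket{Y} = \mathbf{1}\{X\subseteq H_v\}\,\mathbf{1}\{Y\subseteq H_v\}\,\bra{X}\rho_0\ket{Y}.
\]
The factor $\bra{X}\rho_0\ket{Y}$ does not depend on $v$, so it pulls out of $\E_{\bv\neq 0^m}$, giving the stated formula.

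There is no real obstacle here; the only point requiring care is keeping the normalization straight. The factor $q^{t}$ comes from $(\sqrt{q})^{2t}$, since the amplitude $q^{-(m-1)/2}$ in $\ket{\psi_{v,f}}$ equals $\sqrt{q}\cdot q^{-m/2}$.
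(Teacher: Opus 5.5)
Your proposal is correct and follows essentially the same route as the paper: substitute \cref{eq:proj-truncation}, take the $t$-fold tensor power to get the factor $q^t$, average over $\boldsymbol{f}$ to produce $\rho_0$ and then over $\bv$, and read off matrix entries using $P_v^{\otimes t}\ket{X} = \mathbf{1}\{X\subseteq H_v\}\ket{X}$. Your remarks on linearity commuting with the expectation and on the independence of $\bv$ and $\boldsymbol{f}$ make explicit what the paper leaves implicit.
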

\begin{proof}
By \cref{eq:proj-truncation}, $\ket{\psi_{v,f}}^{\otimes t} = q^{t/2} P_v^{\otimes t}\ket{\psi_f}^{\otimes t}$, 
and hence
\[
\ketbra{\psi_{v,f}}{\psi_{v,f}}^{\otimes t} = q^t P_v^{\otimes t}\ketbra{\psi_f}{\psi_f}^{\otimes t}P_v^{\otimes t}.
\]
Averaging first over $f$ and then over $v \neq 0^m$ gives
\[
\rho_1 = q^t \E_{\bv \neq 0^m}\left[P_{\bv}^{\otimes t}\rho_0 P_{\bv}^{\otimes t}\right].
\]
Finally, note that $P_v^{\otimes t}\ket{X} = \mathbf{1}\{X\subseteq H_v\}\ket{X}$, so 
\[
\bra{X}\rho_1\ket{Y} = q^t \E_{\bv \neq 0^m}\!\left[\mathbf{1}\{X\subseteq H_{\bv}\}\mathbf{1}\{Y\subseteq H_{\bv}\}\right]\bra{X}\rho_0\ket{Y}.
\qedhere
\]
\end{proof}
Next, we show that $\rho_0$ and $\rho_1$ admit a block diagonal structure.
Define the projectors $\Pi_{=t} \coloneqq \sum_{X \in \F_q^{m \times t} : \rank(X) = t} \ketbra{X}{X}$ and $\Pi_{<t}\coloneqq I - \Pi_{=t}$.
We need the following basic fact.

\begin{fact}\label{fact:Y-full-too}
If $X, Y \in \F_q^{m \times t}$ satisfy $\rank(X) = t$ and $XX^\top = YY^\top$, then $\rank(Y) = t$, $\col(X) = \col(Y)$.
\end{fact}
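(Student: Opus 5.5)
The plan is to use a left inverse of $X$. Since $\rank(X) = t$, the $t$ columns of $X$ are linearly independent in $\F_q^m$, so $t \le m$. Hence there is a matrix $L \in \F_q^{t \times m}$ with $LX = I_t$; for instance, extend the columns of $X$ to a basis and take the corresponding dual functionals. This needs nothing about the field beyond linear algebra. In particular it does not rely on any positivity of $XX^\top$, which would fail over $\F_q$ because isotropic vectors exist. That is exactly why I would not argue via $\rank(XX^\top) = \rank(X)$ directly, and the left inverse is the device that avoids the issue.

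First I would show $\rank(Y) = t$. Multiplying the hypothesis $XX^\top = YY^\top$ on the left by $L$ and on the right by $L^\top$ gives
\[
(LY)(LY)^\top = LYY^\top L^\top = LXX^\top L^\top = (LX)(LX)^\top = I_t .
\]
So the $t \times t$ matrix $LY$ is invertible. Therefore $t = \rank(LY) \le \rank(Y) \le t$, which gives $\rank(Y) = t$.

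Next I would show $\col(X) = \col(Y)$ by passing through the common matrix $XX^\top = YY^\top$. Clearly $\col(XX^\top) \subseteq \col(X)$. Also $\rank(XX^\top) \ge \rank(LXX^\top L^\top) = \rank(I_t) = t = \dim \col(X)$, so $\col(XX^\top) = \col(X)$. The same argument applies to $Y$: $\col(YY^\top) \subseteq \col(Y)$, and $\rank(YY^\top) = \rank(XX^\top) \ge t = \dim\col(Y)$, so $\col(YY^\top) = \col(Y)$. Since $XX^\top = YY^\top$, we get $\col(X) = \col(XX^\top) = \col(YY^\top) = \col(Y)$.

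The only step needing care is the rank lower bound $\rank(XX^\top) \ge t$. Over a finite field this can fail for a general $t$-dimensional column space. It holds here because we compress by $L$, which turns $XX^\top$ into the identity. Everything else is routine.
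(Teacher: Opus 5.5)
Your proof is correct and is essentially the paper's argument. The paper gets $\col(XX^\top)=\col(X)$ from surjectivity of $X^\top$, and your $L$ is exactly a witness of that surjectivity, since $X^\top L^\top = I_t$; it then uses $\col(X)=\col(YY^\top)\subseteq\col(Y)$ plus a dimension count, which also gives $\rank(Y)=t$ and makes your separate $(LY)(LY)^\top=I_t$ step unnecessary. Your stated worry is misplaced, though: when $X$ has full column rank, $\rank(XX^\top)=\rank(X)$ holds over every field, so the ``direct'' route you avoid is precisely the paper's, and isotropy only causes trouble for $X^\top X$ or when the columns are dependent.
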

\begin{proof}
We first observe the following. If $\rank(X) = t$, then $X^T: \F_q^m \to \F_q^t$ is surjective, so $\im(XX^T) = \im(X) = \col(X)$. 
Hence,  $\rank(XX^T) = t$. Using this we also have that $\col(XX^\top) = \col(X)$. Because $XX^\top = Y Y^\top$, we have $\col(YY^\top) = \col(XX^\top) = \col(X)$. It is always true that $\col(YY^\top) \subseteq \col(Y)$. Thus, $\col(X) \subseteq \col(Y)$. Because $\dim(\col(X)) = t$ and $Y$ only has $t$ columns, we conclude that $\dim(\col(Y)) = t$. Therefore, $\rank(Y) = t$ and $\col(Y) = \col(X)$.
\end{proof}

We can now prove that the ensembles $\rho_0$ and $\rho_1$ are block diagonal with respect to the decomposition into linearly independent and rank-deficient blocks. 

\begin{lemma}
The trace distance between $\rho_0,\rho_1$ can be expressed as follows
\[
\norm{\rho_0 - \rho_1}_{1} = 
\norm{\Pi_{=t}(\rho_0 - \rho_1) \Pi_{=t}}_1
+ \norm{\Pi_{<t}(\rho_0 - \rho_1) \Pi_{<t}}_1.
\]
\end{lemma}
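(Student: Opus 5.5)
The plan is to show that $\rho_0-\rho_1$ has no off-diagonal blocks with respect to the orthogonal decomposition $I=\Pi_{=t}+\Pi_{<t}$, i.e.\ that $\Pi_{=t}(\rho_0-\rho_1)\Pi_{<t}=0$ and $\Pi_{<t}(\rho_0-\rho_1)\Pi_{=t}=0$. Once this is known, $\rho_0-\rho_1=\Pi_{=t}(\rho_0-\rho_1)\Pi_{=t}+\Pi_{<t}(\rho_0-\rho_1)\Pi_{<t}$. The two summands act on orthogonal subspaces, so the trace norm is additive over them. Concretely, the singular values of a block-diagonal operator are the union of the singular values of its blocks. This gives the claimed identity.

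To kill the off-diagonal blocks, I will work entrywise in the computational basis $\{\ket{X}\}$. The projectors $\Pi_{=t},\Pi_{<t}$ are diagonal in this basis. So it suffices to show that $\bra{X}\rho_0\ket{Y}=0$ and $\bra{X}\rho_1\ket{Y}=0$ whenever exactly one of $X,Y$ has rank $t$. By \cref{lemma:deg-2-yes-decomp}, $\bra{X}\rho_0\ket{Y}$ is proportional to $\mathbf{1}\{XX^\top=YY^\top,\ X1_t=Y1_t\}$.

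Suppose $\rank(X)=t$ and this indicator is nonzero. Then \cref{fact:Y-full-too} forces $\rank(Y)=t$. Hence $\bra{X}\rho_0\ket{Y}=0$ whenever $\rank(X)=t$ and $\rank(Y)<t$. The opposite case, $\rank(Y)=t$ and $\rank(X)<t$, follows by applying the same fact with the roles swapped, since the condition $XX^\top=YY^\top$ is symmetric. Alternatively it follows from Hermiticity of $\rho_0$.

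For $\rho_1$, \cref{rho1-overlap} expresses $\bra{X}\rho_1\ket{Y}$ as a scalar multiple of $\bra{X}\rho_0\ket{Y}$. It therefore vanishes on exactly the same off-diagonal pairs. Subtracting, the entries of $\rho_0-\rho_1$ vanish on these pairs as well.

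There is no real obstacle here. The only points needing care are that the symmetric use of \cref{fact:Y-full-too} covers both off-diagonal blocks, and stating the trace-norm additivity for block-diagonal operators cleanly. For the latter, one can write $A=A_1\oplus A_2$ and note that $|A|=|A_1|\oplus|A_2|$.
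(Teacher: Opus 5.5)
Your proposal is correct and follows essentially the same route as the paper. It kills the off-diagonal blocks entrywise via \cref{lemma:deg-2-yes-decomp} and \cref{fact:Y-full-too} for $\rho_0$, transfers this to $\rho_1$ through the entrywise scalar relation in \cref{rho1-overlap}, and concludes by additivity of the trace norm over orthogonal diagonal blocks; your remark that the symmetry of $XX^\top=YY^\top$ (or Hermiticity) handles the second off-diagonal block makes explicit a step the paper leaves implicit.
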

\begin{proof}
To this end, we first prove that
  \[
   \Pi_{=t} \rho_0 \Pi_{<t}  =
   \Pi_{<t} \rho_0 \Pi_{=t}  
   = 0 \qquad\text{and}\qquad 
   \Pi_{=t} \rho_1 \Pi_{<t} = 
   \Pi_{<t}\rho_1  \Pi_{=t} 
   = 0.
    \]
    To see this, suppose $\braket{X}{\rho_0 | Y} \neq 0$. By~\cref{lemma:deg-2-yes-decomp}, we have that \[
\braket{X}{\rho_0|Y} = q^{-mt} \cdot \mathbf{1}\{XX^T = Y Y^T \text{and } X1_t = Y1_t\},
\] which implies in particular that $XX^\top = YY^\top$.
Thus, by \cref{fact:Y-full-too}, if $\rank(X) = t$, then $Y$ also has full rank. 
Thus, it is impossible to have nonzero terms between the full-rank and rank-deficient blocks of $\rho_0$. I.e.,  $\Pi_{=t} \rho_0 \Pi_{<t}  = \Pi_{<t} \rho_0 \Pi_{=t} = 0$.
If $\braket{X}{\rho_1|Y} \neq 0$, then by \cref{rho1-overlap} we must have that $\braket{X}{\rho_0|Y} \neq 0$, so the same argument applies. Using this one can see that $\rho_0-\rho_1$ is block diagonal with respect to the decomposition $\Pi_{=t}+\Pi_{<t}$.
The trace norm of a block-diagonal operator is the sum of the trace norms of its diagonal blocks, which gives the claim.
\end{proof}    

To get a bound on the trace distance between $\rho_0$ and $\rho_1$, we now bound the $1$-norm of the two blocks separately.
\begin{lemma}
\label{lemma:deg-2-LI-block-bound}
One can bound the two trace norms in the previous lemma as
\[
\norm{\Pi_{=t}(\rho_0 - \rho_1) \Pi_{=t}}_1
\leq q^{t-m}, \qquad \norm{\Pi_{<t}(\rho_0 - \rho_1) \Pi_{<t}}_1 \leq (q+1) \cdot q^{t-m}.
\]
\end{lemma}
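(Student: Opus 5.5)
The plan is to exploit the block structure inside each of the two blocks separately. For the full-rank block, \cref{fact:Y-full-too} says that whenever $\braket{X}{\rho_0|Y}\neq 0$ with $\rank(X)=t$, we have $\col(X)=\col(Y)$. Since $X\subseteq H_v$ depends only on $\col(X)$, it follows that $\mathbf{1}\{X\subseteq H_v\}\mathbf{1}\{Y\subseteq H_v\}=\mathbf{1}\{\col(X)\subseteq H_v\}$ on every nonzero entry.

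For a fixed $t$-dimensional subspace $W$, I would count the nonzero $v$ with $W\subseteq H_v$, i.e.\ $v\in W^\perp\setminus\{0\}$. This gives
\[
\Pr_{\bv\neq 0^m}[W\subseteq H_{\bv}]=\frac{q^{m-t}-1}{q^m-1},
\]
which does not depend on $W$. Hence, by \cref{rho1-overlap}, every nonzero entry of $\Pi_{=t}\rho_1\Pi_{=t}$ equals $c_t$ times the corresponding entry of $\rho_0$, where $c_t\coloneqq q^t\frac{q^{m-t}-1}{q^m-1}$. So
\[
\Pi_{=t}(\rho_0-\rho_1)\Pi_{=t}=(1-c_t)\,\Pi_{=t}\rho_0\Pi_{=t}.
\]
The operator $\Pi_{=t}\rho_0\Pi_{=t}$ is PSD with trace at most $1$. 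Its trace norm is therefore at most $1$, and the first bound follows from
\[
|1-c_t|=\frac{q^t-1}{q^m-1}\le q^{t-m}.
\]

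For the rank-deficient block I would not look for cancellation. Instead I would use the triangle inequality together with positivity:
\[
\norm{\Pi_{<t}(\rho_0-\rho_1)\Pi_{<t}}_1\le \tr(\Pi_{<t}\rho_0)+\tr(\Pi_{<t}\rho_1).
\]
By \cref{lemma:deg-2-yes-decomp}, the diagonal of $\rho_0$ is uniform, $q^{-mt}$. Therefore $\tr(\Pi_{<t}\rho_0)$ is the probability that $t$ uniform vectors in $\F_q^m$ are linearly dependent. A union bound over the event that $x_i$ lies in the span of $x_1,\dots,x_{i-1}$ gives at most $\sum_{i=1}^t q^{i-1-m}\le q^{t-m}$.

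By \cref{rho1-overlap}, the diagonal of $\rho_1$ is $q^{t-mt}\Pr_{\bv}[X\subseteq H_{\bv}]$. This is exactly the law of $t$ independent uniform vectors in a uniformly random hyperplane $H_{\bv}$, which has dimension $m-1$. Conditioning on $\bv$ and repeating the same union bound inside $H_{\bv}\cong\F_q^{m-1}$ gives $\tr(\Pi_{<t}\rho_1)\le q^{t-(m-1)}=q\cdot q^{t-m}$. Summing the two traces yields $(q+1)q^{t-m}$.

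The only step needing care is the first one. One has to justify that the entrywise rescaling by a function of $\col(X)$ is a uniform scalar multiple on the full-rank block. This is exactly where \cref{fact:Y-full-too} is used: nonzero entries only connect $X$ and $Y$ with the same column space, and the hyperplane-containment probability is the same for every $t$-dimensional subspace. Everything else is direct counting.
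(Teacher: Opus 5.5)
Your proof is correct and follows the paper's argument essentially step for step. On the full-rank block you use \cref{fact:Y-full-too} to obtain $\Pi_{=t}\rho_1\Pi_{=t}=\frac{q^m-q^t}{q^m-1}\,\Pi_{=t}\rho_0\Pi_{=t}$; on the rank-deficient block you bound the trace norm by $\tr(\Pi_{<t}\rho_0)+\tr(\Pi_{<t}\rho_1)$ and estimate the probability that $t$ uniform vectors in $\F_q^m$, respectively in $H_{\bv}$, are dependent, exactly as the paper does. Your explicit union bounds supply details the paper leaves implicit, and the one tacit requirement $t\le m$ for $\frac{q^t-1}{q^m-1}\le q^{t-m}$ (also tacit in the paper) holds automatically whenever $\Pi_{=t}\neq 0$.
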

\begin{proof}
We first prove the first inequality. Assume $\rank(X) = \rank(Y) = t$ and $\braket{X}{\rho_0|Y} \neq 0$. By \cref{fact:Y-full-too}, $\col(X) = \col(Y)$. The requirement that $X \subseteq H_v$ is equivalent to $v^\top X = 0$. Because $\col(X) = \col(Y)$, we have $\mathbf{1}\{X \subseteq H_v\} \mathbf{1}\{Y \subseteq H_v\} = \mathbf{1}\{v^\top X = 0\}$. Since $\dim(\col(X)) = t$, there are $q^{m-t}-1$ nonzero vectors $v$ orthogonal to it. Taking the expectation over $q^m-1$ nonzero choices for $v$, 
\begin{equation*}
    \E_{\bv \neq 0^m}[ \mathbf{1}\{X \subseteq H_{\bv}\} \mathbf{1}\{Y \subseteq H_{\bv}\}] =\E_{\bv \neq 0^m}[ \mathbf{1}\{\bv^\top X = 0\}] = \frac{q^{m-t}-1}{q^m - 1}.
\end{equation*}
    We have shown that $\Pi_{=t}\rho_1 \Pi_{=t} = \frac{q^m - q^t}{q^m -1} \Pi_{=t}\rho_0\Pi_{=t}$.
    Because $\Pi_{=t}\rho_0 \Pi_{=t}$ is PSD, its trace norm equals its trace. We have 
\begin{align*}
    \norm{\Pi_{=t}(\rho_0 - \rho_1) \Pi_{=t}}_1 &= \left(1 - \frac{q^m - q^t}{q^m - 1}\right) \norm{\Pi_{=t}\rho_0 \Pi_{=t}}_1 \\
    &= \left(1 - \frac{q^m - q^t}{q^m - 1}\right) \tr(\Pi_{=t}\rho_0) \\
    &\leq  1 - \frac{q^m - q^t}{q^m - 1} \\
    &\leq q^{t-m}.\qedhere
\end{align*}
We now prove the second inequality. By the triangle inequality and the fact that we are working with PSD operators, we have
\[
\norm{\Pi_{<t}(\rho_0 - \rho_1) \Pi_{<t}}_1
\leq \tr(\Pi_{<t}\rho_0) + \tr(  \Pi_{<t} \rho_1).
\]
For $\rho_0$, $t$ columns are drawn uniformly from $\F_q^m$. The probability they are rank-deficient is at most $q^{t-m}$. For $\rho_1$, after conditioning on $v$, the probability that $t$ columns drawn uniformly from $H_v$ are rank-deficient is at most $q^{t-m+1}$. Thus the overall bound is $q^{t-m} + q^{t-m+1} = (q+1) \cdot q^{t-m}$.
\end{proof}

An immediate corollary of the lemma above is the following.

\begin{proposition}
\label{prop:td-bound-deg-2}
We can upper bound the trace norm between $\rho_0,\rho_1$ as
\begin{equation*}
    \norm{\rho_0 - \rho_1}_1 \leq (q+2) \cdot q^{t-m}.
\end{equation*}
\end{proposition}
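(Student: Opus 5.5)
This is an immediate combination of the two preceding lemmas, so the proof has just two steps.

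First, the block-decomposition lemma gives
\[
\norm{\rho_0-\rho_1}_1 = \norm{\Pi_{=t}(\rho_0-\rho_1)\Pi_{=t}}_1 + \norm{\Pi_{<t}(\rho_0-\rho_1)\Pi_{<t}}_1 .
\]
This holds because $\rho_0-\rho_1$ is block diagonal with respect to $\Pi_{=t}+\Pi_{<t}$. The only nontrivial input there is \cref{fact:Y-full-too}, combined with the observation from \cref{rho1-overlap} that the support of $\rho_1$'s matrix entries is contained in that of $\rho_0$'s.

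Second, substitute the two bounds from \cref{lemma:deg-2-LI-block-bound}:
\[
\norm{\rho_0-\rho_1}_1 \le q^{t-m} + (q+1)\,q^{t-m} = (q+2)\,q^{t-m}.
\]

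Neither step is a real obstacle; the substance lies in the two lemmas, which I take as given. The one place I would double-check is the rank-deficiency estimates used in the second bound of \cref{lemma:deg-2-LI-block-bound}. For $t$ uniform columns in $\F_q^m$, the union bound over the events ``column $i$ lies in the span of the previous ones'' gives
\[
\sum_{i=1}^{t} q^{i-1-m} \le \frac{q^{t-m}}{q-1} \le q^{t-m}.
\]
For $t$ uniform columns in the hyperplane $H_v$, which is a $\F_q^{m-1}$-space, the same union bound gives at most $q^{t-m+1}$. These are exactly the values used, so the stated constant $(q+2)$ follows.
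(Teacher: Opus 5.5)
Your proposal is correct and matches the paper's argument: the proposition is an immediate corollary obtained by combining the block-diagonal decomposition lemma with \cref{lemma:deg-2-LI-block-bound}, giving $q^{t-m} + (q+1)q^{t-m} = (q+2)q^{t-m}$. Your union-bound check of the rank-deficiency probabilities is also correct: at most $q^{t-m}/(q-1)$ for $t$ uniform columns in $\F_q^m$, and at most $q^{t-m+1}/(q-1)$ for $t$ uniform columns in $H_v$. This supplies estimates that the paper asserts without proof.
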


\cref{prop:td-bound-deg-2} implies an $\Omega(m)$ lower bound via a standard argument.

\begin{theorem}
    Any $\eps$-tester for $m$-variate degree-$2$ phase states requires at least $m-2$ copies for~$\varepsilon\in (0,1-1/q]$. 
\end{theorem}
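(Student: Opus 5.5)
The plan is the standard two-ensemble indistinguishability argument, using \cref{prop:td-bound-deg-2} as the quantitative input. Suppose toward a contradiction that $\{M, I-M\}$ is an $\eps$-tester for degree-$2$ phase states using $t \le m-2$ copies. Let $\rho_0$ and $\rho_1$ be defined with this $t$.

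First, the YES side. Every $\ket{\psi_f}$ in the average defining $\rho_0$ is a degree-$2$ phase state, so completeness gives $\tr(M\ketbra{\psi_f}{\psi_f}^{\otimes t}) \ge 2/3$ for each $f$. Averaging yields $\tr(M\rho_0)\ge 2/3$.

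Second, the NO side. I will show that every truncated state $\ket{\psi_{v,f}}$ is $(1-1/q)$-far from all degree-$2$ phase states in infidelity. For any degree-$2$ polynomial $g$,
\[
\abs{\braket{\psi_g}{\psi_{v,f}}} = q^{-m/2}q^{-(m-1)/2}\Abs{\sum_{x\in H_v}\omega^{f(x)-g(x)}} \le q^{-m+1/2}\cdot q^{m-1} = q^{-1/2},
\]
since $\abs{H_v}=q^{m-1}$. Hence the fidelity is at most $1/q$ and the infidelity is at least $1-1/q\ge\eps$. Soundness then gives $\tr(M\ketbra{\psi_{v,f}}{\psi_{v,f}}^{\otimes t})\le 1/3$ for every pair $(v,f)$, and averaging gives $\tr(M\rho_1)\le 1/3$.

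Third, combine the two bounds. We get $1/3 \le \tr(M(\rho_0-\rho_1)) \le \tfrac12\norm{\rho_0-\rho_1}_1 \le \tfrac12 (q+2)q^{t-m}$ by \cref{prop:td-bound-deg-2}. With $t\le m-2$ this is at most $(q+2)/(2q^2)$. That quantity equals $1/2$ at $q=2$, which is not below $1/3$, so the claimed threshold needs care.

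This constant-chasing is the main obstacle. To resolve it, I would first note that a tester using fewer copies can be padded by ignoring the extra copies. It therefore suffices to rule out exactly $t=m-2$ via the trace-distance bound, and $t<m-2$ only improves $q^{t-m}$. For $q\ge3$ the bound $(q+2)/(2q^2)\le 5/18<1/3$ gives the contradiction directly.

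For $q=2$ I would sharpen \cref{lemma:deg-2-LI-block-bound}, for example by tightening the rank-deficiency probabilities. For $t$ columns uniform in $\F_q^{m}$, the probability of rank deficiency is at most $\sum_{i<t} q^{i-m}\le q^{t-m}/(q-1)$. Combined with the exact factor $\frac{q^t-1}{q^m-1}$ in the full-rank block, this gives roughly $\tfrac12\norm{\rho_0-\rho_1}_1\le \tfrac12(1/4+1/4+1/2)\cdot$(small corrections) at $t=m-2$. I expect that to land just below $1/3$ only with careful accounting. If it does not, the fallback is to accept the statement up to an additive constant, i.e.\ $t\ge m-O(1)$ copies.

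The heart of the proof is nonetheless the three displayed steps: completeness on $\rho_0$, farness plus soundness on $\rho_1$, and the trace-distance bound from \cref{prop:td-bound-deg-2}.
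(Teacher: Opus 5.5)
Your three steps match the paper's argument exactly: completeness on $\rho_0$, the $q^{-1/2}$ overlap bound showing every $\ket{\psi_{v,f}}$ is $(1-1/q)$-far, soundness on $\rho_1$, then \cref{prop:td-bound-deg-2}. The gap is that your contradiction hypothesis is off by one. ``Requires at least $m-2$ copies'' means every tester has $t\ge m-2$, so what must be refuted is $t\le m-3$. By trying to refute $t=m-2$ you are really proving $t\ge m-1$, which is stronger than the statement, and that is the only reason $q=2$ looked problematic. With the correct hypothesis the crude bound suffices uniformly in $q$: for $t\le m-3$,
\[
\frac13\le \tr\bigl(M(\rho_0-\rho_1)\bigr)\le \frac12\norm{\rho_0-\rho_1}_1\le \frac{q+2}{2}\,q^{t-m}\le \frac{q+2}{2q^{3}}\le \frac14,
\]
which is a contradiction for every prime $q$. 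This is what the paper does. It phrases it as solving $\frac13\le\frac{q+2}{2}q^{t-m}$ for $t$, giving $t\ge\lceil m+\log_q\frac{2}{3(q+2)}\rceil\ge m-2$ (at $q=2$, $\log_2\frac16\approx-2.58$).

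As written, your proposal leaves $q=2$ unresolved. The sharpening you sketch would not establish even the stronger claim. Take the exact full-rank probabilities $\prod_{j\ge3}(1-2^{-j})\approx0.770$ for $t=m-2$ columns in $\F_2^m$ and $\prod_{j\ge2}(1-2^{-j})\approx0.578$ in $H_v$, together with the exact factor $\frac{q^t-1}{q^m-1}\approx\frac14$. The block decomposition then gives only $\frac12\norm{\rho_0-\rho_1}_1\le\frac12(0.19+0.23+0.42)\approx0.42>\frac13$. Your fallback $t\ge m-O(1)$ is weaker than the statement. The fix is simply to assume $t\le m-3$; your padding remark then covers all smaller $t$, and the proof is complete.
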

\begin{proof}
    Suppose there exists an $\eps$-tester for degree-$2$ phase states using $t$ copies. 
    Then there is a POVM element $M$ with $0 \leq M \leq I$ acting on $t$ copies such that for all pure states $\ket{\psi}$: 
    \begin{itemize}
        \item If $\ket{\psi}$ is a degree-$2$ phase state then $\tr(M \ketbra{\psi}{\psi}^{\otimes t}) \geq \frac{2}{3}$.
    \item If $\ket{\psi}$ is $\eps$-far from all degree-$2$ phase states then $\tr(M \ketbra{\psi}{\psi}^{\otimes t}) \leq \frac{1}{3}$. 
    \end{itemize}
Therefore, for any degree-$2$ phase state $\ket{\psi}$ and any state $\ket{\phi}$ that is $\eps$-far from all degree-$2$ phase~states, 
\[
\tr\left(M \left(\ketbra{\psi}{\psi}^{\otimes t} - \ketbra{\phi}{\phi}^{\otimes t} \right) \right) \geq \frac{1}{3}.
\]

By linearity, the same inequality holds if we replace $\ket{\psi}$ with an average over degree-$2$ phase states and $\ket{\phi}$ with an average over states $\eps$-far from all degree-$2$ phase states.
Let $\rho_0$ and $\rho_1$ denote the ensembles defined in \cref{prop:td-bound-deg-2}. Since every pure state from the $\rho_1$ ensemble has an inner product of at most $1/\sqrt{q}$ with any pure state ensemble from $\rho_0$, we see that any sample from $\rho_1$ is $\eps$-far from $\rho_0$. Thus, we find that 
\begin{align*}
    \frac{1}{3} &\leq \tr\left( M \left(\rho_0 - \rho_1 \right) \right) 
    \leq \frac{1}{2}\norm{\rho_0 - \rho_1}_1 \leq \frac{q+2}{2} \cdot q^{t-m}.
\end{align*}
The second inequality follows from the variational characterization of the trace distance, and the third inequality from \cref{prop:td-bound-deg-2}.
Since $q^{t-m} \to 0$ as $n \to \infty$ unless $t \geq \left\lceil m + \log_q\left(\tfrac{2}{3(q+2)}\right) \right\rceil \ge m-2$,
we conclude that no $\eps$-tester exists unless $t \geq m-2$.
\end{proof}

\section{Learning codeword states}
In this section, we \emph{characterize} the sample complexity of learning  $C$-codeword states for a linear code $C \subseteq \F_q^n$  with generator matrix $G$, in terms of the R\'enyi-$1/2$ entropy of the distribution over syndromes $G\be_t$. 
To this end, we first observe that the so-called \emph{pretty-good measurement} (PGM) is the optimal measurement for this learning task, and then analyze the success probability of the~PGM. 

Before proceeding, we briefly compare our results with generic quantum state learning algorithms. Since the family of $C$-codeword states has size $|C|=q^{\dim(C)}$, shadow tomography~\cite{aaronson2018shadow,buadescu2021improved} implies that an unknown codeword state can be identified using $\poly(\dim(C))$ copies. Such guarantees are not always optimal. For example, \cite{arunachalam2022phase} showed that $\RM_2[m,d]$-codeword phase states can be learned using $\Theta(m^{d-1})$ copies, whereas shadow tomography yields only an $O(m^d)$-copy guarantee. Our goal in this section is to characterize the optimal sample complexity of this learning problem.

\subsection{Pretty-good measurement}
Consider an ensemble of quantum states, $\calE=\{(p_i,\ket{\phi_i})\}_{i\in[m]}$, where $p=\{p_1,\ldots,p_m\}$ is a probability distribution. In the quantum learning problem, an algorithm is given a state $\ket{\phi_i}\in \mathcal{E}$ drawn according $p$, and the goal is to identify $i$ with probability $\geq 2/3$. More generally, the goal of the algorithm is to maximize the average success probability to correctly identifying $i$. 

For a POVM $\calM = \{M_i\}_{i \in [m]}$ consisting of positive semidefinite  matrices, the probability of obtaining outcome $j$ on input $\ket{\phi_i}$ is $\braketbra{\phi_i}{M_j}{\phi_i}$. The average success probability is 
\[
P_{\calM}(\mathcal{E}) = \sum_{i=1}^m p_i\braketbra{\phi_i}{M_i}{\phi_i}.
\]
Let $P^{\rm opt}(\mathcal{E})=\max_{\calM} P_{\calM}(\mathcal{E})$ denote the optimal average success probability, where the maximization is taken over all $m$-outcome POVMs. 
For every ensemble $\mathcal{E}$, the PGM is a specific POVM (depending on $\mathcal{E}$) that does \emph{reasonably} well against $\mathcal{E}$. 
In particular, it is well-known that 
\begin{align}
\label{eq:pgmquadratic}
P^{\rm opt}(\mathcal{E})^2 \leq P^{\rm PGM}(\mathcal{E})\leq  P^{\rm opt}(\mathcal{E}).
\end{align}
We now define the POVM elements of the PGM. Let $\ket{\phi'_i}=\sqrt{p_i}\ket{\phi_i}$, and $\mathcal{E}'=\{\ket{\phi'_i} : i\in [m]\}$ be the set of states in~$\mathcal{E}$, renormalized to reflect their probabilities. Define $\rho_{\rm avg}=\sum_{i\in [m]} \ketbra{\phi'_i}{\phi'_i}$. The PGM is defined as the set of measurement operators $\{\ketbra{\nu_i}{\nu_i}\}_{i\in [m]}$ where $\ket{\nu_i}=(\rho_{\rm avg})^{-1/2}\ket{\phi'_i}$ (the inverse square root of $\rho$ is taken over its non-zero eigenvalues). 

Although the PGM is in general only quadratically related to the optimal measurement in terms of the success probability, 
Eldar and Forney~\cite{eldar2002quantum} proved that it is in fact optimal for geometrically uniform ensembles generated by finite abelian unitary groups.

\begin{definition}\label{def:gu-ensemble}
    A geometrically uniform (GU) ensemble is a collection of pure states $\{\ket{\phi_g} = U_g \ket{\phi} : g \in G\}$, where $G$ is a finite abelian group.
\end{definition}

The original proof in \cite{eldar2002quantum} proceeds by a case analysis that can be somewhat difficult to follow. A simpler proof was later given by Zhou, Chessa, Chitambar, and Leditzky~\cite{zhou2025distinguishability}; however, their argument relies on convex optimization theory. Here we present an elementary proof of the same result in \cref{sec:eldar-forney}.

\begin{restatable}[Eldar-Forney {\cite{eldar2002quantum}}]{lemma}{eldarforney}
\label{prop:eldar-forney}
Let $\mathcal{E}=\{\ket{\phi_g}\}_{g \in G}$ be a geometrically uniform ensemble, and consider its average state $\rho_{\rm avg} \coloneqq \frac{1}{|G|} \sum_{g \in G} \ketbra{\phi_g}{\phi_g}$.
Given a uniformly random $\ket{\phi_g}$, the optimal success probability for identifying $g \in G$ is
    \begin{equation*}
        P^{\mathrm{opt}} = \frac{1}{|G|}\bigl(\tr \sqrt{\rho_{\rm avg}}\bigr)^2,
    \end{equation*}
    and it is achieved by the pretty-good measurement on the ensemble $\mathcal{E}$.
\end{restatable}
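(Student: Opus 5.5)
Our plan is to exploit the abelian symmetry to diagonalize everything simultaneously, and then compare an arbitrary POVM with the pretty-good measurement through a single Cauchy--Schwarz step.

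\textbf{Setup and PGM success probability.} Write $\ket{\phi_g} = U_g\ket{\phi}$, where $g\mapsto U_g$ is a unitary representation of the finite abelian group $G$ (this is the setting of \cref{def:gu-ensemble}). Since $G$ is abelian, $\rho_{\rm avg} = \frac{1}{|G|}\sum_g U_g\ketbra{\phi}{\phi}U_g^\dagger$ commutes with every $U_h$. Hence so does $\rho_{\rm avg}^{-1/2}$, taken on the support of $\rho_{\rm avg}$.

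The PGM vectors are $\ket{\nu_g} = |G|^{-1/2}\rho_{\rm avg}^{-1/2}U_g\ket{\phi} = U_g\ket{\nu_e}$. Therefore every term $|\braket{\nu_g}{\phi_g}|^2$ equals the same value $|\braket{\nu_e}{\phi}|^2$.

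To evaluate it, note that $\sum_g \ketbra{\phi_g}{\phi_g} = |G|\rho_{\rm avg}$, so $\ket{\phi}$ lies in $\mathrm{supp}(\rho_{\rm avg})$. The identity
\[
\tr\sqrt{\rho_{\rm avg}} = \tr\bigl(\rho_{\rm avg}^{-1/2}\rho_{\rm avg}\bigr) = \frac{1}{|G|}\sum_g \bra{\phi}U_g^\dagger\rho_{\rm avg}^{-1/2}U_g\ket{\phi} = \bra{\phi}\rho_{\rm avg}^{-1/2}\ket{\phi}
\]
then gives $\braket{\nu_e}{\phi} = |G|^{-1/2}\tr\sqrt{\rho_{\rm avg}}$. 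Consequently
\[
P^{\rm PGM} = \frac{1}{|G|}\sum_g |\braket{\nu_g}{\phi_g}|^2 = \frac{1}{|G|}\bigl(\tr\sqrt{\rho_{\rm avg}}\bigr)^2 .
\]
We should also check that the PGM is a valid POVM: $\sum_g\ketbra{\nu_g}{\nu_g}$ is the projector onto $\mathrm{supp}(\rho_{\rm avg})$, and we complete it with the complementary projector.

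\textbf{Upper bound for an arbitrary POVM (main step).} Let $\{M_g\}$ be any POVM. Write
\[
P = \frac{1}{|G|}\sum_g \bra{\phi_g}M_g\ket{\phi_g}, \qquad \ket{\phi_g} = \rho_{\rm avg}^{1/4}\cdot\rho_{\rm avg}^{-1/4}\ket{\phi_g}.
\]
Then, by Cauchy--Schwarz in the form $\bra{a}M\ket{a}\le \sqrt{\bra{a}M\ket{a}}\cdots$, we aim for
\[
\sum_g \bra{\phi_g}M_g\ket{\phi_g} \le \Bigl(\sum_g \bra{\phi_g}\rho_{\rm avg}^{-1/2}\ket{\phi_g}\Bigr)^{1/2}\Bigl(\sum_g \tr\bigl(M_g\rho_{\rm avg}^{1/2}\bigr)\, \|\cdots\|\Bigr)^{1/2}.
\]
Concretely, we bound each summand as $\bra{\phi_g}M_g\ket{\phi_g} \le \bra{\phi_g}\rho_{\rm avg}^{-1/2}\ket{\phi_g}^{1/2}\,\bra{\phi_g}\rho_{\rm avg}^{1/4}M_g\rho_{\rm avg}^{1/4}\ket{\phi_g}^{1/2}$. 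This uses $M_g\le I$ together with the operator Cauchy--Schwarz inequality applied to $M_g^{1/2}\rho_{\rm avg}^{1/4}$ and $\rho_{\rm avg}^{-1/4}$. A cleaner route, which we will try first, starts from
\[
\bra{\phi_g}M_g\ket{\phi_g} = \bigl|\bra{\phi_g}\rho_{\rm avg}^{-1/4}\cdot\rho_{\rm avg}^{1/4}M_g\ket{\phi_g}\bigr|.
\]
We then apply Cauchy--Schwarz over the joint index (vector, $g$), using $\sum_g \ketbra{\phi_g}{\phi_g} = |G|\rho_{\rm avg}$. This yields
\[
P \le \frac{1}{|G|}\Bigl(\sum_g \bra{\phi_g}\rho_{\rm avg}^{-1/2}\ket{\phi_g}\Bigr)^{1/2}\Bigl(\sum_g \bra{\phi_g}M_g\rho_{\rm avg}^{1/2}M_g\ket{\phi_g}\Bigr)^{1/2}.
\]
The first factor is $|G|\,\tr\sqrt{\rho_{\rm avg}}$ by the computation above. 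For the second, we use $M_g\ket{\phi_g}\bra{\phi_g}M_g \le \cdots$; the step that needs care is $\sum_g \tr\bigl(\rho_{\rm avg}^{1/2}M_g\ketbra{\phi_g}{\phi_g}M_g\bigr)\le |G|\tr\sqrt{\rho_{\rm avg}}/|G|$ up to normalization. Here we would use $\ketbra{\phi_g}{\phi_g}\le |G|\rho_{\rm avg}$, then $M_g\rho_{\rm avg}M_g$ summed against $\rho_{\rm avg}^{1/2}$, and finally $\sum_g M_g = I$.

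This bookkeeping is the main obstacle. If the direct Cauchy--Schwarz is awkward, the fallback is the standard dual certificate $Y = \frac{1}{|G|}\tr\sqrt{\rho_{\rm avg}}\,\sqrt{\rho_{\rm avg}}$. For the bound $P\le\tr Y$ it suffices to verify $Y\ge \frac{1}{|G|}\ketbra{\phi_g}{\phi_g}$ for every $g$. By symmetry this reduces to $g=e$, i.e.\ to $\bra{\phi}\rho_{\rm avg}^{-1/2}\ket{\phi}\le\tr\sqrt{\rho_{\rm avg}}$ (checked via $\ketbra{\phi}{\phi}\le c\,A$ iff $\bra{\phi}A^{-1}\ket{\phi}\le c$), and this holds with equality by the identity above. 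Then $P=\frac{1}{|G|}\sum_g\tr(M_g\ketbra{\phi_g}{\phi_g})\le\sum_g\tr(M_gY)=\tr Y=\frac{1}{|G|}(\tr\sqrt{\rho_{\rm avg}})^2$, which matches the PGM value and completes the proof.
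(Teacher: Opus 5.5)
Your fallback argument is correct and complete, and it takes a genuinely different route from the paper. The ``main step'' you try first should be dropped, though, because it cannot give the sharp constant. Your Cauchy--Schwarz split is itself valid. But to reach $P\le(\tr\sqrt{\rho_{\rm avg}})^2/|G|$ from it, you would need $\sum_g\bra{\phi_g}M_g\rho_{\rm avg}^{1/2}M_g\ket{\phi_g}\le(\tr\sqrt{\rho_{\rm avg}})^3/|G|$ for every POVM. This fails already for the trivial representation $U_g=I$ with the POVM $M_e=I$ and $M_g=0$ otherwise: the left side is $1$ and the right side is $1/|G|$. Accordingly, for that POVM the split only yields $P\le 1/\sqrt{|G|}$, while the truth is $1/|G|$. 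So present the certificate $Y=\frac{1}{|G|}(\tr\sqrt{\rho_{\rm avg}})\sqrt{\rho_{\rm avg}}$ as the upper-bound argument.

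\textbf{The paper's route.} The paper first symmetrizes an arbitrary POVM into a covariant one, $M_g=U_gM_0U_g^\dagger$. It then decomposes the orbit space with the character projectors $P_\chi$, obtaining $\rho_{\rm avg}=\sum_\chi\ketbra{\phi_\chi}{\phi_\chi}$ and $\tr\sqrt{\rho_{\rm avg}}=\sum_\chi\|\phi_\chi\|_2$. From completeness it derives $P_\chi M_0P_\chi=P_\chi/|G|$, and it finishes by applying Cauchy--Schwarz to the cross terms $\bra{\phi_\chi}M_0\ket{\phi_{\chi'}}$.

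\textbf{Your route.} You use weak duality with an explicit feasible $Y$. All the symmetry you need is packed into the single identity $\bra{\phi}\rho_{\rm avg}^{-1/2}\ket{\phi}=\tr\sqrt{\rho_{\rm avg}}$. That identity gives both dual feasibility (with equality) and the PGM value.

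\textbf{What each buys.} Your argument is shorter and needs neither the symmetrization step nor character theory. It also never uses commutativity of $G$: it only needs each $U_h$ to permute the ensemble, so that $\rho_{\rm avg}$ commutes with $U_h$. Hence it proves the statement for arbitrary finite groups, even projective representations, and your phrase ``since $G$ is abelian'' is unnecessary. It is no less elementary than the paper's proof, since weak duality here is the one-line inequality $\tr\bigl(M_g\tfrac{1}{|G|}\ketbra{\phi_g}{\phi_g}\bigr)\le\tr(M_gY)$ summed over $g$. In exchange, the paper's route exhibits the spectral decomposition of $\rho_{\rm avg}$ along characters. For codeword states this is exactly the syndrome decomposition: $P_{\chi_s}=\Pi_s^C$, with eigenvalues $\Pr[G\be_t=s]$, which is what \cref{thm:pgm-learning} uses afterwards.
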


\subsection{Sample complexity of learning codeword states}

Recall the learning problem: let $C \subseteq \F_q^n$ be a linear code with generator matrix
$G \in \F_q^{k\times n}$. Let $\{\ket{\psi_c} : c \in C\}$ denote the associated codeword-state ensemble, and suppose a learning algorithm is given $t$ copies of
$\ket{\psi_{\bc}}$ for an unknown uniformly random $\bc \in C$; the goal is to identify $\bc$.

Let $U_k$ denote the uniform distribution over $\F_q^k$.
Throughout, let $\bu \sim U_k$ and let $\bc(\bu) \coloneqq \bu G$.
For a discrete probability distribution $p=(p_i)_i$, denote the R\'enyi entropy of order
$1/2$ by
\[
H_{1/2}(p) \coloneqq 2\log_q\!\left(\sum_i \sqrt{p_i}\right).
\]

\begin{theorem}\label{thm:pgm-learning}
Given $t$ copies of a state uniformly chosen from the ensemble
$\{\ket{\psi_{\bc}} : \bc \in C\}$, the optimal success probability for identifying $\bc$, which is achieved by the pretty-good measurement, is
\[
P^{\rm opt}
= \frac{1}{q^k}\left(\sum_{s \in \F_q^k}\sqrt{\Pr[G \be_t = s]}\right)^2
= q^{H_{1/2}(G \be_t)-k},
\]
where $\be_t \sim \calE_{n,q}^t$ and $k$ is the dimension of the code.

Equivalently, if $\mathbf{c}^\perp \in C^\perp$ is a uniformly chosen codeword from the dual code,
and $\bX \coloneqq \mathbf{c}^\perp + \be_t$, then
\[
P^{\rm opt} = \mathrm{BC}(\bX,U_n)^2,
\]
where $U_n$ denotes the uniform distribution on $\F_q^n$, and
$\mathrm{BC}(\cdot,\cdot)$ is the Bhattacharyya coefficient.
\end{theorem}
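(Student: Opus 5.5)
The plan is to recognize the $t$-copy ensemble as geometrically uniform, invoke \cref{prop:eldar-forney}, and then compute $\tr\sqrt{\rho_C^t}$ explicitly using the projective decomposition from \cref{lemma:as-projection-channel}.

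\textbf{Step 1: geometric uniformity.} The states $\ket{\psi_c}^{\otimes t} = (Z^{c})^{\otimes t}\ket{+}^{\otimes t}$, for $c \in C$, form the orbit of the fixed vector $\ket{+}^{\otimes t}$ under the unitaries $\{(Z^c)^{\otimes t}\}_{c\in C}$. These unitaries represent the finite abelian group $(C,+)$. So the ensemble is geometrically uniform in the sense of \cref{def:gu-ensemble}, with $|G| = |C| = q^k$.

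\textbf{Step 2: apply Eldar--Forney.} \cref{prop:eldar-forney} then gives
\[
P^{\rm opt} = q^{-k}\bigl(\tr\sqrt{\rho_C^t}\bigr)^2,
\]
and says this value is attained by the PGM. One small point: Eldar--Forney assumes the group elements index distinct states. If $c\mapsto \ket{\psi_c}^{\otimes t}$ is injective on $C$ this is automatic, and in any case identifying $c$ means identifying the group element, so the formula applies as stated.

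\textbf{Step 3: spectrum of $\rho_C^t$.} By \cref{fact:code-ensemble-as-channel-output} and \cref{lemma:as-projection-channel}, with $M=G$,
\[
\rho_C^t = \sum_{s\in\F_q^k} \Pi_s^C \ketbra{+}{+}^{\otimes t}\Pi_s^C .
\]
The $\Pi_s^C$ project onto disjoint coordinate supports $T_s$, so they are mutually orthogonal. Hence $\rho_C^t$ is a sum of orthogonal rank-one terms. The eigenvalue of the $s$-th term is $\norm{\Pi_s^C\ket{+}^{\otimes t}}_2^2 = \Pr[G\be_t=s]$, by \cref{prop:spectrum-formula}. Therefore
\[
\tr\sqrt{\rho_C^t}=\sum_s \sqrt{\Pr[G\be_t = s]},
\]
which gives the first formula. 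Rewriting with $H_{1/2}(G\be_t) = 2\log_q\sum_s\sqrt{\Pr[G\be_t=s]}$ gives $P^{\rm opt}=q^{H_{1/2}-k}$.

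\textbf{Step 4: Bhattacharyya form.} This is a classical computation. The variable $\bX = \mathbf{c}^\perp+\be_t$ is uniform over the coset $\be_t + C^\perp$ determined by the syndrome. Cosets of $C^\perp$ are exactly the fibers of $x\mapsto Gx$ (up to identifying $G$'s row space with $C$), and each has size $q^{n-k}$. So
\[
\Pr[\bX=x] = \Pr[G\be_t=Gx]\, q^{-(n-k)}.
\]
Then
\[
\mathrm{BC}(\bX,U_n)=\sum_x \sqrt{q^{-n}\Pr[\bX=x]} = \sum_s q^{n-k}\sqrt{q^{-n}q^{-(n-k)}\Pr[G\be_t=s]} = q^{-k/2}\sum_s\sqrt{\Pr[G\be_t=s]}.
\]
Squaring recovers $P^{\rm opt}$.

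\textbf{Main obstacle.} The only real content beyond bookkeeping is Step 2, the optimality of the PGM for GU ensembles, which is supplied by \cref{prop:eldar-forney}. Among the remaining steps, the most delicate is Step 3: checking that the $\Pi_s^C$ are genuinely orthogonal, so that the eigenvalues are exactly the syndrome probabilities. This follows because the sets $T_s$ partition $[n]^t$.
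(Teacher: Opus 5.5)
Your proposal is correct and follows the paper's proof essentially step for step. The steps are: geometric uniformity of $\{(Z^c)^{\otimes t}\ket{+}^{\otimes t}\}_{c\in C}$; Eldar--Forney; reading off the spectrum of $\rho_C^t$ from the mutually orthogonal projections $\Pi_s^C$, whose eigenvalues are $\Pr[G\be_t=s]$; and the same coset/fiber computation (with $\ker G = C^\perp$ and fibers of size $q^{n-k}$) for the Bhattacharyya coefficient.
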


\begin{proof}
First observe that since $C$ is a linear code, the set of codewords forms an additive
subgroup of $\F_q^n$, and hence the corresponding unitaries
$\{Z^c : c \in C\}$ form a finite abelian group. Therefore the ensemble
$\{\ket{\psi_c}^{\otimes t} : c \in C\}$ is geometrically uniform
(\cref{def:gu-ensemble}), so by \cref{prop:eldar-forney} the pretty-good measurement
is \emph{optimal} for identifying $c \in C$.
In particular, let $\rho_{\rm avg} \coloneqq \frac{1}{|C|}\sum_{c \in C} \ketbra{\psi_c}{\psi_c}^{\otimes t}$. 
Then \cref{prop:eldar-forney} gives
\[
P^{\rm opt}
= \frac{1}{|C|}\bigl(\tr \sqrt{\rho_{\rm avg}}\bigr)^2.
\]
As shown in \cref{prop:spectrum-formula},
\[
\rho_{\rm avg}
= \calZ_C^t(\ketbra{+}{+}^{\otimes t})
= \sum_{s \in \F_q^k} \Pr[G\be_t=s]\ketbra{\phi_s}{\phi_s},
\]
where $\ket{\phi_s} \coloneqq \Pi_s \ket{+}^{\otimes t}/\norm{\Pi_s \ket{+}^{\otimes t}}_2$.
Therefore, 
\begin{align}
\label{eq:successofpgmforlearning-qary}
P^{\rm opt}
&= \frac{1}{|C|}\bigl(\tr \sqrt{\rho_{\rm avg}}\bigr)^2 = \frac{1}{q^k}
\left(\sum_{s \in \F_q^k}\sqrt{\Pr[G\be_t=s]}\right)^2.
\end{align}
This proves the first part of the theorem.

For the second part, let $\mathbf{c}^\perp$ be uniformly distributed on $C^\perp$, and define
$\bX \coloneqq \bc^\perp+\be_t$. Then for every $x \in \F_q^n$,
\begin{align*}
\Pr[\bX=x]
= \frac{1}{|C^\perp|}\sum_{u \in C^\perp}\Pr[\be_t=u+x] 
= \frac{1}{|C^\perp|}\sum_{y \in x+C^\perp}\Pr[\be_t=y] 
= \frac{1}{|C^\perp|}\Pr[\be_t \in x+C^\perp].
\end{align*}
Since $Gx=Gy$ iff $x-y \in C^\perp = \ker G$, the event $\be_t \in x+C^\perp$ is equivalent to $G\be_t = Gx$. 
Using $|C^\perp|=q^{n-k}$, we obtain
\[
\Pr[X=x] = q^{-(n-k)}\Pr[G\be_t=Gx].
\]

Now compute the Bhattacharyya distance:
\begin{align*}
\mathrm{BC}(\bX,U_n)
&= \sum_{x \in \F_q^n}\sqrt{\Pr[\bX=x]\cdot q^{-n}} \\
&= \sum_{x \in \F_q^n} q^{-n+k/2}\sqrt{\Pr[G\be_t=Gx]} \\
&= \sum_{s \in \F_q^k}\ \sum_{x:Gx=s}
   q^{-n+k/2}\sqrt{\Pr[G\be_t=s]} \\
&= \sum_{s \in \F_q^k} q^{n-k}\, q^{-n+k/2}\sqrt{\Pr[G\be_t=s]} \\
&= \sum_{s \in \F_q^k} q^{-k/2}\sqrt{\Pr[G\be_t=s]}.
\end{align*}
Here we grouped terms according to the syndrome $Gx=s$. Since $G$ has rank $k$, every nonempty fiber $\{x \in \F_q^n : Gx=s\}$ is a coset of $\ker G=C^\perp$ and
therefore has size $|C^\perp|=q^{n-k}$.
Squaring the final expression and comparing with
Eq.~\eqref{eq:successofpgmforlearning-qary} yields $P^{\rm opt} = \mathrm{BC}(X,U_n)^2$, as claimed.
\end{proof}

\cref{thm:pgm-learning} computes the optimal success probability when the input state is drawn according to the uniform distribution. The next corollary shows that by symmetry the same value is also the worst-case success probability.

\begin{corollary}\label{cor:pgm-learning-worst-case}
Let $\mathcal{M}_{\rm PGM}$ denote the pretty-good measurement for the uniform ensemble
$\{\ket{\psi_c}^{\otimes t} : c \in C\}$.
Then for every $c \in C$, $\Pr[\mathcal{M}_{\rm PGM}\text{ outputs }c\mid \ket{\psi_c}^{\otimes t}] = q^{H_{1/2}(G\be_t)-k}$.
We additionally have $$\max_{\mathcal{M}} \ \min_{c \in C} \Pr[\mathcal{M}\text{ outputs }c \mid \ket{\psi_c}^{\otimes t}] = q^{H_{1/2}(G\be_t)-k}.
$$
In particular, there exists a measurement that identifies every $c \in C$ with success
probability at least $1-\delta$ if and only if $H_{1/2}(G\be_t)\ge k + \log_q(1-\delta)$.
\end{corollary}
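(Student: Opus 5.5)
The plan is to exploit the covariance of the ensemble under the abelian group $\{(Z^{c})^{\otimes t} : c \in C\}$. Then the minimax value equals the average-case optimum from \cref{thm:pgm-learning}, and the pretty-good measurement attains it uniformly in $c$.

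\textbf{Step 1: the PGM is covariant.} Write $\rho_{\rm avg} = \frac{1}{|C|}\sum_{c} \ketbra{\psi_c}{\psi_c}^{\otimes t}$. This operator commutes with every $U_{c'} \coloneqq (Z^{c'})^{\otimes t}$, because conjugation by $U_{c'}$ only permutes the summands: $U_{c'}\ket{\psi_c}^{\otimes t} = \ket{\psi_{c+c'}}^{\otimes t}$. Hence $\rho_{\rm avg}^{-1/2}$ also commutes with each $U_{c'}$. The PGM vectors therefore satisfy $\ket{\nu_{c+c'}} = U_{c'}\ket{\nu_c}$, up to the normalization $1/\sqrt{|C|}$, which is the same for every $c$. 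It follows that
\[
\bigl|\braket{\nu_c}{\psi_c^{\otimes t}}\bigr|^2 = \bigl|\braket{\nu_0}{\psi_0^{\otimes t}}\bigr|^2
\]
for every $c$. So the PGM's success probability on input $c$ does not depend on $c$, and it must equal the average, which is $q^{H_{1/2}(G\be_t)-k}$ by \cref{thm:pgm-learning}.

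\textbf{Step 2: the minimax identity.} For any POVM $\calM$, the worst-case success probability is at most the uniform average, which in turn is at most $P^{\rm opt} = q^{H_{1/2}(G\be_t)-k}$ by \cref{thm:pgm-learning}. Step 1 shows that the PGM achieves this value for every $c$, so
\[
\max_{\calM}\min_{c}\Pr[\calM \text{ outputs } c \mid \ket{\psi_c}^{\otimes t}] = q^{H_{1/2}(G\be_t)-k}.
\]

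\textbf{Step 3: the threshold.} A measurement with worst-case success probability at least $1-\delta$ exists if and only if $q^{H_{1/2}(G\be_t)-k} \ge 1-\delta$. Taking $\log_q$ of both sides gives the condition $H_{1/2}(G\be_t) \ge k + \log_q(1-\delta)$.

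The only step that needs care is Step 1. The inverse square root is taken only on the support of $\rho_{\rm avg}$, so one has to confirm that this pseudo-inverse still commutes with each $U_{c'}$. It does: $U_{c'}$ commutes with $\rho_{\rm avg}$, so it preserves every eigenspace of $\rho_{\rm avg}$, and any function of $\rho_{\rm avg}$ (including the support-restricted inverse square root) commutes with $U_{c'}$ as well.
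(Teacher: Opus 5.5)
Your proposal is correct and follows the paper's proof essentially step for step. Covariance of the PGM under $(Z^{a})^{\otimes t}$ makes the per-codeword success probability constant, hence equal to the average from \cref{thm:pgm-learning}; then worst case $\le$ average $\le P^{\rm opt}$ gives the minimax identity and the threshold, and your observation that the support-restricted $\rho_{\rm avg}^{-1/2}$ is a function of $\rho_{\rm avg}$ (and so commutes with each $U_{c'}$) justifies a point the paper only asserts.
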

\begin{proof}
Define
\[
\ket{\phi_c} \coloneqq q^{-k/2}\ket{\psi_c}^{\otimes t},
\qquad
\rho_{\rm avg} \coloneqq \sum_{c \in C}\ketbra{\phi_c}{\phi_c}
= \frac{1}{|C|}\sum_{c \in C}\ketbra{\psi_c}{\psi_c}^{\otimes t}.
\]
Let $\mathcal{M}_{\rm PGM}=\{\ketbra{\nu_c}{\nu_c}\}_{c \in C}$ be the pretty-good measurement
for this ensemble, where $\ket{\nu_c} = \rho_{\rm avg}^{-1/2}\ket{\phi_c}$.

For each $a \in C$, let $V_a \coloneqq (Z^a)^{\otimes t}$.
Since $\ket{\psi_{c+a}}^{\otimes t} = V_a \ket{\psi_c}^{\otimes t}$, we also have $V_a\ket{\phi_c} = \ket{\phi_{c+a}}$.
Therefore,
\[
V_a \rho_{\rm avg} V_a^\dagger
= \sum_{c \in C} \ketbra{\phi_{c+a}}{\phi_{c+a}}
= \rho_{\rm avg}.
\]
It follows that $V_a$ commutes with $\rho_{\rm avg}^{-1/2}$ on the support of $\rho_{\rm avg}$, and hence
\[
V_a\ket{\nu_c}
= V_a \rho_{\rm avg}^{-1/2}\ket{\phi_c}
= \rho_{\rm avg}^{-1/2}V_a\ket{\phi_c}
= \rho_{\rm avg}^{-1/2}\ket{\phi_{c+a}}
= \ket{\nu_{c+a}}.
\]
Thus the PGM projectors satisfy $\ketbra{\nu_{c+a}}{\nu_{c+a}} = V_a \ketbra{\nu_c}{\nu_c} V_a^\dagger$.

Now define
\[
p_c \coloneqq
\Pr[\mathcal{M}_{\rm PGM}\text{ outputs }c \mid \ket{\psi_c}^{\otimes t}]
= \braket{\psi_c^{\otimes t}|\nu_c\rangle\!\langle\nu_c}{\psi_c^{\otimes t}}.
\]
For every $a,c \in C$,
$p_{c+a}
= \braket{\psi_{c+a}^{\otimes t}| \nu_{c+a}\rangle\!\langle \nu_{c+a}}{\psi_{c+a}^{\otimes t}}  = \braket{\psi_c^{\otimes t}|\nu_c\rangle\!\langle\nu_c}{\psi_c^{\otimes t}}$,
so $p_c$ is independent of $c$.

On the other hand, \cref{thm:pgm-learning} shows that the average success probability of
$\mathcal{M}_{\rm PGM}$ under the uniform distribution on $C$ is
\[
\frac{1}{|C|}\sum_{c \in C} p_c
= q^{H_{1/2}(G\be_t)-k}.
\]
Since all the terms $p_c$ are equal, each of them must equal
$q^{H_{1/2}(G\be_t)-k}$. This proves the first claim. 

For the worst-case statement, let $\mathcal{M}$ be any POVM, and define $q_c(\mathcal{M}) \coloneqq \Pr[\mathcal{M}\text{ outputs }c \mid \ket{\psi_c}^{\otimes t}]$.
Then
\[
\min_{c \in C} q_c(\mathcal{M})
\le
\frac{1}{|C|}\sum_{c \in C} q_c(\mathcal{M}).
\]
By \cref{thm:pgm-learning}, the right-hand side is at most
$q^{H_{1/2}(G\be_t)-k}$, since that theorem gives the optimal average success probability
for the uniform ensemble. Therefore,
\[
\max_{\mathcal{M}} \ \min_{c \in C} q_c(\mathcal{M})
\le
q^{H_{1/2}(G\be_t)-k}.
\]
Conversely, the pretty-good measurement satisfies
\[
q_c(\mathcal{M}_{\rm PGM})=q^{H_{1/2}(G\be_t)-k}
\qquad \text{for every } c \in C,
\]
so $\min_{c \in C} q_c(\mathcal{M}_{\rm PGM})
= q^{H_{1/2}(G\be_t)-k}$.
Hence
\[
\max_{\mathcal{M}} \ \min_{c \in C}
\Pr[\mathcal{M}\text{ outputs }c \mid \ket{\psi_c}^{\otimes t}]
=
q^{H_{1/2}(G\be_t)-k}.
\]

The final claim follows immediately by comparing this quantity with $1-\delta$ and taking
$\log_q$.
\end{proof}

\section{Correcting random errors in \texorpdfstring{$q$}{q}-ary Reed--Muller codes}
\label{sec:asw}

In this section, we generalize a result due to Abbe, Shpilka, and Wigderson~\cite{ASW15} to arbitrary finite fields to fill in the pending~\cref{corollary:rm-decode-bounds} leveraged in our low-degree phase state testing lower bounds (\cref{subsec:low-deg-phase-testing-lbs}).
In particular, we study the ability of Reed--Muller codes over $\F_q$ to correct uniformly random error patterns. Our proof structure closely follows that of \cite{ASW15}, while also streamlining some aspects of the argument. However, the proof of~\cite{ASW15} relies on a result of Wei~\cite{wei2002generalized} regarding generalized Hamming weights of the binary Reed--Muller code, which does not directly generalize the $q$-ary Reed--Muller code. As a result, we instead rely on a result of Heijnen and Pellikaan~\cite{heijnen1997generalized} on the generalized Hamming weights of $\RM_q[m,d]$ as a substitute of~\cite{wei2002generalized}.\medskip

Let $D_{d,q}(r)$ denote the number of monomials in $r$ variables of total degree at most $d$, with each variable having degree at most $q-1$. The main theorem proved in this section is the following.

\begin{restatable}[Generalization of {\cite[Theorem 6.2]{ASW15}}]{theorem}{rmcorrect}
\label{thm:asw-62}
Let $q$ be a prime number.
Let $m \geq 1$, $0 \leq d \leq \lfloor \tfrac{m(q-1)-2}{2} \rfloor$, and $0 \leq r \leq m$.
Let $t \leq D_{d,q}(r)$ be a positive integer, and let $\be_t = \sum_{i=1}^t \bc_i \cdot \delta_{\ba_i}$ for uniformly random $\bc_i \in \F_q^\times$ and $\ba_i \in \F_q^m$.
Then $\RM_q[m, m(q-1) - (2d + 2)]$ can correct $\be_t$ with probability at least $1 - t q^{r-m}$.
\end{restatable}

\subsection{On random submatrices of the generator matrix}

We generalize \cite[Theorem 4.5]{ASW15} to arbitrary finite fields. 
In short, the theorem states that submatrices obtained by sampling random columns of the generator matrix of $\RM_q[m,d]$ have full column rank with high probability.
We begin by introducing the necessary definitions and notation.

\begin{definition}[Support of a code]
For a linear code $D$, define $\supp(D) \coloneqq \{i : \exists y \in D \text{ such that $y_i \neq 0$}\}$.
That is, $D$ is the union of the supports of all the codewords in $D$.   
\end{definition}

\begin{definition}[Generalized Hamming weight]
\label{def:gen-ham-wei}
For $1 \leq a \leq \dim C$, let $d_a(C)$ denote the $a$'th generalized Hamming weight of $C$, which is defined as 
\[
d_a(C) = \min\{\abs{\supp(D)} : D \leq C, \dim D = a\}.
\]
That is, $d_a(C)$ is the size of the smallest set of coordinates for which there exists a linear subcode $D$ with dimension $a$.
\end{definition}

Let $\calP_{m,d,q} \coloneqq \Span_{\F_q}\{ x^\alpha : \alpha \in \{0, \dots, q-1\}^m, \abs{\alpha} \leq d\}$, viewed as a vector space of functions $\F_q^m \to \F_q$. Equivalently, $\calP_{m,d,q}$ is the subspace of degree $\le d$ polynomials in $\F_q[x_1, \dots, x_m]/(x_1^q -x_1, \dots, x_m^q - x_m)$.
Define 
\[
D_{d,q}(r) \coloneqq \dim \calP_{r,d,q} =
\abs{\{\alpha=(\alpha_1,\dots,\alpha_r)\in \{0,1,\dots,q-1\}^r: |\alpha|\coloneqq \alpha_1+\cdots+\alpha_r\le d\}}.
\]

For $x \in \F_q^m$, define $E_x : \calP_{m,d,q} \to \F_q$ as $E_x(f) \coloneqq f(x)$, and let $\calM_{m,d,q}$ denote the monomial basis of $\calP_{m,d,q}$. 
With respect to the monomial basis $\calM_{m,d,q}$, the functional $E_x$ is represented by $(g(x))_{g \in \calM_{m,d,q}}$, which is exactly the column of the generator matrix of $\RM_q[m,d]$ indexed by $x$.
Hence, linear independence of $E_{v_1},\dots,E_{v_t}$ is equivalent to linear independence of the corresponding columns of the generator matrix of $\RM_q[m,d]$.

\begin{theorem}[Generalization {\cite[Theorem 4.5]{ASW15}}]
\label{thm:asw-45}
Let $q$ be a prime number.
Let $m \geq 1$, $0 \leq d \leq m(q-1)$, and $0 \leq r \leq m$.
Let $v_1, \dots, v_t$ be independent uniformly random points of $\F_q^m$.
If $t \leq D_{d,q}(r)$, then 
\[
\Pr[E_{v_1}, \dots, E_{v_t}\text{ are linearly independent}] \geq 1 - t q^{r-m}.
\]
\end{theorem}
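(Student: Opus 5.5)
We follow the sequential-sampling argument of \cite[Theorem 4.5]{ASW15}, with the result of Heijnen and Pellikaan \cite{heijnen1997generalized} replacing Wei's theorem. Reveal $v_1,\dots,v_t$ one at a time. By a union bound it suffices to show that for every $i\le t$ and every fixed $v_1,\dots,v_{i-1}$ with $E_{v_1},\dots,E_{v_{i-1}}$ linearly independent,
\[
\Pr_{v_i}\bigl[E_{v_i}\in \Span(E_{v_1},\dots,E_{v_{i-1}})\bigr]\le q^{r-m}.
\]
Summing this over the $t$ steps gives the claimed bound $1-tq^{r-m}$.

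\emph{Translating to a code.} Let $W=\{f\in\calP_{m,d,q}: f(v_1)=\cdots=f(v_{i-1})=0\}$. Independence of the $E_{v_j}$ gives $\dim W=\dim\calP_{m,d,q}-(i-1)$. A point $v$ satisfies $E_v\in\Span(E_{v_1},\dots,E_{v_{i-1}})$ exactly when every $f\in W$ vanishes at $v$. Such a point $v$ must lie outside $\supp(W)$, where $W$ is viewed as a subcode of $\RM_q[m,d]$ via evaluation. Hence the bad probability is at most $1-\abs{\supp(W)}/q^m$.

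\emph{Applying generalized Hamming weights.} We have $\abs{\supp(W)}\ge d_a(\RM_q[m,d])$ with $a=\dim\calP_{m,d,q}-(i-1)\ge \dim\calP_{m,d,q}-D_{d,q}(r)+1$, since $i\le t\le D_{d,q}(r)$. The generalized Hamming weights are monotone increasing in $a$, so it suffices to show
\[
d_{a_0}(\RM_q[m,d])\ge q^m-q^{r}, \qquad a_0\coloneqq \dim\calP_{m,d,q}-D_{d,q}(r)+1.
\]
Equivalently, any subspace of codimension less than $D_{d,q}(r)$ in $\calP_{m,d,q}$ has common zero set of size at most $q^r$.

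The direction this bound needs is clear from an example. The polynomials vanishing on an $r$-dimensional affine subspace $A$ form a subspace of codimension exactly $D_{d,q}(r)$, namely the restrictions to $A\cong\F_q^r$. So codimension strictly less than $D_{d,q}(r)$ should force fewer than $q^r$ common zeros, and the inequality above would follow.

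\emph{Main obstacle.} The key step is extracting this statement cleanly from Heijnen--Pellikaan. Their theorem gives $d_a$ via a footprint formula: $d_a$ equals $q^m$ minus the maximum number of points of an ``order-ideal'' of monomials compatible with codimension. I would first try the direct route. Assume the common zero set $Z$ of $W$ has $\abs{Z}>q^r$. Compute the dimension of $\calP_{m,d,q}$ restricted to $Z$ using a Gröbner-basis/footprint argument: the standard monomials of the vanishing ideal of $Z$ number $\abs{Z}$ and form a down-closed set. Then show that the number of these monomials of degree at most $d$ is at least $D_{d,q}(r)$ when $\abs{Z}>q^r$. This is a combinatorial compression statement: among down-closed subsets of $\{0,\dots,q-1\}^m$ of size more than $q^r$, the number of elements of weight at most $d$ is minimized by the box $\{0,\dots,q-1\}^r$. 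I expect this to follow by the same shifting argument Heijnen--Pellikaan use, or directly from their explicit formula. Since $\dim(\calP_{m,d,q}|_Z)$ equals the codimension of $W$, this contradicts codimension less than $D_{d,q}(r)$.
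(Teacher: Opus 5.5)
\paragraph{Verdict.} Your skeleton matches the paper's proof, but the one non-routine step, $d_{a_0}(\RM_q[m,d])\ge q^m-q^r$, is never proved. The correct parts are the sequential sampling, the observation that the bad event at step $i$ is exactly that $v_i$ lies in the common zero set of $W$ (the paper's $\cl_d(A_{i-1})$), and the reduction to a lower bound on a generalized Hamming weight of $\RM_q[m,d]$.

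\paragraph{The gap.} Neither of your supporting moves establishes the bound. The affine-subspace example shows only $d_{N-D_{d,q}(r)}(\RM_q[m,d])\le q^m-q^r$ (with $N=\dim\calP_{m,d,q}$), which is the opposite direction. The compression claim is not a simplification either. A down-closed $\Delta\subseteq\{0,\dots,q-1\}^m$, viewed as a point set in $\F_q^m$, has footprint exactly $\Delta$: for $\gamma\notin\Delta$, the polynomial $\prod_i\prod_{j<\gamma_i}(x_i-j)$ vanishes on $\Delta$ and has leading monomial $x^\gamma$. So the claim that ``every down-closed set of size $>q^r$ has at least $D_{d,q}(r)$ elements of weight $\le d$'' is equivalent to the generalized-Hamming-weight bound itself. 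That is precisely the special case of Heijnen--Pellikaan (Wei's theorem when $q=2$) you meant to invoke. Saying you expect it to follow by a shifting argument leaves the heart of the proof open.

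\paragraph{How the paper closes it.} The paper evaluates Heijnen--Pellikaan's explicit formula, which takes a few lines. The formula is $d_a(\RM_q[m,d])=\sum_i\alpha_iq^{m-i}+1$, where $\alpha$ is the $a$-th element, in lexicographic order, of
\[
F=\{\alpha\in\{0,\dots,q-1\}^m:\abs{\alpha}>m(q-1)-d-1\}.
\]
\begin{enumerate}
\item With $\mu=(q-1,\dots,q-1)$, the map $\beta\mapsto\mu-\beta$ is an order-reversing bijection from $B_{\le d}=\{\beta:\abs{\beta}\le d\}$, which has $N$ elements, onto $F$.
\item In lexicographic order, the first $D_{d,q}(r)$ elements of $B_{\le d}$ are those supported on the last $r$ coordinates. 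The next one is the unit vector $e_{m-r}$ (this uses $d\ge 1$ and $r<m$).
\item Hence for $a=N-D_{d,q}(r)$ the relevant $\alpha$ is $\mu-e_{m-r}$, giving $d_a=(q^m-1)-q^r+1=q^m-q^r$.
\end{enumerate}
Since $\dim W=N-(i-1)\ge N-D_{d,q}(r)$, this already gives $\abs{\supp(W)}\ge q^m-q^r$. You therefore do not need $a_0$ or monotonicity. The cases $d=0$, $r=m$ and $d=m(q-1)$ are trivial and are handled separately. With this computation inserted, your argument coincides with the paper's proof.
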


To recover \cite[Theorem 4.5]{ASW15}, specialize to $\F_2$ and choose $r$ so that $1 - t q^{r-m} \geq 1 - \eps$.

\begin{proof}
If $d = 0$, then $\calP_{m,0,q}$ is the constant polynomials and $D_{0,q}(r) = 1$ for every $r$.
Hence, $t \leq 1$ by hypothesis, and so $E_{v_1}, \dots, E_{v_t}$ is linearly independent. Thus, the theorem is true for $d=0$. Henceforth, assume $1 \leq d \leq m(q-1)$.

For $A \subseteq \F_q^m$, define $\punc_A: \calP_{m,d,q} \to \F_q^A$ as $\punc_A(f) \coloneqq \left.f\right|_A$, the map that punctures/restricts polynomials to the points in $A$.
Define 
\[
I_d(A) \coloneqq \ker(\punc_A) = \{f \in \calP_{m,d,q} : f(a) = 0 \text{ for all $a \in A$}\}, 
\]
and $h_d(A) \coloneqq \rank(\punc_A)$. By rank-nullity, $h_d(A) = \dim \calP_{m,d,q} - \dim I_d(A)$.
(From a coding-theoretic perspective, $h_d(A)$ is the dimension of the Reed--Muller code $\RM_q[m,d]$ punctured to the coordinates in $A$.)
We also have the following fact. 

\begin{fact}
\label{fact:asw-ker-dim}
    $h_d(A) = \dim \Span \{E_a : a \in A\}$.
\end{fact}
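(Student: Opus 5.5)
The plan is to identify $\punc_A$ with the matrix whose rows are the functionals $E_a$, and then use the equality of row rank and column rank. The fact is pure linear algebra; nothing about Reed--Muller codes beyond the definitions enters.

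Fix the monomial basis $\calM_{m,d,q}$ of $\calP_{m,d,q}$ and order the points of $A$ arbitrarily. Write $f=\sum_{g\in\calM_{m,d,q}} f_g\, g$. Then the $a$-th coordinate of $\punc_A(f)$ is
\[
f(a)=E_a(f)=\sum_{g\in\calM_{m,d,q}} f_g\, g(a).
\]
So, as a linear map $\F_q^{\calM_{m,d,q}}\to\F_q^{A}$, $\punc_A$ is represented by the $|A|\times|\calM_{m,d,q}|$ matrix $N$ with entries $N_{a,g}=g(a)$. The row of $N$ indexed by $a$ is exactly the coordinate vector $(g(a))_{g\in\calM_{m,d,q}}$ of $E_a$, which the discussion before \cref{thm:asw-45} identifies with the column of the generator matrix of $\RM_q[m,d]$ indexed by $a$.

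Two facts then finish the argument. First, $h_d(A)=\rank(\punc_A)$ is the column rank of $N$, i.e.\ the dimension of the image. Second, the row rank of $N$ equals $\dim\Span\{E_a:a\in A\}$, because the map sending a functional in $\calP_{m,d,q}^*$ to its coordinate vector in the dual monomial basis is a linear isomorphism. Row rank equals column rank over any field, which gives $h_d(A)=\dim\Span\{E_a : a\in A\}$.

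For a basis-free alternative, one can note that $\ker(\punc_A)=\bigcap_{a\in A}\ker E_a$ is the annihilator of $W\coloneqq\Span\{E_a : a\in A\}$. The annihilator has dimension $\dim\calP_{m,d,q}-\dim W$, and rank--nullity then gives the claim directly. There is no real obstacle here. The only care needed is to make sure the matrix representing $\punc_A$ is indexed correctly, with rows given by points and columns by monomials, so that its rows really are the $E_a$.
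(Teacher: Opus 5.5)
Your proof is correct and is essentially the paper's argument. The paper shows $\punc_A^*(\delta_a)=E_a$, so $\im(\punc_A^*)=\Span\{E_a : a\in A\}$, and then uses $\rank(\punc_A^*)=\rank(\punc_A)$; this is the coordinate-free form of your row-rank-equals-column-rank step applied to the matrix $N$, and your annihilator variant is also valid.
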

\begin{proof}
Let $\punc_A^*: (\F_q^A)^* \to \calP_{m,d,q}^*$ be the dual map of $\punc_A$, which, by definition, is given by $\punc_A^*(\lambda) = \lambda \circ \punc_A$.
Let $\delta_a \in (\F_q^A)^*$ be the coordinate functional, where $\delta_a(x) = x_a$.
For $a \in A$ and $f \in \mathcal P_{m,k,q}$,
\[
\punc_A^*(\delta_a)(f)
= \delta_a(\punc_A(f)) = \delta_a(f|_A) = f(a) = E_a(f),
\]
so $\punc_A^*(\delta_a)=E_a$.
Therefore, $\im(\punc_A^*) = \Span\{E_a : a \in A\}$. 
Since $\rank(\punc_A^*) = \rank(\punc_A) = h_d(A)$, the result follows.
\end{proof}
Finally, define the closure of $A$ as $\cl_d(A) \coloneqq \{x \in \F_q^m : f(x) = 0 \text{ for every $f \in I_d(A)$} \}$.  We need the following three lemmas to prove the main theorem.

\begin{lemma}
\label{lemma:closure-lem}
For every $A \subseteq \F_q^m$ and every $x \in \F_q^m$, 
\[
x \in \cl_d(A) \iff E_x \in \Span\{E_a : a \in A\}.
\]
\end{lemma}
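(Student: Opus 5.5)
The plan is to reduce the lemma to linear algebra in the finite-dimensional space $\calP_{m,d,q}$ and its dual $\calP_{m,d,q}^*$, using the annihilator correspondence. The key observation is that $I_d(A)$ is exactly the annihilator of $W \coloneqq \Span\{E_a : a \in A\}$ inside $\calP_{m,d,q}$. Indeed, by definition $f \in I_d(A)$ iff $E_a(f) = f(a) = 0$ for all $a \in A$, which is equivalent to $\lambda(f) = 0$ for every $\lambda \in W$. Likewise, $x \in \cl_d(A)$ iff $E_x(f) = 0$ for all $f \in I_d(A)$, that is, iff $E_x$ lies in the annihilator of $I_d(A)$ in $\calP_{m,d,q}^*$. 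So the lemma is the statement that the double annihilator of $W$ equals $W$.

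For the direction ($\Leftarrow$), I would argue directly. If $E_x = \sum_{a \in A} c_a E_a$, then every $f \in I_d(A)$ satisfies $f(x) = E_x(f) = \sum_a c_a f(a) = 0$, hence $x \in \cl_d(A)$.

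For the direction ($\Rightarrow$), I would use the standard finite-dimensional duality fact: for a subspace $W \le V^*$, the annihilator $W^0 = \{f \in V : \lambda(f) = 0\ \forall \lambda \in W\}$ has dimension $\dim V - \dim W$. The set of functionals vanishing on $W^0$ contains $W$ and has dimension $\dim V - \dim W^0 = \dim W$, so it equals $W$. To keep the argument in the paper's language, one can avoid quoting duality abstractly and instead use \cref{fact:asw-ker-dim} together with rank--nullity. Since $\dim I_d(A) = \dim \calP_{m,d,q} - h_d(A) = \dim \calP_{m,d,q} - \dim W$, the annihilator of $I_d(A)$ in the dual has dimension $\dim W$ and contains $W$, so it equals $W$.

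The only point needing care, which I expect to be the minor obstacle, is justifying the dimension count for the annihilator of a subspace of $\calP_{m,d,q}$ inside its dual. This is standard over any field, $\F_q$ included, by extending a basis of $I_d(A)$ to a basis of $\calP_{m,d,q}$ and taking dual basis functionals. Everything else is unwinding definitions, and no properties specific to Reed--Muller codes or to $q$ are needed.
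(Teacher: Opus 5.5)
Your proof is correct, and for ($\Leftarrow$) it is the paper's argument: write $E_x=\sum_{a}c_aE_a$ and evaluate on any $f\in I_d(A)$.

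For ($\Rightarrow$) you take a genuinely different route, and it is the argument the lemma actually needs. The paper's ``other direction by contraposition'' assumes $x\notin\cl_d(A)$ and concludes $E_x\notin\Span\{E_a : a\in A\}$. That is just the contrapositive of ($\Leftarrow$) again, so the implication $x\in\cl_d(A)\Rightarrow E_x\in\Span\{E_a : a\in A\}$ is never established there.

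This implication requires producing, whenever $E_x\notin\Span\{E_a : a\in A\}$, some $f\in I_d(A)$ with $f(x)\neq 0$. That does not follow from unwinding definitions. It is the finite-dimensional double-annihilator fact, and your dimension count supplies it:
\begin{itemize}
\item the functionals vanishing on $I_d(A)$ form a space of dimension $\dim\calP_{m,d,q}-\dim I_d(A)=h_d(A)=\dim\Span\{E_a : a\in A\}$, by rank--nullity and \cref{fact:asw-ker-dim};
\item this space contains $\Span\{E_a : a\in A\}$, so the two are equal.
\end{itemize}

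What the paper's argument does deliver is the inclusion $\{E_{v_{j+1}}\in\Span\{E_a : a\in A_j\}\}\subseteq\{v_{j+1}\in\cl_d(A_j)\}$. That inclusion is all the probability bound in the proof of \cref{thm:asw-45} uses, so the downstream results are unaffected. For the ``iff'' as stated, however, your duality step is the missing ingredient.
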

\begin{proof}
We have 
\begin{align*}
I_d(A) 
&= \{f : f(a) = 0 \text{ for all $a \in A$}\}\\
&= \{f : E_a(f) = 0 \text{ for all $a \in A$}\}\\
&= \bigcap_{a \in A} \ker(E_a). 
\end{align*}
Hence, for any $x \in \F_q^m$, we have
\begin{align*}
x \in \cl_d(A) 
\iff f(x) = 0 \text{ for all $f \in I_d(A)$} 
\iff E_x(f) = 0 \text{ for all $f \in \bigcap_{a \in A}\ker(E_a)$}.
\end{align*}
Suppose that $E_x \in \Span\{E_a : a\in A\}$. Then we may write $E_x = \sum_{a\in A} c_a E_a$ for some coefficients $c_a \in \F_q$. If $f\in \bigcap_{a\in A}\ker(E_a)$, then
$E_a(f)=0$ for every $a\in A$, so $E_x(f) = 0$. Thus $x\in \cl_d(A)$.

We prove the other direction by contraposition. Suppose $x \notin \cl_d(A)$. Then there exists $f \in I_d(A)$ such that $f(x)\neq 0$, i.e., $f(a)=0$ for all $a \in A$ but $f(x)\neq 0$. If $E_x \in \Span\{E_a : a\in A\}$, then we may write $E_x = \sum_{a\in A} c_a E_a$ for some coefficients $c_a$. 
We then have 
\[
0 \neq f(x) = E_x(f) = \sum_{a \in A} c_a E_a(f) = \sum_{a \in A} c_a f(a) = 0, 
\]
a contradiction.
Hence $E_x \notin \Span\{E_a : a\in A\}$.
\end{proof}

\begin{lemma}
\label{lemma:ideals-dont-change}
For every $A \subseteq \F_q^m$, we have $I_d(\cl_d(A)) = I_d(A)$, and therefore $h_d(\cl_d(A)) = h_d(A)$.
\end{lemma}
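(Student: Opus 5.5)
We prove the two inclusions $I_d(A) \subseteq I_d(\cl_d(A))$ and $I_d(\cl_d(A)) \subseteq I_d(A)$ directly from the definitions, and then deduce the statement about $h_d$ from rank--nullity.

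First, we check that $A \subseteq \cl_d(A)$. If $a \in A$, then every $f \in I_d(A)$ satisfies $f(a) = 0$ by definition, so $a \in \cl_d(A)$. Alternatively, this follows from \cref{lemma:closure-lem}, since $E_a$ trivially lies in $\Span\{E_{a'} : a' \in A\}$. The kernel $I_d(\cdot)$ is order-reversing: a polynomial vanishing on a larger set vanishes on any subset of it. Hence $I_d(\cl_d(A)) \subseteq I_d(A)$.

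For the reverse inclusion, take $f \in I_d(A)$ and any $x \in \cl_d(A)$. By the definition of the closure, every polynomial in $I_d(A)$ vanishes at $x$; in particular $f(x) = 0$. Since $x$ was arbitrary, $f$ vanishes on all of $\cl_d(A)$ and has degree at most $d$, so $f \in I_d(\cl_d(A))$. This gives $I_d(A) \subseteq I_d(\cl_d(A))$, and therefore equality.

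Finally, rank--nullity gives $h_d(B) = \dim \calP_{m,d,q} - \dim I_d(B)$ for any $B \subseteq \F_q^m$. Applying this with $B = A$ and with $B = \cl_d(A)$, equal kernels yield $h_d(\cl_d(A)) = h_d(A)$. As a cross-check, \cref{fact:asw-ker-dim} together with \cref{lemma:closure-lem} gives the same conclusion: every $E_x$ with $x \in \cl_d(A)$ already lies in $\Span\{E_a : a \in A\}$, so the two spans coincide.

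None of these steps is a real obstacle. The only point needing care is that the closure is defined through $I_d(A)$ itself, which is exactly what makes the reverse inclusion tautological.
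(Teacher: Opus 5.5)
Your proposal is correct and follows essentially the same route as the paper: you get $I_d(\cl_d(A)) \subseteq I_d(A)$ from $A \subseteq \cl_d(A)$, get the reverse inclusion directly from the definition of the closure, and read off $h_d(\cl_d(A)) = h_d(A)$ from rank--nullity. You also explicitly justify $A \subseteq \cl_d(A)$, which the paper uses without comment.
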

\begin{proof}
Since $A \subseteq \cl_d(A)$, any polynomial that vanishes on $\cl_d(A)$ also  vanishes on $A$. Hence $I_d(\cl_d(A)) \subseteq I_d(A)$.
On the other hand,
if $f \in I_d(A)$, then, by definition, every point of $\cl_d(A)$ is a common zero of all polynomials in $I_d(A)$, so $f$ vanishes on $\cl_d(A)$. Thus, $f \in I_d(\cl_d(A))$, and $I_d(A) \subseteq I_d(\cl_d(A))$. 
The equality $h_d(\cl_d(A)) = h_d(A)$ follows immediately from the definition of~$h_d$.
\end{proof}

\begin{lemma}
\label{lemma:generalized-hamming-weights}
Let $d \geq 1$. If $A \subseteq \F_q^m$ satisfies $h_d(A) \leq D_{d,q}(r)$, then $\abs{A} \leq q^r$.
\end{lemma}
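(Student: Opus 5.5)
The plan is to translate the condition $h_d(A) \le D_{d,q}(r)$ into a statement about a low-support subcode of $\RM_q[m,d]$. Then I would invoke the Heijnen--Pellikaan determination of the generalized Hamming weights (\cref{def:gen-ham-wei}) of $\RM_q[m,d]$~\cite{heijnen1997generalized}, in the same way \cite{ASW15} uses Wei's result in the binary case. Write $K \coloneqq \dim \calP_{m,d,q}$ and $h \coloneqq h_d(A)$. By rank--nullity, $I_d(A)$ is a subspace of $\calP_{m,d,q}$ of dimension $K-h$. Evaluating on all of $\F_q^m$ identifies $\calP_{m,d,q}$ with $\RM_q[m,d]$, since reduced polynomials are determined by their evaluations. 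Under this identification $I_d(A)$ becomes a subcode $D \le \RM_q[m,d]$ of dimension $K-h$ whose codewords all vanish on $A$. Hence $\supp(D) \subseteq \F_q^m \setminus A$, and whenever $K-h \ge 1$ we get
\[
d_{K-h}(\RM_q[m,d]) \le \abs{\supp(D)} \le q^m - \abs{A}.
\]

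Next I would handle the trivial case and use monotonicity. If $K-h = 0$, then $D_{d,q}(r) \ge h = K = D_{d,q}(m)$, which forces $q^r \ge$ the relevant size, and in any case $\abs{A} \le q^m$ settles $r=m$. More carefully, if $D_{d,q}(r)=D_{d,q}(m)$ with $r<m$ then $d \ge r(q-1)+\ldots$; I would treat this boundary by noting that restriction to an $r$-flat is then injective, and check it separately. Otherwise I use that generalized Hamming weights are strictly increasing in the index. Since $K-h \ge K - D_{d,q}(r)$, this gives $d_{K-h} \ge d_{K-D_{d,q}(r)}$.

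The core claim is therefore
\[
d_{K - D_{d,q}(r)}(\RM_q[m,d]) = q^m - q^r.
\]
The upper bound ``$\le$'' is the natural extremal example. Take $D$ to be the polynomials of degree $\le d$ vanishing on an $r$-dimensional subspace $W$. Restriction to $W \cong \F_q^r$ maps $\calP_{m,d,q}$ onto $\calP_{r,d,q}$, because monomials in the first $r$ coordinates lift. So $\dim D = K - D_{d,q}(r)$ and $\abs{\supp(D)} \le q^m - q^r$. The lower bound ``$\ge$'' is exactly what must be extracted from Heijnen--Pellikaan. Their theorem gives $d_a(\RM_q[m,d])$ in terms of the $a$-th element of a specific ordering of exponent vectors in $\{0,\dots,q-1\}^m$ with $\abs{\alpha}\le d$. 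I would check that at index $a = K - D_{d,q}(r)$ the formula evaluates to $q^m - q^r$. Concretely, I expect the first $K - D_{d,q}(r)$ monomials in their order to be those involving at least one of the last $m-r$ variables. The resulting weight is then $q^m$ minus the size of the common zero set, which is the $r$-flat.

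Combining the pieces gives $q^m - q^r \le d_{K-h} \le q^m - \abs{A}$, that is, $\abs{A} \le q^r$. The main obstacle is the Heijnen--Pellikaan step: correctly matching their indexing and ordering conventions, and verifying that the hypothesis $1 \le d \le m(q-1)$ lies within the range where their formula applies. The boundary cases where $d \ge r(q-1)$, so that $D_{d,q}(r) = q^r$, need particular care. If the direct evaluation is messy, my fallback is a footprint/Gröbner argument. Take a subcode of dimension $\ge K - D_{d,q}(r)$. By Gaussian elimination, choose a basis with distinct leading monomials under a graded order. Then bound its common zero set by the number of standard monomials, which is at most $D_{d,q}(r)$-many in an appropriate sense and yields at most $q^r$ zeros, following the footprint bound.
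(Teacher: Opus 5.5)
Your outline matches the paper's proof: $I_d(A)$ is a subcode of $\RM_q[m,d]$ of dimension at least $a = D_{d,q}(m) - D_{d,q}(r)$ supported off $A$, and the needed bound $d_a(\RM_q[m,d]) \ge q^m - q^r$ comes from Heijnen--Pellikaan. The index matching you defer is done in the paper via the order-reversing bijection $\beta \mapsto \mu - \beta$, with $\mu = (q-1,\dots,q-1)$, from $\{\abs{\beta} \le d\}$ onto their indexing set $\{\abs{\alpha} > (q-1)m - d - 1\}$; this makes the $a$-th element $\mu - e_{m-r}$ and the weight exactly $q^m - q^r$.

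Your boundary worries are unnecessary: for $d \ge 1$ and $r < m$ one always has $D_{d,q}(r) < D_{d,q}(m)$, so the case $K-h=0$ with $r<m$ cannot occur. The paper only separates out $r = m$ and $d = m(q-1)$, the latter because then $h_d(A) = \abs{A}$ and $D_{d,q}(r) = q^r$ directly.
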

\begin{proof}
If $r = m$, then $\abs{A} \leq q^m = q^r$ is trivial.
If $d = m(q-1)$, then $\calP_{m,d,q}$ is the full space of all functions $\F_q^m \to \F_q$. In that case, $h_d(A) = \abs{A}$. Because every vector in $\{0,\dots, q-1\}^r$ has coordinate sum at most $r(q-1) \leq m(q-1) = d$, we can also conclude that $D_{d,q}(r) = q^r$. Thus, $\abs{A} = h_d(A) \leq D_{d,q}(r) = q^r$.
Henceforth, we assume $1 \leq d < m(q-1)$ and $r < m$.

Define $C \coloneqq \RM_q[m,d]$. It is clear $\dim C = \dim \calP_{m,d,q} = D_{d,q}(m)$.
For convenience, set $N \coloneqq D_{d,q}(m)$ and $a \coloneqq N - D_{d,q}(r)$. Because $d \geq 1$ and $r < m$, we have $a \geq 1$.
We will prove that $d_a(C) = q^m - q^r$, where $d_a$ is the generalized Hamming weight (\cref{def:gen-ham-wei}).
To do so, we will use the following theorem of Heijnen and Pellikaan~\cite{heijnen1997generalized}. Indeed, this lemma is a basic corollary of \cite{heijnen1997generalized}, with details included for completeness. 

\begin{theorem}[{\cite[Theorem 5.10]{heijnen1997generalized}}]
\label{thm:generalized-wei}
Let $Q \coloneqq \{0,1,\dots, q-1\}$.
Let $\alpha$ be the $a$'th element of $Q^m$ in lexicographic order with the property that $\sum_i \alpha_i > (q-1)m - d - 1$. Then 
\[
d_a(\RM_q[m,d]) = \sum_{i = 1}^m \alpha_{i}q^{m-i} + 1.
\]
\end{theorem}
We will construct an $\alpha^*$ that produces our claimed identity.
Define $d^\perp \coloneqq (q-1)m - d -1$ and $F_{>d^\perp} \coloneqq \{\alpha \in Q^m : \abs{\alpha} > d^\perp\}$.
\cref{thm:generalized-wei} says that if $\alpha$ is the $a$'th element of $F_{>d^\perp}$ in lexicographic order, then $d_a(C) = \sum_{j=1}^m \alpha_j q^{m-i} + 1$.
Define also $B_{\leq d} \coloneqq \{x \in Q^m : \abs{x} \leq d\}$ and $\mu \coloneqq (q-1, \dots, q-1) \in Q^m$.
Observe that $\abs{B_{\le d}} = N$.

Consider the map $\beta \mapsto \mu - \beta$.
We have $\abs{\mu - \beta} = m(q-1) - \abs{\beta}$. So, $\abs{\beta}\leq d \iff \abs{\mu - \beta} > d^\perp$.
Thus the map is a bijection between $B_{\le d}$ and $F_{> d^\perp}$. It is also easy to see that the map reverses lexicographic order.

Recall that $D_{d,q}(r)$ is the number of monomials in $r$ variables of degree at most $d$. Thus the first $D_{d,q}(r)$ elements of $B_{\le d}$ are vectors of the form 
$(0,\ldots, 0, \beta_{m-r+1}, \dots, \beta_m)$ with $\sum_i \beta_i \leq d$.
Therefore, the \((D_{d,q}(r)+1)\)-st element of $B_{\le d}$ in lexicographic order is $\beta^*=(0,\dots,0,1,0,\dots,0)$,  where the $1$ occurs in coordinate $m-r$.

Because $\abs{B_{\le d}}=N$ and the map $\beta\mapsto \mu-\beta$ reverses order, it follows that
\[
\alpha^* \coloneqq \mu-\beta^* = (q-1,\dots,q-1,q-2,q-1,\dots,q-1)
\]
is the $N-(D_{d,q}(r)+1)+1=N-D_{d,q}(r)=a$'th element of $F_{>d^\perp}$.
Thus, by applying \cref{thm:generalized-wei} with $\alpha^*$, we get 
\begin{align*}
    d_a(C) = \left((q-1) \sum_{i=1}^m q^{m-i}\right) - (q-1)q^r + (q-2) q^r + 1
    = q^m - q^r,  
\end{align*}
where we use the identity $\sum_{i=1}^m q^{m-i} = \tfrac{q^m - 1}{q-1}$.

Recall that $\punc_A: C \to \F_q^A$ denotes the map that punctures $C$ to the coordinates in $A$. Recall that $\dim \im(\punc_A) = h_d(A)$.
By our hypothesis, $\dim \ker(\punc_A) = N - h_d(A) \geq N - D_{d,q}(r) = a$.
Let $D \leq \ker(\punc_A)$ be an $a$-dimensional subcode of $C$.
Thus, every codeword in $D$ is $0$ on the coordinates of $A$, so $\supp(D) \subseteq A^c$. Then $\abs{A^c} \geq \abs{\supp(D)} \geq d_a(C) = q^m - q^r$.
Therefore, $\abs{A}\le q^r$.
\end{proof}

We are now ready to prove the theorem. 
For $0 \le j \le t$, let $\calE_j$ be the event that the vectors $E_{v_1}, \dots, E_{v_j}$ are linearly independent. (Note $\calE_0$ is thus the whole probability space.) For $0 \le j \le t$, let $A_j \coloneqq \{v_1, \dots, v_j \} \subseteq \F_q^m$.

Fix some $j \in \{0,1,\ldots , t\}$. On the event $\calE_j$, the functionals $E_{v_1}, \dots, E_{v_j}$ are linearly independent. Because $\Span\{E_a : a \in A_j\} = \Span\{E_{v_1}, \dots, E_{v_j}\}$ by definition, \cref{fact:asw-ker-dim} gives $h_d(A_j) = j$.
By \cref{lemma:ideals-dont-change}, $h_d(\cl_d(A_j)) = j$.
Then, because $j < t \leq D_{d,q}(r)$, \cref{lemma:generalized-hamming-weights} states $\abs{\cl_d(A_j)} \leq q^r$.

Conditioning on event $\calE_j$, the event $\calE_{j+1}^c$ is the event that 
\[
\{E_{v_{j+1}} \in \Span\{E_{v_1}, \dots, E_{v_j}\}\}  
= 
\{E_{v_{j+1}} \in \Span\{E_a : a \in A_j\}\}
=  \{v_{j+1} \in \cl_d(A_j)\}. 
\]
The first equality follows from the fact that $A_j = \{v_1,\dots, v_j\}$ and the second equality follows from \cref{lemma:closure-lem}.
Because $v_{j+1}$ is independent of $v_1, \dots, v_j$ and is uniform on $\F_q^m$, we obtain, conditioned on the event $\calE_j$, $\Pr[v_{j+1} \in \cl_d(A_j) | v_1, \dots, v_j] = \tfrac{\abs{\cl_d(A_j)}}{q^m} \leq q^{r - m}$.
Therefore, 
\begin{align*}
    \Pr[\calE_j \cap \calE_{j+1}^c] 
    &= \Pr[\calE_j] \cdot \Pr[\calE^c_{j+1} | \calE_j] \leq \Pr[\calE_j] \cdot q^{r-m} \leq q^{r-m}.
\end{align*}

The event $\calE_j \cap \calE_{j+1}^c$ is the event that the first linear dependence is sampled on the $(j+1)$st sample.
Thus, for $0 \leq j \leq t-1$, these events are pairwise disjoint. Furthermore, their union is the event $\calE_t^c = \{E_{v_1}, \dots, E_{v_t} \text{ are linearly dependent}\}$.
We have 
\begin{align*}
    \Pr[\calE_t^c] = \sum_{j = 0}^{t-1} \Pr[\calE_j \cap \calE_{j+1}^c] \leq \sum_{j=0}^{t-1} q^{r-m} = t q^{r-m}.
\end{align*}
Therefore, $\Pr[\calE_t] \geq 1 - t q^{r-m}$, which establishes the theorem.
\end{proof}

\subsection{Correcting random errors}

We will now use \cref{thm:asw-45} to prove \cref{thm:asw-62}, generalizing \cite[Theorem 6.2]{ASW15}.
We restate the theorem for convenience.

\rmcorrect*

Let $E_q(m,d)$ denote the parity check matrix for $\RM_q[m,m(q-1) - d -1]$.
Recall that the rows of $E_q(m,d)$ are indexed by $\calM_{m,d,q}$ and the columns are indexed by $\F_q^m$.
Moreover, for $u \in \F_q^m$, the $u$'th column of $E_{q}(m,d)$ is given by the vector $(g(u))_{g \in \calM_{m,d,q}}$.

\begin{definition}
For error patterns $\be, \be' \in \F_q^{q^m}$, define the equivalence relation $\sim_d$ as 
\[
\be \sim_d \be' \iff \sum_{x \in \F_q^m} \be(x) f(x) = \sum_{x \in \F_q^m} \be'(x) f(x) \qquad \forall f \in \calP_{m,d,q}.
\]
\end{definition}

The equivalence relation $\sim_d$ captures exactly when two error patterns produce the same syndrome.

\begin{lemma}
\label{lemma:syndrome-to-equiv}
   For $\be, \be' \in \F_q^{q^m}$, 
   \[
   E_q(m,d) \be = E_q(m,d) \be' \iff \be \sim_d \be'.
   \]
\end{lemma}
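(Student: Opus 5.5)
This is a direct unwinding of definitions: we compute the syndrome entrywise and then compare it with the definition of $\sim_d$. The $g$'th row of $E_q(m,d)$, for $g \in \calM_{m,d,q}$, is the vector $(g(u))_{u \in \F_q^m}$. Hence for any $\be \in \F_q^{q^m}$ the $g$'th entry of the syndrome is
\[
\bigl(E_q(m,d)\be\bigr)_g = \sum_{x \in \F_q^m} \be(x)\, g(x).
\]
Consequently, $E_q(m,d)\be = E_q(m,d)\be'$ holds if and only if
\[
\sum_{x} \be(x) g(x) = \sum_x \be'(x) g(x) \quad \text{for every monomial } g \in \calM_{m,d,q}.
\]

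For the backward direction, suppose $\be \sim_d \be'$. The defining identity of $\sim_d$ holds for all $f \in \calP_{m,d,q}$, and each monomial $g \in \calM_{m,d,q}$ lies in $\calP_{m,d,q}$. Specializing to these $g$ gives equality of every syndrome entry, so the syndromes agree.

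For the forward direction, suppose the syndromes are equal, so the identity holds for every monomial in $\calM_{m,d,q}$. The functional
\[
f \mapsto \sum_x (\be(x)-\be'(x)) f(x)
\]
is $\F_q$-linear in $f$. It vanishes on the basis $\calM_{m,d,q}$ of $\calP_{m,d,q}$, so it vanishes on all of $\calP_{m,d,q}$. This is exactly the statement $\be \sim_d \be'$.

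There is no real obstacle. The only point needing care is that $\calM_{m,d,q}$ really is a spanning set of $\calP_{m,d,q}$, viewed as functions on $\F_q^m$, and that the rows of $E_q(m,d)$ are indexed by it. Both facts hold by the definitions stated just before the lemma.
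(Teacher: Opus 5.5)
Your proof is correct and follows essentially the same route as the paper's. You write each syndrome coordinate as $\sum_{x} \be(x) g(x)$ for a monomial $g \in \calM_{m,d,q}$, then pass from the monomial basis to all of $\calP_{m,d,q}$ by linearity.
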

\begin{proof}
For each monomial $g \in \calM_{m,d,q}$, the $g$'th coordinate of $E_q(m,d)$ is $\sum_{x \in \F_q^m}\be(x) g(x)$. Therefore, $E_q(m,d) \be = E_q(m,d) \be'$ if and only if the sums agree for every $m \in \calM_{m,d,q}$.
Because $\calM_{m,d,q}$ is a basis for $\calP_{m,d,q}$, this is equivalent to the sums agreeing for every $f \in \calP_{m,d,q}$ by linearity. 
\end{proof}

Following \cite{ASW15}, for $U = \{u_1, \dots, u_t\} \subseteq \F_q^m$, let $U^d$ denote the submatrix of $E_q(m,d)$ consisting of the columns indexed by $u_1, \dots, u_t$.
Let $\mathbf{1}_{u_i} \in \F_q^{q^m}$ be the vector that is $1$ in the $u_i$'th coordinate and $0$ elsewhere.

\begin{lemma}[Generalization of {\cite[Lemma 6.11]{ASW15}}]
\label{lemma:indep-gives-uniqueness}
Let $\be = \sum_{i=1}^t c_i \mathbf{1}_{u_i}$ be an error pattern of weight $t$, where $U = \{u_1, \dots, u_t\} \subseteq \F_q^m$ and $c_1, \ldots , c_t \in \F_q^\times$. 
Assume that the columns of $U^d$ are linearly independent. Then for every error pattern $\be'$ of weight $t' \leq t$, we have $\be' \sim_{2d+1} \be \implies \be' = \be$.
In particular, if $\be' \neq \be$ and the weight of $\be'$ is $\leq t$, then $\be' \not\sim_{2d+1} \be$. 
\end{lemma}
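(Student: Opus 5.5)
Write $\be' = \sum_{j=1}^{t'} c'_j \mathbf{1}_{u'_j}$ with $U' = \{u'_1,\dots,u'_{t'}\}$ and $c'_j \in \F_q^\times$. Assume $\be' \sim_{2d+1} \be$; we show $\be' = \be$. The idea, following \cite{ASW15}, is to test the relation $\sim_{2d+1}$ against products $f = gh$ with $g,h \in \calP_{m,d,q}$. Multilinear reduction modulo $x_i^q - x_i$ does not increase total degree, and reduction preserves values on $\F_q^m$, so $gh$ agrees pointwise with an element of $\calP_{m,2d,q} \subseteq \calP_{m,2d+1,q}$. Hence for all $g,h \in \calP_{m,d,q}$,
\[
\sum_{i} c_i\, g(u_i) h(u_i) = \sum_{j} c'_j\, g(u'_j) h(u'_j).
\]

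Step 1 is to encode this as a matrix identity. Let $W$ be the submatrix of $E_q(m,d)$ on the columns indexed by $U \cup U'$, with $|U \cup U'| \le t + t' \le 2t$. Let $\Delta$ be the diagonal matrix whose entry at a point $x$ is $\be(x) - \be'(x)$. The displayed identity, taken over all pairs of monomials $g,h \in \calM_{m,d,q}$, says exactly that $W \Delta W^\top = 0$. Splitting the columns of $W$ into the block $A = U^d$ (the columns in $U$) and the block $B$ (the columns in $U' \setminus U$), this reads
\[
A \Delta_U A^\top = B\, \mathrm{diag}(c'_j)_{u'_j \notin U}\, B^\top,
\]
where $\Delta_U$ is the diagonal restriction of $\Delta$ to $U$.

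Step 2 is a rank argument. By hypothesis $A$ has full column rank $t$, so $A$ has a left inverse $L$ with $LA = I_t$. Conjugating gives $\Delta_U = L B D' B^\top L^\top$, where $D' = \mathrm{diag}(c'_j)_{u'_j \notin U}$.

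I had hoped to conclude cheaply from ranks: the right side has rank at most $|U' \setminus U|$, while $\Delta_U$ is diagonal with rank equal to the number of points where $\be$ and $\be'$ differ inside $U$. That alone is not enough. I therefore expect the main obstacle to be showing that $U' \subseteq U$, which requires using independence more cleverly.

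Step 3 handles that obstacle, and I would try the following first. Since the columns of $A$ are independent, for each $u_i \in U$ there is $g_i \in \calP_{m,d,q}$ with $g_i(u_k) = \delta_{ik}$ for $u_k \in U$. Plugging $g = h = g_i$, and $g = g_i$, $h = g_k$, gives
\[
c_i \delta_{ik} = \sum_j c'_j\, g_i(u'_j)\, g_k(u'_j).
\]
The same construction cannot be done on $U'$, since independence is not assumed there. Instead, consider $V = \ker(\punc_U) \cap \calP_{m,d,q}$, the polynomials of degree $\le d$ vanishing on $U$. Testing with $g = h \in V$ gives $\sum_j c'_j g(u'_j)^2 = 0$, and more generally $\sum_j c'_j\, g(u'_j) h(u'_j) = 0$ for all $g,h \in V$. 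Over a general field this quadratic-form vanishing does not immediately force $g(u'_j) = 0$.

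To complete Step 3, I would count dimensions. Let $P$ be the evaluation matrix of $V$ on $U' \setminus U$. The condition is $P D' P^\top = 0$ with $D'$ invertible diagonal, which bounds $\rank P \le |U'\setminus U|/2$. Combining this with Step 2, namely $t \ge \rank(\Delta_U)$, and with the weight bound $t' \le t$ forces $|U' \setminus U|$ to be small, and a short induction would then drive it to zero. Once $U' \subseteq U$, the identity $W \Delta W^\top = 0$ becomes $A \Delta_U A^\top = 0$. Left-inverting $A$ yields $\Delta_U = 0$, that is, $\be' = \be$. This last step is routine; making the exclusion of $U' \setminus U$ airtight is the delicate part, and it is where the paper presumably follows \cite[Lemma 6.11]{ASW15} closely.
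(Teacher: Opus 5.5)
\textbf{There is a genuine gap, and finishing Step 3 cannot close it.} Every identity you derive from $\be'\sim_{2d+1}\be$ comes from test functions $gh$ with $g,h\in\calP_{m,d,q}$. So $W\Delta W^\top=0$, the left-inverse identity of Step 2, and the isotropy condition $PD'P^\top=0$ all use only the weaker consequence $\be'\sim_{2d}\be$. The lemma is false under that weaker relation, so no rank count or induction built on these identities can give $U'\subseteq U$.
\begin{itemize}
\item With $d=0$ and $t=1$, the matrix $U^0=(1)$ has independent columns. Yet $\be=c\mathbf{1}_u$ and $\be'=c\mathbf{1}_v$ with $v\neq u$ satisfy $\be'\sim_0\be$. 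Here your space $V$ is $\{0\}$, so nothing constrains $U'\setminus U$.
\item Over $\F_2$ with $d=1$, take $U=\{0,e_1,e_2,e_3\}$, so the columns of $U^1$ are independent. Let $\be=\mathbf{1}_U$ and $\be'=\mathbf{1}_{\{e_1+e_2,\,e_1+e_3,\,e_2+e_3,\,e_1+e_2+e_3\}}$. Then $\be-\be'=\mathbf{1}_{\Span(e_1,e_2,e_3)}$ annihilates $\calP_{m,2,2}$, but not $x_1x_2x_3$.
\end{itemize}
The extra degree in $2d+1$ is exactly what pins down the support.

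\textbf{The missing idea, which the paper uses, is to test against $x_\ell\, g_i g_k\in\calP_{m,2d+1,q}$.}

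First extract more from the identity you already have, $c_i\delta_{ik}=\sum_j c'_j g_i(u'_j)g_k(u'_j)$. Set $\gamma_i\coloneqq(g_i(u'_1),\dots,g_i(u'_{t'}))^\top\in\F_q^{t'}$ and $W'\coloneqq\diag(c'_1,\dots,c'_{t'})$. The identity says the $\gamma_i$ are pairwise $W'$-orthogonal with nonzero self-pairings $c_i$. Hence they are linearly independent, so $t\le t'$. Combined with $t'\le t$, this gives $t'=t$, and the $\gamma_i$ form a basis of $\F_q^{t}$. This is where the weight hypothesis enters, not through ranks of $\Delta_U$.

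Next put $D_\ell\coloneqq\diag\bigl((u'_1)_\ell,\dots,(u'_t)_\ell\bigr)$. The degree-$(2d+1)$ tests give, for all $k$,
\[
(D_\ell\gamma_i)^\top W'\gamma_k=(u_i)_\ell c_i\delta_{ik}=(u_i)_\ell\,\gamma_i^\top W'\gamma_k.
\]
Since the $\gamma_k$ span $\F_q^t$ and $W'$ is invertible, this forces $D_\ell\gamma_i=(u_i)_\ell\gamma_i$. Because $\gamma_i^\top W'\gamma_i=c_i\neq 0$, some coordinate $j$ has $(\gamma_i)_j\neq0$. That coordinate gives $(u'_j)_\ell=(u_i)_\ell$ for every $\ell$, so $u_i\in U'$. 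By cardinality $U=U'$.

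Finally, the degree-$d$ test gives $c'_i=\sum_j c'_j g_i(u'_j)=\sum_j c_j g_i(u_j)=c_i$, so $\be'=\be$.
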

\begin{proof}
Define the evaluation map $\eval_U : \calP_{m,d,q} \to \F_q^t$ as $f \mapsto (f(u_1), \dots, f(u_t))$.
Because the columns of $U^d$ are linearly independent, $\eval_U$ is surjective and has rank $t$.
Therefore, there exist polynomials $f_1, \dots, f_t \in \calP_{m,d,q}$ such that $f_i(u_j) = \delta_{ij}$ for all $1 \leq i,j \leq t$.

Let $V = \{v_1, \dots, v_{t'}\} \subseteq \F_q^m$ with $t' \leq t$. Let $\be' = \sum_{j=1}^{t'} c'_j \mathbf{1}_{v_j}$, where $c'_j \in \F_q^\times$.
Assume $\be' \sim_{2d+1} \be$.
For each $i \in [t]$, define $g_i \coloneqq (f_i(v_1), \dots, f_i(v_{t'}))^T$, the column vector containing the evaluations of $f_i$ on $V$. 
Define $W' \coloneqq \diag(c'_1, \dots, c'_{t'})$.

Recall the basic that that for $f,f'\in\calP_{m,d,q}$, the pointwise product $ff'$ is an element of $\calP_{m,2d,q}$. 
Hence, we have 
\begin{align*}
g_i^T W' g_j 
&= \sum_{k = 1}^{t'} c'_k f_i(v_k) f_j(v_k) \\
&= \sum_{k = 1}^{t'} \be'(k) f_i(v_k) f_j(v_k) \\
&= \sum_{k = 1}^{t} \be(k) f_i(u_k) f_j(u_k) \\
&= \sum_{k = 1}^{t} c_k \delta_{ik} \delta_{jk} \\
&= c_i \delta_{ij},
\end{align*}
where the third equality follows from $\be' \sim_{2d+1} \be$. 
We have shown that $g_1,  \dots, g_{t}$ are orthogonal under the bilinear form induced by $W'$, so they are linearly independent. Since $g_i \in \F_q^{t'}$, we must have $t \leq t'$. By assumption $t' \leq t$, so $t' = t$.
Therefore, $g_1, \dots, g_t$ form a basis of $\F_q^t$.

Now for $\ell \in [m]$, define $D_\ell \coloneqq \diag((v_1)_\ell, \dots, (v_t)_\ell)$.
Because $x_\ell ff'$ is an element of $\mathcal P_{m,2d+1,q}$, we can again use the assumption $\be' \sim_{2d+1} \be$ to obtain, for all $i,j \in [t]$,
\[
(D_\ell g_i)^T W' g_j = \sum_{k=1}^t c'_k (v_k)_\ell f_i(v_k) f_j(v_k) = \sum_{k=1}^t c_k (u_k)_\ell f_i(u_k) f_j(u_k) = (u_i)_\ell c_i \delta_{ij}.
\]
Combining these two identities, we get for all $j$
\[
(D_\ell g_i)^T W' g_j = (u_i)_\ell g_i^T W' g_j  
\iff  (D_\ell g_i - (u_i)_\ell g_i)^T W' g_j = 0.
\]
Because $g_1, \dots, g_t$ are linearly independent, the only vector in their span orthogonal to all of them is $0$.
Therefore, $D_\ell g_i = (u_i)_\ell g_i$.
Because we proved $g_i^T W' g_i = c_i$, we know there exists a coordinate $k$ such that $(g_i)_k \neq 0$. On this coordinate, we have $(D_\ell g_i)_k = (v_k)_\ell (g_i)_k = (u_i)_\ell (g_i)_k$. This implies that $(v_k)_\ell = (u_i)_\ell$. Since this holds for all $\ell \in [m]$, we have $v_k = u_i$. 
We have shown that for each index $i \in [t]$, there exists a $k \in [t]$ such that $v_k = u_i$, i.e., $U \subseteq V$. Because $\abs{U} = \abs{V} = t$, we conclude $U = V$.

Finally, relabel the vectors in $V$ so we have $u_i = v_i$. Then, again using $\be' \sim_{2d+1} \be$, we have $c_i = \sum_{j=1}^t c_j f_i(u_j) = \sum_{j=1}^t c'_j f_i(v_j) = c'_i$. 
Thus, $\be' = \be$.
\end{proof}

\cref{lemma:syndrome-to-equiv} essentially shows that if the coordinates of an error pattern are linearly independent, then we can correct that error pattern in $\RM_q[m,m(q-1) - (2d+2)]$. 

\begin{corollary}[Generalization of {\cite[Theorem 6.13]{ASW15}}]
\label{cor:indep-lets-u-decode}
   Let $U \subseteq \F_q^m$ with $\abs{U} = t$, and assume that the columns of $U^d$ are linearly independent. Then every error pattern supported on $U$ is correctable in $\RM_q[m, m(q-1) - (2d+2)]$. 
\end{corollary}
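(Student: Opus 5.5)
The plan is to reduce correctability to uniqueness of low-weight error patterns with a given syndrome, and then apply \cref{lemma:indep-gives-uniqueness}. First fix the meaning of ``correctable'': an error pattern $\be$ is correctable in a code with parity-check matrix $H$ if $\be$ is the unique error of weight at most $\mathrm{wt}(\be)$ in its syndrome class. Equivalently, a minimum-weight (or MAP-style) decoder that outputs a lowest-weight $\be'$ with $H\be' = H\be$ recovers $\be$ exactly. We then identify the parity-check matrix. The dual of $\RM_q[m, m(q-1)-(2d+2)]$ is $\RM_q[m, 2d+1]$, so a parity-check matrix of $\RM_q[m,m(q-1)-(2d+2)]$ is $E_q(m,2d+1)$, whose rows are indexed by $\calM_{m,2d+1,q}$. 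This matches the convention for $E_q(m,\cdot)$ stated above, with $d$ replaced by $2d+1$.

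Now let $\be$ be any error pattern supported on $U$, with $\be = \sum_{u \in S} c_u \mathbf{1}_u$ for some $S \subseteq U$, $|S| = t_0 \le t$, and all $c_u \neq 0$. The columns of $S^d$ form a subset of the columns of $U^d$, so they are linearly independent as well. Suppose $\be'$ has weight at most $t_0$ and $E_q(m,2d+1)\be' = E_q(m,2d+1)\be$. By \cref{lemma:syndrome-to-equiv} applied with $2d+1$ in place of $d$, this means $\be' \sim_{2d+1} \be$. \cref{lemma:indep-gives-uniqueness}, applied to $S$ with weight $t_0$, then gives $\be' = \be$. So $\be$ is the unique minimum-weight element of its coset, and the minimum-weight decoder recovers it.

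The only delicate points are bookkeeping. One is passing from $U$ to the actual support $S$, which handles patterns with some zero coefficients on $U$. The other is making sure the duality index works out: $m(q-1) - (m(q-1)-(2d+2)) - 1 = 2d+1$. I expect this second step, checking the dual-code identification and the exact notion of ``correctable'' the paper intends, to be the main place to be careful. Everything else follows directly from the two lemmas.
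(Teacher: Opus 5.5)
Your proof is correct and is essentially the paper's argument: you identify $E_q(m,2d+1)$ as the parity-check matrix of $\RM_q[m,m(q-1)-(2d+2)]$, turn equal syndromes into $\sim_{2d+1}$-equivalence via \cref{lemma:syndrome-to-equiv}, and conclude with \cref{lemma:indep-gives-uniqueness}. Your extra step of restricting to the actual support $S\subseteq U$, and asking for uniqueness only among patterns of weight at most $\mathrm{wt}(\be)$, is worthwhile care. The paper's phrasing (uniqueness among all patterns of weight at most $t$) is delivered by \cref{lemma:indep-gives-uniqueness} only when $\be$ has full support $U$, which is the only case used downstream, and it can genuinely fail for patterns supported on a proper subset of $U$.
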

\begin{proof}
Let $\be$ be any error pattern supported on $U$. By \cref{lemma:indep-gives-uniqueness}, there is no error pattern $\be' \neq \be$ of weight at most $t$ such that $\be' \sim_{2d+1} \be$.
\cref{lemma:syndrome-to-equiv} implies there is no such $\be'$ with $E_q(m,2d+1) \be' = E_q(m,2d+1) \be$. 
Because $E_q(m,2d+1)$ is the parity check matrix of $\RM_q[m,m(q-1) - (2d+2)]$, the syndrome of $\be$ uniquely determines $\be$ among all error patterns of weight at most $t$.
\end{proof}

Combining \cref{cor:indep-lets-u-decode,thm:asw-45}  implies \cref{thm:asw-62}.

\begin{proof}[Proof of \cref{thm:asw-62}]
    Let $\be$ be a uniformly random error pattern of weight $t$, and let $U \subseteq \F_q^m$ be the coordinates that support $\be$.
    Because each size-$t$ subset of $\F_q^m$ supports exactly $(q-1)^t$ error patterns of weight $t$, the support of $U$ is uniformly random.

    Let $v_1, \dots, v_t$ be independent uniformly random points of $\F_q^m$. Define the events 
    \[ 
    \calE_1 \coloneqq \{E_{v_1}, \dots, E_{v_t}\text{ are linearly independent}\},\qquad \calE_2 \coloneqq \{v_1, \dots, v_t\text{ are pairwise distinct}\}.
    \]
    Because $\calE_1 \subseteq \calE_2$ and by \cref{thm:asw-45}, we have $\Pr[\calE_1 | \calE_2] = \tfrac{\Pr[\calE_1]}{\Pr[\calE_2]} \geq \Pr[\calE_1] \geq 1 - t q^{r-m}$. 
    Conditioned on $\calE_2$, the ordered tuple $(v_1, \dots, v_t)$ is uniformly distributed over all such tuples of distinct points of $\F_q^m$. Because event $\calE_1$ is invariant under permutation, $\Pr[\calE_1 | \calE_2]$ is precisely the probability that uniformly random size-$t$ subset $U\subseteq \F_q^m$ has $U^d$ with linearly independent columns. When $U^d$ has linearly independent columns, \cref{cor:indep-lets-u-decode} says that every error pattern supported on $U$ can be corrected in $\RM_q[m,m(q-1) - (2d+2)]$. Hence, a uniformly random error pattern of weight $t$ is correctable with probability at least $1 - t q^{r-m}$.
\end{proof}

\cref{thm:asw-62} precisely generalizes \cite[Theorem 6.2]{ASW15}. The following corollary is a related statement that is more useful for our property testing lower bounds in \cref{sec:testing-LBs}.

\begin{restatable}{corollary}{corrmbound}
\label{cor:random-error-unique}
Let $q$ be a prime number, $m \ge 1$, 
$0 \le d \le \lfloor \tfrac{m(q-1)-2}{2}\rfloor$, and $0 \le r \le m$.
Let $t \le D_{d,q}(r)$, and let $\be_t\sim \calE_{q^m,q}^t$.
Let $H$ be the parity check matrix for $\RM_q[m, m(q-1) - (2d + 2)]$.
Then with probability at least $1-tq^{r-m}$, the error pattern $\be_t$ is the unique error pattern of weight at most $t$ having syndrome $H\be_t$.
\end{restatable}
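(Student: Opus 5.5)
The plan is to reduce \cref{cor:random-error-unique} to \cref{thm:asw-45} and \cref{lemma:indep-gives-uniqueness}. The new ingredient, compared with \cref{thm:asw-62}, is the error model. The vector $\be_t=\sum_{i=1}^t \delta_{\ba_i}$ has positions $\ba_1,\dots,\ba_t$ drawn independently and uniformly from $\F_q^m$ (the block length is $q^m$). These positions may collide, so $\be_t$ can have weight less than $t$, and its nonzero entries are multiplicities (reduced mod $q$) rather than independent uniform field elements. We therefore work directly with the i.i.d.\ points $\ba_1,\dots,\ba_t$ and never condition on distinctness.

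First, apply \cref{thm:asw-45} with $v_i=\ba_i$. Since $t\le D_{d,q}(r)$ and $d\le m(q-1)$, we get that $E_{\ba_1},\dots,E_{\ba_t}$ (over $\calP_{m,d,q}$) are linearly independent with probability at least $1-tq^{r-m}$. Call this event $\calE$. On $\calE$ the points $\ba_i$ are pairwise distinct, because two equal points would give equal functionals. Hence on $\calE$ we have $\be_t=\sum_i \mathbf 1_{\ba_i}$ with support $U=\{\ba_1,\dots,\ba_t\}$ of size exactly $t$ and all coefficients $c_i=1\in\F_q^\times$. Moreover, the columns of $U^d$, which are exactly the representations of $E_{\ba_i}$ in the monomial basis, are linearly independent.

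Second, on $\calE$ invoke \cref{lemma:indep-gives-uniqueness}. Every error pattern $\be'$ of weight $t'\le t$ with $\be'\sim_{2d+1}\be_t$ must equal $\be_t$. By \cref{lemma:syndrome-to-equiv}, $\be'\sim_{2d+1}\be_t$ is equivalent to $E_q(m,2d+1)\be'=E_q(m,2d+1)\be_t$, and $E_q(m,2d+1)=H$ is the parity check matrix of $\RM_q[m,m(q-1)-(2d+2)]$. Thus $\be_t$ is the unique error of weight at most $t$ with syndrome $H\be_t$. This holds with probability at least $\Pr[\calE]\ge 1-tq^{r-m}$.

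The only point needing care is the first step: checking that linear independence of the sampled functionals already rules out collisions among the $\ba_i$. This is what lets the multiset error model of $\calE^t_{q^m,q}$ be handled without the conditioning argument used in the proof of \cref{thm:asw-62}. The hypothesis $d\le\lfloor (m(q-1)-2)/2\rfloor$ only ensures that $2d+1\le m(q-1)$, so the code and parity check matrix are well defined. I would note this, but no further work is needed there.
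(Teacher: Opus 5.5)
Your proposal is correct and follows the paper's proof: apply \cref{thm:asw-45} directly to the i.i.d.\ points $\ba_1,\dots,\ba_t$, and on the resulting independence event invoke \cref{lemma:indep-gives-uniqueness} together with \cref{lemma:syndrome-to-equiv}. You also note explicitly that linear independence of $E_{\ba_1},\dots,E_{\ba_t}$ forces the $\ba_i$ to be pairwise distinct, so $\be_t$ has weight exactly $t$ with all coefficients equal to $1\in\F_q^\times$, as the lemma requires; the paper leaves this step implicit, and it is worth keeping.
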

\begin{proof}
Let $U=\{\ba_1,\dots,\ba_t\}$.
By \cref{thm:asw-45}, with probability at least $1-tq^{r-m}$, the evaluation functionals $E_{\ba_1},\dots,E_{\ba_t}$ are linearly independent; equivalently,
the columns of $U^d$ are linearly independent.
On this event, \cref{lemma:indep-gives-uniqueness} implies that $\be_t$ is the unique error pattern of weight at most $t$ in its $\sim_{2d+1}$-class.
By \cref{lemma:syndrome-to-equiv}, this is equivalent to being the unique error pattern of weight at most $t$ with syndrome $H\be_t$.
\end{proof}

Using \cref{cor:random-error-unique}, we bound the success probability of the MAP decoder, which is used directly in our testing lower bounds.

\decodebounds*

\begin{proof}
Note that the dual code of $C$ is $C^\perp = \RM_q[m, m(q-1) - (2d+2)]$. Thus, we can view $G$ as the parity check matrix of $C^\perp$.
Set $\gamma\coloneqq tq^{r-m}$.
By \cref{cor:random-error-unique}, there is a decoder $\dec_{\le t}$ such that
\[
\Pr[\dec_{\le t}(G\be_t)=\be_t]\ge 1-\gamma.
\]
Therefore, by \cref{def:map},
\[
\sqrt{1 - p_{\rm decode}^t(C^\perp)^2}
\le \sqrt{1-(1-\gamma)^2}
\le \sqrt{2\gamma}
= \sqrt{2tq^{r-m}}. \qedhere
\]
\end{proof}

\paragraph{Acknowledgments.} We thank Arkopal Dutt, Yeongwoo Hwang, Amir Shpilka, Nick Hunter-Jones, Jesse Goodman, Vinayak Kumar, and Michael Jabber for helpful conversations. 

\printbibliography

\appendix 
\crefalias{section}{appendix}

\section{Proof of the Eldar--Forney theorem}
\label{sec:eldar-forney}

\eldarforney*

\begin{proof}
    Let $\{N_g\}_{g \in G}$ be any POVM for guessing $g$. Define the group average by
    \begin{equation*}
        M_g \coloneqq \frac{1}{|G|} \sum_{h \in G} U_h^\dagger N_{g+h} U_h.
    \end{equation*}
    Then, by explicit calculation, it is easy to see that $\{M_g\}_{g \in G}$ is again a POVM, and, for every $k \in G$, $M_{g+k} = U_k M_g U_k^\dagger$.
    Its success probability is the same as that of $\{N_g\}$:
    \begin{align*}
        \frac{1}{|G|}\sum_{g \in G}\bra{\phi_g} M_g \ket{\phi_g}
        &= \frac{1}{|G|^2}\sum_{g,h \in G}\bra{\phi_g} U_h^\dagger N_{g+h} U_h \ket{\phi_g} \\
        &= \frac{1}{|G|^2}\sum_{g,h \in G}\bra{\phi_{g+h}} N_{g+h} \ket{\phi_{g+h}} \\
        &= \frac{1}{|G|}\sum_{j \in G}\bra{\phi_j} N_j \ket{\phi_j}.
    \end{align*}
    Hence there exists an optimal POVM of the form $M_g = U_g M_0 U_g^\dagger$.

    Let $\mathcal{H}_0 \coloneqq \mathrm{span}\{U_g \ket{\phi} : g \in G\}$ be the orbit subspace generated by $\ket{\phi}$. 
    Because $G$ is finite abelian, its representation on $\mathcal{H}_0$ decomposes into one-dimensional irreps. Writing $\widehat{G}$ for the character group, define for each $\chi \in \widehat{G}$ the projector
    \begin{equation*}
        P_\chi \coloneqq \frac{1}{|G|}\sum_{g \in G} \overline{\chi(g)}\,U_g.
    \end{equation*}
    These satisfy
    \begin{equation*}
        U_h P_\chi = \chi(h) P_\chi,
        \qquad
        P_\chi P_{\chi'} = \delta_{\chi,\chi'} P_\chi.
    \end{equation*}
In particular, 
\[
U_h P_\chi 
= \frac{1}{\abs{G}}\sum_{g \in G} \overline{\chi(g)} \, U_{g + h}
= \frac{1}{\abs{G}}\sum_{k \in G} \overline{\chi(k - h)} \, U_{k}
= \chi(h) P_\chi,  
\]
where in the second-to-last equality we use the fact that $\chi$ is a group homomorphism, so $\chi(k-h) = \chi(k)\chi(-h)$, and that characters take values on the unit circle, hence $\overline{\chi(-h)}= \chi(h)$.  
Verifying the other identity, we have 
\begin{align*}
        P_\chi P_{\chi'}  
&= \frac{1}{\abs{G}^2} \sum_{g, h} \overline{\chi(g)\chi'(h)}U_{g+h} \\
&= \frac{1}{\abs{G}^2} \sum_{k, h} \overline{\chi(k-h)\chi'(h)}U_{k} \\
&= \frac{1}{\abs{G}^2} \sum_{k} \overline{\chi(k)}U_{k} \sum_h \chi(h)\overline{\chi'(h)} \\
&= \delta_{\chi,\chi'} P_\chi.
\end{align*}
The last equality follows from the orthogonality relations for characters of finite abelian groups, namely $\sum_{h \in G} \chi(h) \overline{\chi'(h)} = \abs{G}\delta_{\chi, \chi'}$.
One can similarly verify that $P_\chi^\dagger = P_\chi$, $\sum_\chi P_\chi = I$, and $P_\chi U_h = \chi(h) P_\chi$.
    Let $\ket{\phi_\chi} \coloneqq P_\chi \ket{\phi}$. (Recall our GU ensemble is defined by applying $U_g$ to $\ket{\phi}$ for each $g \in G$.)
    Then our earlier identities imply
    \begin{equation*}
        \ket{\phi} = \sum_{\chi \in \widehat{G}} \ket{\phi_\chi},
        \qquad
        U_g \ket{\phi_\chi} = \chi(g)\ket{\phi_\chi}.
    \end{equation*}
    So every vector in $P_\chi \calH_0$ is a scalar multiple of $\ket{\phi_\chi}$, and thus $P_\chi \calH_0$ is one-dimensional. 
    In particular, the nonzero $\ket{\phi_\chi}$ are mutually orthogonal. Therefore
    \begin{equation*}
        \ket{\phi_g} = U_g \ket{\phi} = \sum_{\chi \in \widehat{G}} \chi(g)\ket{\phi_\chi}.
    \end{equation*}
    Averaging over $g$ and using character orthogonality gives
    \begin{equation*}
        \rho
        = \frac{1}{|G|}\sum_{g \in G}\ketbra{\phi_g}{\phi_g}
        = \sum_{\chi \in \widehat{G}} \ketbra{\phi_\chi}{\phi_\chi}.
    \end{equation*}
    Since these rank-one terms have orthogonal supports, the nonzero eigenvalues of $\rho$ are exactly $\norm{\ket{\phi_\chi}}_2^2$. Hence
    \begin{equation*}
        \tr \sqrt{\rho} = \sum_{\chi \in \widehat{G}} \norm{\ket{\phi_\chi}}_2.
    \end{equation*}

    Next, let $\{M_g\}_{g \in G}$ be any POVM with $M_g = U_g M_0 U_g^\dagger$. Since $\sum_\chi P_\chi = I_{\calH_0}$ and $\sum_{g \in G} M_g = I_{\mathcal{H}_0}$ on $\mathcal{H}_0$, we have
    \begin{align*}
        I_{\mathcal{H}_0}
        &= \sum_{g \in G} U_g M_0 U_g^\dagger \\
        &= \sum_{g \in G} U_g \left(\sum_{\chi \in \widehat{G}} P_\chi\right) M_0 \left(\sum_{\chi' \in \widehat{G}} P_{\chi'}\right) U_g^\dagger \\
        &= \sum_{\chi,\chi' \in \widehat{G}}
           \left(\sum_{g \in G} \chi(g)\overline{\chi'(g)}\right)
           P_\chi M_0 P_{\chi'} \\
        &= |G| \sum_{\chi \in \widehat{G}} P_\chi M_0 P_\chi.
    \end{align*}
    Therefore, for every $\chi \in \widehat{G}$,
    \begin{align*}
        I_{\calH_0} = |G| \sum_{\chi' \in \widehat{G}} P_{\chi'} M_0 P_{\chi'} 
        \iff
        \frac{1}{\abs{G}} P_\chi I_{\calH_0} P_\chi = \sum_{\chi' \in \widehat{G}} P_{\chi}P_{\chi'} M_0 P_{\chi'} P_{\chi} 
        \iff 
        \frac{1}{\abs{G}} P_\chi = P_\chi M_0 P_\chi.
    \end{align*}
    In particular, we have
    \begin{equation*}
        \bra{\phi_\chi} M_0 \ket{\phi_\chi} 
        = \frac{1}{|G|} \bra{\phi_\chi} P_\chi \ket{\phi_\chi} 
        = \frac{1}{|G|}\norm{\ket{\phi_\chi}}_2^2.
    \end{equation*}

    The success probability of the POVM is
    \begin{equation*}
        p_{\mathrm{success}}
        = \frac{1}{|G|}\sum_{g \in G}\bra{\phi_g} M_g \ket{\phi_g}
        = \frac{1}{|G|}\sum_{g \in G}\bra{\phi}U^\dagger_g U_g M_0 U^\dagger_g U_g \ket{\phi}
        = \bra{\phi} M_0 \ket{\phi}
        = \sum_{\chi,\chi' \in \widehat{G}} \bra{\phi_\chi} M_0 \ket{\phi_{\chi'}}.
    \end{equation*}
    Since $M_0 \succeq 0$, we can apply Cauchy--Schwarz, so
    \begin{equation*}
        \abs{\bra{\phi_\chi} M_0 \ket{\phi_{\chi'}}}
        \leq \sqrt{\bra{\phi_\chi} M_0 \ket{\phi_\chi}\,\bra{\phi_{\chi'}} M_0 \ket{\phi_{\chi'}}}
        = \frac{1}{|G|}\norm{\ket{\phi_\chi}}_2 \norm{{\phi_{\chi'}}}_2.
    \end{equation*}
    Therefore,
\begin{align*}
p_{\mathrm{success}}
&= \sum_{\chi,\chi' \in \widehat{G}} \bra{\phi_\chi} M_0 \ket{\phi_{\chi'}} \\
&\leq \sum_{\chi,\chi' \in \widehat{G}} |\bra{\phi_\chi} M_0 \ket{\phi_{\chi'}}| \\ 
&\leq \frac{1}{|G|} \sum_{\chi,\chi' \in \widehat{G}} \norm{\ket{\phi_\chi}}_2 \norm{\ket{\phi_{\chi'}}}_2 \\ 
&= \frac{1}{|G|}\left(\sum_{\chi \in \widehat{G}} \norm{\ket{\phi_\chi}}_2\right)^2 \\
&= \frac{1}{|G|}\bigl(\tr \sqrt{\rho}\bigr)^2.
\end{align*}
    Thus no POVM can achieve success probability larger than $\frac{1}{|G|}(\tr \sqrt{\rho})^2$.

    Finally, consider the pretty-good measurement
    \begin{equation*}
        M_g^{\mathrm{PGM}}
        \coloneqq
        \rho^{-1/2}\,\frac{1}{|G|}\ketbra{\phi_g}{\phi_g}\,\rho^{-1/2}.
    \end{equation*}
    Since
    \begin{equation*}
        \rho^{-1/2}\ket{\phi}
        =
        \sum_{\chi:\,\phi_\chi \neq 0} \frac{\ket{\phi_\chi}}{\norm{\phi_\chi}_2},
    \end{equation*}
    we have 
    \begin{align*}
        p_{\mathrm{success}}^{\mathrm{PGM}}
        &= \bra{\phi} M_0^{\mathrm{PGM}} \ket{\phi} = \frac{1}{|G|}\left|\bra{\phi}\rho^{-1/2}\ket{\phi}\right|^2= \frac{1}{|G|}\left(\sum_{\chi \in \widehat{G}} \norm{\ket{\phi_\chi}}_2\right)^2= \frac{1}{|G|}\bigl(\tr \sqrt{\rho}\bigr)^2.
    \end{align*}
    Therefore, the pretty-good measurement attains the upper bound and is optimal. Noting that $\rho = \rho_{\mathrm{avg}}$ concludes the proof. 
\end{proof}

\section{\texorpdfstring{Random $\F_q$}{F_q}-phase states are close to Haar-random}\label{appendix:random-fq}

Let $\rho_{\rm Haar}^t$ denote the mixture of $t$ copies of a Haar-random pure state. Recall that  
\begin{equation*}
        \rho_{\rm all}^t \coloneqq \E_{\bc \sim \F_q^n}[\ketbra{\psi_{\bc}}{\psi_{\bc}}^{\otimes t}], 
\end{equation*}
where  $\ket{\psi_{\bc}} =\frac1{\sqrt n}\sum_{i=1}^n \omega^{{\bc}_i} \ket{i}$.

\randomfqphase*

Brakerski and Shmueli~\cite{brakerski10.1007/978-3-030-36030-6_10} proved the $\F_2$ case (see also \cite{ananth2022pseudorandom}). \cref{thm:random-fq-phase} extends this to general $\F_q$ and simplifies the argument.

\begin{proof}
If $t > n$, the bound holds trivially. Thus, we may assume $t \le n$.

The key observation is that the two ensembles agree exactly on the collision-free subspace; all disagreement is supported on tuples with repeated indices, whose total mass is bounded by $t(t-1)/n$.

Let $[n]^t$ index the computational basis of $(\mathbb{C}^n)^{\otimes t}$. For a basis string $a=(a_1,\dots,a_t)\in[n]^t$, we denote its occupation numbers by 
\[
m_a(j) \coloneqq \abs{\{\ell \in [t]: a_\ell=j\}}
\]
for $j \in [n]$. Let $\Pi_{\rm distinct}$ be the projector onto the collision-free subspace, which is the span of basis strings $a \in [n]^t$ with all $t$ entries distinct.

First, start by characterizing the matrix entries of $\rho_{\rm all}^t$. 
For any $a, b \in [n]^t$,
\begin{align*}
\langle a|\rho_{\rm all}^t|b\rangle 
&= \frac{1}{n^t}\mathbb{E}_{\bv \sim \mathbb{F}_q^n} \left[ \omega^{\sum_{\ell=1}^t (\bv_{a_\ell}-\bv_{b_\ell})} \right] \\
&= \frac{1}{n^t}\prod_{j=1}^n \mathbb{E}_{\bv_j \sim \mathbb{F}_q} \left[ \omega^{(m_a(j)-m_b(j))\bv_j} \right] \\
&= \frac{1}{n^t} \mathbf{1}\{m_a(j) \equiv m_b(j) \pmod q \text{ for all } j \in [n]\}.
\end{align*}

For $\rho_{\rm Haar}^t$, it is well-known~\cite{Har13} that we can express the state as a sum over permutations $\pi \in S_t$:
\[
\rho_{\rm Haar}^t = \frac{\Pi_{\rm sym}}{\dim \mathrm{Sym}^t(\mathbb{C}^n)} = \frac{1}{n(n+1)\cdots(n+t-1)} \sum_{\pi\in S_t} U_\pi,
\]
where $U_\pi$ acts by permuting the $t$ copies, $\mathrm{Sym}^t(\C^n)$ denotes the symmetric subspace, and $\Pi_{\rm sym}$ is the projector onto the symmetric subspace. 
Thus, its matrix entries are
\[
\langle a|\rho_{\rm Haar}^t|b\rangle = \frac{1}{n(n+1)\cdots(n+t-1)} |\{\pi \in S_t : a = b \circ \pi\}|.
\]

Now we restrict both states to the collision-free subspace. 
If $a$ and $b$ are both collision-free, then every occupation number $m_a(j)$ and $m_b(j)$ lies in $\{0,1\}$. Because $q \ge 2$, equivalence modulo $q$ is identical to strict equality:
\[
m_a(j) \equiv m_b(j) \pmod q \iff m_a(j) = m_b(j).
\]
Because $m_a(j)=m_b(j)$ for all $j$ exactly when $a$ and $b$ are permutations of one another, the two ensembles have identical support on the collision-free subspace.
Moreover, since all entries in a collision-free string are distinct, there is exactly one permutation matching $b$ to $a$. 

Explicitly, let us define the unnormalized block-diagonal operator
\[
A \coloneqq \sum_{\substack{a,b \in [n]^t \\ a,b \text{ collision-free} \\ \{a_1,\dots,a_t\} = \{b_1,\dots,b_t\}}} |a\rangle\langle b|.
\]
Then the restrictions of the two ensembles to the collision-free subspace are purely proportional to $A$:
\[
\Pi_{\rm distinct} \rho_{\rm all}^t \Pi_{\rm distinct} = 
\frac{1}{n^t}A, \qquad \Pi_{\rm distinct} \rho_{\rm Haar}^t \Pi_{\rm distinct} = \frac{1}{n(n+1)\cdots(n+t-1)}A.
\]
Let $(n)_t \coloneqq n(n-1)\cdots(n-t+1)$ denote the falling factorial. The trace of $A$ is exactly the number of collision-free strings in $[n]^t$, so $\tr(A) = (n)_t$. Hence, the traces of the two restricted states are
\[
p_{\rm all} \coloneqq \tr(\Pi_{\rm distinct} \rho_{all}^t \Pi_{\rm distinct}) = \frac{(n)_t}{n^t}, \qquad p_{\rm Haar} \coloneqq \tr(\Pi_{\rm distinct} \rho_{\rm Haar}^t \Pi_{\rm distinct}) = \frac{(n)_t}{n(n+1)\cdots(n+t-1)}.
\]
After normalization, both restricted states are identical to the same conditional state $\tau \coloneqq A / (n)_t$.

Furthermore, both ensembles are block diagonal with respect to the decomposition $(\mathbb C^n)^{\otimes t} = \Pi_{\rm distinct}\oplus(I-\Pi_{\rm distinct}).$
For the Haar ensemble, this is immediate since permutations preserve occupation numbers. For the phase ensemble, any nonzero matrix entry satisfies $m_a(j)\equiv m_b(j)\pmod q$ for all $j\in[n]$. In particular, if $a$ is collision-free, then each occupation number is either $0$ or $1$, and since $\sum_j m_b(j)=t$, the same must hold for $b$. Hence $b$ is also collision-free.
Therefore, we can write 
\begin{align*}
\rho_{\rm all}^t &= p_{\rm all} \tau + (1-p_{\rm all}) \rho_{\rm bad}, \\
\rho_{\rm Haar}^t &= p_{\rm Haar} \tau + (1-p_{\rm Haar}) \sigma_{\rm bad},
\end{align*}
where the states $\rho_{\rm bad}$ and $\sigma_{\rm bad}$ are valid density matrices fully supported on $I-\Pi_{\rm distinct}$. 

By the triangle inequality:
\begin{align*}
\mathrm{D}_{\rm tr}(\rho_{\rm all}^t, \rho_{\rm Haar}^t) 
&= \frac{1}{2} \norm{(p_{\rm all} - p_{\rm Haar}) \tau + (1-p_{\rm all}) \rho_{\rm bad} - (1-p_{\rm Haar}) \sigma_{\rm bad}}_1 \\
&\le \frac{1}{2} \norm{(p_{\rm all} - p_{\rm Haar}) \tau}_1 + \frac{1}{2} \norm{(1-p_{\rm all}) \rho_{\rm bad}}_1 + \frac{1}{2} \norm{ (1-p_{\rm Haar}) \sigma_{\rm bad}}_1 \\
&= \frac{1}{2} \left( |p_{\rm all} - p_{\rm Haar}| + (1-p_{\rm all}) + (1-p_{\rm Haar}) \right).
\end{align*}
Notice that $n^t \le n(n+1)\cdots(n+t-1)$, and therefore $p_{\rm Haar} \le p_{\rm all} \le 1$. 
Thus, $|p_{\rm all} - p_{\rm Haar}| = p_{\rm all} - p_{\rm Haar}$, and the entire bound simplifies to
\[
\mathrm{D}_{\rm tr}(\rho_{\rm all}^t, \rho_{\rm Haar}^t) \le 1 - p_{\rm Haar}.
\]
It remains to bound $1 - p_{\rm Haar}$. 
Recall that 
\[
p_{\rm Haar} = \frac{(n)_t}{n(n+1)\dots(n+t -1)} = \prod_{j=0}^{t-1} \frac{n-j}{n+j} = \prod_{j=0}^{t-1} \left(1 - \frac{2j}{n+j}\right).
\]
Since $t \le n$, we have $\frac{2j}{n+j} \le 1$ for all $0 \le j \le t-1$. Using the bound $1 - \prod_j (1-x_j) \le \sum_j x_j$ for $x_j \in [0, 1]$, we obtain
\[
1 - p_{\rm Haar} \le \sum_{j=0}^{t-1} \frac{2j}{n+j} \le \sum_{j=0}^{t-1} \frac{2j}{n} = \frac{t(t-1)}{n}.
\]
This completes the proof.
\end{proof}

\end{document}